\newtheorem{theorem}{Theorem}
\newtheorem{lemma}[theorem]{Lemma}
\newtheorem{fact}[theorem]{Fact}
\newtheorem{claim}[theorem]{Claim}
\newtheorem{corollary}[theorem]{Corollary}
\newtheorem{conjecture}[theorem]{Conjecture}
\newtheorem{definition}[theorem]{Definition}
\newtheorem{construction}[theorem]{Construction}
\def\beq{\begin{equation}}\def\eeq{\end{equation}}
\def\beqn{\begin{eqnarray}}\def\eeqn{\end{eqnarray}}
\def\({\mbox{$($}}\def\){\mbox{$)$}}
\def\qed{\ifhmode\unskip\nobreak\fi\quad\ifmmode\Box\else$\Box$\fi}
\begin{document}
\title{Perfect matching in $3$-uniform hypergraphs with large vertex degree}

\author{Imdadullah Khan\thanks{Research supported in part by a DIMACS research grant.}\\
%\small Computer Science Department\\[-0.8ex]
%\small Rutgers University\\[-0.8ex]
%\small New Brunswick, NJ, USA 08903\\[-0.8ex]
%%\small \texttt{imdadk@cs.rutgers.edu}\\[-0.8ex]
%\small and\\[-0.8ex]
\small Department of Computer Science\\[-0.8ex]
\small College of Computing and Information Systems\\[-0.8ex]
\small Umm Al-Qura University\\[-0.8ex]
\small Makkah, Saudi Arabia\\
\small \texttt{iikhan@uqu.edu.sa}\\[-0.8ex]
}
%\author{Imdadullah Khan\thanks{Research supported in part by a DIMACS research grant.}\\
%\small Department of Computer Science \\[-0.8ex]
%\small Rutgers, The State University of New Jersey\\[-0.8ex]
%\small New Brunswick, NJ, 08903, USA \\[-0.8ex]
%\small \texttt{imdadk@cs.rutgers.edu}\\[-0.8ex]
%}

%\date{September 25, 2010}
\date{}
\maketitle

\begin{abstract}
A perfect matching in a $3$-uniform hypergraph on $n=3k$ vertices is a subset of $\frac{n}{3}$ disjoint edges. We prove that if $H$ is a $3$-uniform hypergraph on $n=3k$ vertices such that every vertex belongs to at least ${n-1\choose 2} - {2n/3\choose 2}+1$ edges then $H$ contains a perfect matching. We give a construction to show that this result is best possible. 
\end{abstract}

\section{Introduction and Notation}
For graphs we follow the notation in \cite{B1}. For a set $T$, we refer to all of its $k$-element subsets ($k$-sets for short) as ${T \choose k}$ and to the number of such $k$-sets as ${|T| \choose k}$. We say that $H = (V(H),E(H))$ is an $r$-uniform hypergraph or $r$-graph for short, where $V(H)$ is the set of vertices and $E\subset {V(H) \choose r}$, a family of $r$-sets of $V(H)$, is the set of edges of $H$. We say that $H(V_1,\ldots,V_r)$ is an $r$-partite $r$-graph, if there is a partition of $V(H)$ into $r$ sets, i.e. $V(H) = V_1\cup \cdots\cup V_r$ and every edge of $H$ uses exactly one vertex from each $V_i$. We call it a balanced $r$-partite $r$-graph if all $V_i$'s are of the same size. Furthermore $H(V_1,\ldots,V_r)$ is a complete $r$-partite $r$-graph if every $r$-tuple that uses one vertex from each $V_i$ belongs to $E(H)$. We denote a complete balanced $r$-partite $r$-graph by $K^{(r)}(t)$, where $t = |V_i|$. When the graph referred to is clear from the context we will use $V$ instead of $V(H)$ and will identify $H$ with $E(H)$ and $e_r(H) = |E(H)|$. A matching in $H$ is a set of disjoint edges of $H$ and a perfect matching is a matching that contains all vertices. For $U\subset V$, $H|_U$ is the restriction of $H$ to $U$. 

For an $r$-graph $H$ and a set $D = \{v_1,\ldots,v_d\} \in {V \choose d}, 1\leq d \leq r$, the degree of $D$ in $H$, $deg_H(D) = deg_r(D)$ denotes the number of edges of $H$ that contain $D$. For $1\leq d \leq r$, let $$\delta_d =\delta_d(H)= \min\left\{deg_r(D) \; : \; D\in {V \choose d}\right\}.$$

\noindent When $H$ is an $r$-graph and $A$ and $B$ are disjoint subsets of $V(H)$, for a vertex $v\in A$ we denote by $deg_r(v,{B\choose r-1})$ the number of $(r-1)$-sets of $B$ that make edges with $v$, while $d_r(v,{B\choose r-1}) = deg_r(v,{B\choose r-1})/{|B|\choose r-1}$ denotes the density. For such $A$ and $B$, $e_r(A,{B\choose r-1})$ is the sum of $deg_r(v,{B\choose r-1})$ over all $v\in A$ while $d_r(A,{B\choose r-1}) =\dfrac{e_r(A,{B\choose r-1})}{|A|{|B|\choose r-1}}$. We denote by $H(A,{B\choose r-1})$ such an $r$-graph when all edges of $H$ use one vertex from $A$ and $r-1$ vertices from $B$.  When $A_1,\ldots,A_r$ are disjoint subsets of $V$, for a vertex $v\in A_1$ we denote by $deg_r(v, (A_2\times\cdots\times A_r))$ the number of edges in the $r$-partite $r$-graph induced by the subsets $\{v\},A_2,\ldots, A_r$, and $e(A_1, (A_2\times\cdots\times A_r))$ is the sum of $deg_r(v,(A_2\times\cdots\times A_r))$ over all $v \in A_1$. Similarly $$d_r(A_1, (A_2\times\cdots\times A_r)) = \frac{e(A_1, (A_2\times\cdots\times A_r))}{|A_1\times A_2\times\cdots\times A_r|}.$$ 

\noindent An $r$-graph $H$ on $n$ vertices is $\eta$-{\em dense} if it has at least $\eta {n \choose r}$ edges. We use the notation $d_r(H) \geq \eta$ to refer to an $\eta$-dense $r$-graph $H$. A bipartite graph $G=(A,B)$ is $\eta$-{\em dense} if $d(A,B)\geq \eta$.  For $U\subset V$, for simplicity we refer to $d_r(H|_U)$ as $d_r(U)$ and to $E(H|_U)$ as $E(U)$. Throughout the paper $\log$ denotes the base 2 logarithm. Moreover we will only deal with $r$-graphs on $n$ vertices where $n=rk$ for some integer $k$, we denote this by $n\in r\mathbb{Z}$. 

\begin{definition}
 Let $d,r$ and $n$ be integers such that $1\leq d < r$, and $n\in r\mathbb{Z}$. Denote by $m_d(r,n)$ the smallest integer $m$, such that every $r$-graph $H$ on $n$ vertices with $\delta_d(H) \geq m$ contains a perfect matching. 
\end{definition}

For graphs ($r=2$), by the Dirac's theorem on Hamiltonicity of graphs \cite{Dirac1952}, it is easy to see that $m_1(2,n) \leq n/2$, and since the complete bipartite $K_{n/2-1,n/2+1}$ does not have a perfect matching we get $m_1(2,n) = n/2$.  For $r\geq 3$ and $d = r-1$, it follows from a result of R\"{o}dl, Ruci\'{n}ski and Szemer\'{e}di on Hamiltonicity of $r$-graph \cite{RRSz_HAM_ku_colDeg_approx} that $m_{r-1}(r,n) \leq n/2 + o(n)$. K\"{u}hn and Osthus \cite{KO_PM_ku_colDeg} improved this result to $m_{r-1}(r,n) \leq n/2 + 3r^2\sqrt{n\log n}$. This bound was further sharpened in \cite{RRSz_PM_ku_colDeg} to $m_{r-1}(r,n) \leq n/2 + C\log n$. In \cite{RRSz_PM_ku_colDeg_approx_better} the bound was improved to almost the true value; it was proved that $m_{r-1}(r,n) \leq n/2 + r/4$. Finally \cite{RRSz_PM_ku_colDeg_tight} settled the problem for $d=r-1$. K\"{u}hn and Osthus \cite{KO_PM_ku_colDeg} and Aharoni, Georgakopoulos and Spr\"ussel \cite{AGS_partite} studied the minimum degree threshold for perfect matching in $r$-partite $r$-graphs. 

\vskip 10pt
\noindent
The case $d<r-1$ is rather hard. Pikhurko \cite{Pikh_PM_ku_dDeg} proved that for all $d\geq r/2$, $m_d(r,n)$ is close to $\frac{1}{2}{n-d \choose r-d}$. For $1\leq d<r/2$,  H\`{a}n, Person and Schacht \cite{HPS_PM_3u_vertDeg} proved that $$m_d(r,n) \leq \left(\frac{r-d}{r} +o(1)\right){n-d \choose r-d}$$
A recent survey of these and other related results appear in \cite{Rod_Ruc_Survey}. In \cite{HPS_PM_3u_vertDeg} the authors posed the following conjecture. 

\begin{conjecture}[\cite{HPS_PM_3u_vertDeg}, see \cite{Rod_Ruc_Survey} P. 23 ] \label{conject}
For all $1\leq d < r/2$, 
$$m_d(r,n) \sim \max\left\{\frac{1}{2}, 1-\left(\frac{r-1}{r}\right)^{r-d}\right\}{n-d \choose r-d}$$
\end{conjecture}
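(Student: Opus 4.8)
Since Conjecture~\ref{conject} is stated as a conjecture, what follows is a proposed line of attack rather than a complete proof; a full proof is substantial, but the strategy below is the one whose execution in the smallest open case $r=3$, $d=1$ produces the exact threshold $\binom{n-1}{2}-\binom{2n/3}{2}+1$ of the abstract. The easy direction (the ``$\gtrsim$'' in the asymptotic equality) is witnessed by two extremal $r$-graphs with no perfect matching. The \emph{space barrier} $H_1$: fix $A\subset V$ with $|A|=n/r-1$ and put $E(H_1)=\{e\in\binom{V}{r}:e\cap A\neq\emptyset\}$; a perfect matching would consist of $n/r$ pairwise disjoint edges each meeting $A$, forcing $|A|\geq n/r$. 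Its minimum $d$-degree is attained on $d$-sets disjoint from $A$ and equals $\binom{n-d}{r-d}-\binom{n-d-|A|}{r-d}=\bigl(1-((r-1)/r)^{r-d}+o(1)\bigr)\binom{n-d}{r-d}$. The \emph{divisibility barrier} $H_2$: split $V=A\cup B$ with $|A|$ odd and $|A|,|B|=(\tfrac12+o(1))n$, and put $E(H_2)=\{e:|e\cap A|\text{ is even}\}$; summing $|e\cap A|$ over the $n/r$ edges of a perfect matching gives an even number on the left and $|A|$ on the right, a contradiction. A short binomial estimate gives $\delta_d(H_2)=(\tfrac12+o(1))\binom{n-d}{r-d}$. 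Hence $m_d(r,n)\geq\bigl(\max\{\tfrac12,\,1-((r-1)/r)^{r-d}\}-o(1)\bigr)\binom{n-d}{r-d}$.

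For the matching upper bound, set $\mu=\max\{\tfrac12,\,1-((r-1)/r)^{r-d}\}$ and assume $\delta_d(H)\geq(\mu+\eps)\binom{n-d}{r-d}$. I would run the absorbing method of R\"odl, Ruci\'nski and Szemer\'edi in three steps. \textbf{(i) Absorbing matching.} Build a matching $M_0$ with $|V(M_0)|=o(n)$ such that for every $W\subseteq V\setminus V(M_0)$ with $|W|=o(|V(M_0)|)$, $r\mid|W|$, and $W$ lying in the appropriate coset of the divisibility lattice of $H$, the induced $r$-graph $H|_{V(M_0)\cup W}$ has a perfect matching; such an $M_0$ exists by a supersaturation argument showing that every absorbable $r$-set has many ($\Omega(n^{\ell})$ for a suitable constant $\ell$) absorbers, which is where the degree hypothesis first enters. \textbf{(ii) Almost perfect matching.} In $H-V(M_0)$ find a matching leaving only $o(n)$ vertices uncovered. \textbf{(iii) Absorption.} The set $W$ of uncovered vertices has $r\mid|W|$; provided it lies in the right coset it is absorbed by $M_0$, completing a perfect matching.

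Step (ii) is where the constant $1-((r-1)/r)^{r-d}$ is really needed: I expect it to coincide with the minimum $d$-degree threshold for a \emph{perfect fractional matching} (with $H_1$ the extremal fractional example), and a perfect fractional matching, fed into a weak-regularity or iterated random-greedy argument, yields a matching missing only $o(n)$ vertices. This part should go through for any $\delta_d\geq\bigl(1-((r-1)/r)^{r-d}+\eps\bigr)\binom{n-d}{r-d}$, which is $\leq\mu+\eps$. Pinning down the fractional threshold in general --- via LP duality, analysing the structure of a minimum fractional vertex cover --- is itself a genuine step, though it is tractable in the small cases that the $r=3$, $d=1$ theorem of this paper exemplifies; in that argument one would separately treat $H$ close to $H_1$, where the almost perfect matching is built by hand using the near-structure.

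The main obstacle is the interplay of steps (i) and (iii), caused by the divisibility barrier: a naive absorbing matching cannot absorb every $r$-divisible leftover set, because a parity obstruction like that of $H_2$ may persist. The fix is a \emph{stability dichotomy}. Either $H$ is $\gamma$-close to $H_2$, and then one argues directly, exploiting the near-bipartition to assemble the perfect matching by hand --- this is exactly where the $\tfrac12$ term is tight and where the extremal analysis is most delicate --- or else the only robust ``lattice'' on $V(H)$ is trivial, meaning no nontrivial partition of the vertices produces a parity obstruction, so that every $r$-divisible $o(n)$-set is automatically absorbable and step (iii) is unconditional. Isolating the correct lattice, proving it trivial away from the extremal configuration, and making the near-extremal analysis quantitatively match $\mu$ are the parts I expect to be hardest; the case $r=3$, $d=1$ is the simplification that makes the present paper's exact result attainable, since there $\mu=5/9>\tfrac12$ and so the parity barrier never binds --- only closeness to the space barrier $H_1$ needs to be ruled out.
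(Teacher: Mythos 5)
You should first be clear about the status of the statement: Conjecture~\ref{conject} is quoted from \cite{HPS_PM_3u_vertDeg} and is \emph{not} proved in this paper; the paper settles only the case $r=3$, $d=1$ (Theorem~\ref{ourMainThm}), so there is no full proof here to measure your plan against, and your proposal, by its own admission, is a strategy rather than a proof. What you do prove completely and correctly is the lower bound: your space barrier $H_1$ is exactly Construction~\ref{extConstruction} (in its general-$r$, general-$d$ form) and your parity construction $H_2$ and the computation of its minimum $d$-degree are sound, so $m_d(r,n)\geq\bigl(\max\{\tfrac12,1-((r-1)/r)^{r-d}\}-o(1)\bigr)\binom{n-d}{r-d}$ is established. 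The upper-bound programme you describe (absorbing matching, almost-perfect matching via the fractional relaxation, divisibility-lattice stability) is the standard and plausible route, but the two steps you flag yourself --- determining the minimum $d$-degree threshold for perfect fractional matchings when $d<r/2$, and the lattice/near-extremal analysis --- are genuinely open ingredients, so the proposal does not close the conjecture; it could not be expected to.

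Restricted to the one case the paper does prove, your outline is parallel in architecture --- absorbing lemma (the paper uses Lemma~\ref{absorbLemma} from \cite{HPS_PM_3u_vertDeg}), a non-extremal almost-perfect-matching theorem, and a separate treatment when $H$ is close to the space barrier --- and your observation that for $r=3$, $d=1$ the parity barrier never binds (since $5/9>1/2$) matches the paper, which never needs a divisibility case. The substantive difference is in how the almost perfect matching is produced in the non-extremal case: you propose a perfect fractional matching fed into a weak-regularity or random-greedy argument, whereas Theorem~\ref{optCoverTheorem} builds an optimal cover by complete balanced tripartite subgraphs found via Erd\H{o}s's theorem (Lemma~\ref{hyperKST}) and grows it iteratively through an analysis of link graphs ($B_{320}$, $B_{311}$, Claims~\ref{few_2sided}--\ref{noExpand_extremal}), concluding extremality when no growth is possible; no fractional matchings or regularity appear. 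Also note that your programme, even if executed, yields only the asymptotic value of $m_1(3,n)$, while the paper's extremal-case argument (exceptional-vertex exchanges, a Bollob\'as--Daykin--Erd\H{o}s-style induction, and a K\"onig--Hall step in Theorem~\ref{extCaseTheorem}) is what converts the stability dichotomy into the exact threshold $\binom{n-1}{2}-\binom{2n/3}{2}+1$; an analogous exact extremal analysis would be needed at the end of your plan as well if one wanted more than the asymptotics asserted in the conjecture.
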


\noindent Note that for $r=3$ and $d=1$ the above bound yields $$m_1(3,n) \sim \frac{5}{9}{n-1 \choose 2}$$
\noindent Improving an old result of Daykin and H{\"a}ggvist \cite{DH1981}, the authors of \cite{HPS_PM_3u_vertDeg} proved an approximate version of their conjecture for the case $r=3$ and $d=1$, they showed that $m_1(3,n) \leq \left(\frac{5}{9}+o(1)\right){n \choose 2}$ for large $n$. For the case $r=4$ and $d=1$, Markstr\"om and Ruci\'{n}ski \cite{Ruc_Marks_approx4u} proved that $m_1(4,n) \leq  \left(\frac{42}{64}+o(1)\right){n-1 \choose 3}$. Lo and Markstr\"om \cite{Lo_Markst_3partite} determined the exact degree threshold for $r=3$ and $d=1$ for the case of $3$-partite $3$-graphs. 

\vskip6pt
In this paper we settle Conjecture \ref{conject} for the case $r=3$ and $d=1$. Parallel to this work, independently K{\"u}hn, Osthus and Treglown \cite{KOT_parallel} proved the same result. We believe our techniques are more general and have many other applications. In our subsequent work \cite{Khan_4u_vertDeg} we use similar techniques to prove Conjecture \ref{conject} for the case $r=4$ and $d=1$ as well. Our main result in this paper is the following theorem. 
\begin{theorem}\label{ourMainThm}
There exist an integer $n_0$ such that if $H$ is a $3$-graph on $n \geq n_0$ vertices ($n\in 3\mathbb{Z}$), and \beq\label{minDegree}\delta_1(H) \geq {n-1\choose 2} - {2n/3\choose 2} + 1 \eeq  then $H$ has a perfect matching. %\frac{5}{9} {n-1\choose 2}
\end{theorem}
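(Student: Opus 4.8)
The plan is to combine the absorption method with a stability dichotomy, and this is essentially forced: with $n=3k$ the bound in \eqref{minDegree} equals $\tfrac{5}{18}n^2-\Theta(n)$, so it falls short of $\tfrac59\binom{n-1}{2}$ (and, a fortiori, of the $\left(\tfrac59+o(1)\right)\binom{n}{2}$ threshold of \cite{HPS_PM_3u_vertDeg}) by a linear term, while the only near-obstruction to a perfect matching at this level is the ``space barrier'' $\mathcal{H}_0$ on $V=X\cup Y$ with $|X|=n/3-1$ whose edges are exactly the triples meeting $X$ --- and for $\mathcal{H}_0$ the missing ``$+1$'' is precisely what prevents a perfect matching. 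So I would fix a small constant $\eps>0$, call $H$ \emph{extremal} if some $Y\subseteq V$ has $|Y|=2n/3+1$ and $e_3(H|_Y)\le\eps n^3$ and \emph{non-extremal} otherwise, and handle the two cases separately.

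In the \emph{non-extremal case} I would first prove an absorbing lemma: there is a matching $M$ with $|V(M)|=o(n)$ such that $H|_{V(M)\cup W}$ has a perfect matching for every $W\subseteq V\setminus V(M)$ with $3\mid|W|$ and $|W|\le\beta n$. Non-extremality is what makes absorbers plentiful: all but negligibly many triples $T$ should admit a positive-density family of bounded-size configurations $A_T$ with $3\mid|A_T|$ for which both $H|_{A_T}$ and $H|_{A_T\cup T}$ have perfect matchings, and then a first-moment-plus-deletion argument yields $M$. Next, delete $V(M)$ and find in $H'=H|_{V\setminus V(M)}$ a matching covering all but at most $\beta n$ vertices: produce a perfect \emph{fractional} matching --- by LP duality one exists unless $H'$ has a large vertex subset spanning no edge, i.e.\ a large sparse set, contradicting non-extremality once the small amount of extra room above $\left(\tfrac59-o(1)\right)\binom{n}{2}$ that non-extremality provides is invested --- and round it to an integral almost-perfect matching via the weak (hypergraph) regularity lemma. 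Finally absorb the at most $\beta n$ leftover vertices with $M$. A weak regularity decomposition of $H'$ into dense bipartite-type pieces, in the density notation introduced above, is the natural vehicle for these steps.

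In the \emph{extremal case} put $X=V\setminus Y$, so $|X|=k-1$ and $|Y|=2k+1$. Two observations drive the argument. First, every triple through a vertex $y\in Y$ that is not contained in $Y$ is one of at most $\binom{n-1}{2}-\binom{2k}{2}$ triples, so \eqref{minDegree} forces every $y\in Y$ to lie in at least one edge contained in $Y$ --- this is exactly where the ``$+1$'' is used. Second, since $e_3(H|_Y)\le\eps n^3$, only a $o(1)$-fraction of the vertices of $Y$ lie in many edges inside $Y$, so by \eqref{minDegree} all but $o(n)$ vertices $y\in Y$ --- the ``typical'' ones --- form an edge of $H$ with all but a $o(1)$-fraction of the pairs $\{x,y'\}$, $x\in X$, $y'\in Y$. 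A perfect matching made of one edge $e_0\subseteq Y$ together with $k-1$ edges, each meeting $X$ in a single vertex, covers precisely $X$ and the remaining $2(k-1)$ vertices of $Y$; so it suffices to choose such an interior edge $e_0$ and then a perfect matching in the auxiliary bipartite incidence between $X$ and the pairs of $Y\setminus e_0$. I would build this by first routing the $o(n)$ atypical vertices of $Y$ (each with some $x\in X$ and a typical partner), choosing $e_0$ among the interior edges through a typical vertex so as not to interfere, and then finishing on the typical vertices by a defect-Hall / K\"onig-type argument on the resulting almost-complete structure.

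The main obstacle is the extremal case. Because $|Y\setminus e_0|=2|X|$ \emph{exactly}, there is no slack at all: the Hall-type condition must be verified with the exact values of $\binom{n-1}{2}$ and $\binom{2n/3}{2}$ rather than asymptotic estimates, and the choice of $e_0$ and the routing of the atypical vertices of $Y$ must be organized so that not a single vertex of $Y\setminus e_0$ is stranded. Two secondary difficulties lie in the non-extremal case: establishing that a perfect fractional matching exists once $H$ is bounded away from the barrier, and building the absorbing matching --- since $\delta_2(H)$ may be $0$, the shortest absorbers need not exist for every triple, so one must use longer ones or route the few bad triples through the density available off the barrier.
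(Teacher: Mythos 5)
Your overall architecture (absorption plus an extremal/non-extremal dichotomy, with the ``$+1$'' used exactly to force interior edges in the extremal structure) matches the paper, and your non-extremal route --- perfect fractional matching via LP duality plus rounding by weak hypergraph regularity --- is a legitimate alternative to the paper's method, which instead builds an almost perfect cover by disjoint complete balanced tripartite subgraphs (Erd\H{o}s's theorem) and iteratively enlarges it via the link-graph classification $B_{320}/B_{311}$ until either the cover is optimal or the space barrier emerges. Do note, though, that your one-clause justification of the fractional step is too thin: duality only gives a fractional vertex cover of weight below $n/3$, hence \emph{some} edge-free set (the vertices of weight $<1/3$); showing that this set must have size close to $2n/3$ (i.e.\ that failure of a fractional perfect matching at density $5/9-o(1)$ forces $\alpha$-extremality) is itself a stability argument that needs the degree condition and careful counting --- it is the analogue of the work the paper does in Claims 9--11.

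The genuine gap is in your extremal case. You fix the matching template as ``one interior edge $e_0\subseteq Y$ plus $k-1$ edges each meeting $X$ in exactly one vertex,'' and you only plan to route atypical vertices of $Y$. But the hypotheses allow vertices of $X$ that behave like $Y$-vertices: a vertex $x\in X$ need only satisfy $\deg(x)\ge\binom{n-1}{2}-\binom{2k}{2}+1$, and since the pairs meeting $X\setminus\{x\}$ already number $\binom{n-1}{2}-\binom{2k+1}{2}$, the condition forces only about $2k+1$ pairs of $Y$ to form edges with $x$ --- a vanishing fraction of $\binom{|Y|}{2}$. Concretely, take the space-barrier construction, pick $x_1,x_2,x_3\in X$, delete all edges $\{x_i,y,y'\}$ except the $2k+1$ whose $Y$-pair meets a fixed pair $\{y_1,y_2\}$, and add $O(k^2)$ interior $Y$-edges to restore the degrees of $Y$; the resulting $H$ meets \eqref{minDegree} with equality at the $x_i$, is extremal in your sense, and has a perfect matching --- but no matching of your shape, since at most two of $x_1,x_2,x_3$ can receive disjoint pairs from $Y$, so any perfect matching must use an edge with two or three $X$-vertices and, to balance, more than one edge inside $Y$. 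A symmetric problem occurs for atypical $y\in Y$, which may have no edge of the form $\{y,x,y'\}$ at all and so cannot be ``routed with some $x\in X$ and a typical partner.'' This variable matching shape is precisely what the paper's extremal section is about: it classifies exceptional and strongly exceptional vertices $X_A,X_B,S_A,S_B$, exchanges strongly exceptional vertices across the partition, covers $S_B$ by edges inside $B$ together with compensating edges having two vertices in $A$ to restore the $1\!:\!2$ ratio, and covers $S_A$ by a Bollob\'as--Daykin--Erd\H{o}s-type induction, before any Hall-type finish. Without this flexibility your defect-Hall step fails (a set of such $x_i$ has neighbourhood of size at most $2$ among the chosen pairs), so the extremal case as proposed does not go through.
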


\noindent On the other hand the following construction from \cite{HPS_PM_3u_vertDeg} shows that the result is best possible. 

\begin{construction}\label{extConstruction}
Let $H = (V(H),E(H))$ be a $3$-graph on $n$ vertices ($n\in 3\mathbb{Z}$), such that $V(H)$ is partitioned into $A$ and $B$, $|A| = \frac{n}{3}-1$ and $|B|= n-|A|$ and $E(H)$ is the set of all $3$-sets of $V(H)$, $T$, such that $|T\cap A|\geq 1$ (see Figure \ref{ext_example}).
\end{construction}

\noindent We have $\delta_1(H) = {n-1\choose 2} - {2n/3\choose 2}$ (the degree of a vertex in $B$) but since every edge in a matching must use at least one vertex from $A$, the maximum matching in $H$ is of size $|A|=\frac{n}{3}-1$. %$\left(\frac{2n}{3}\right)\left(\frac{n}{3}-1\right) + {\frac{n}{3}-1 \choose 2} = \frac{5}{9}{n-1\choose 2} - \frac{n}{3} +\frac{4}{9}$$

%\begin{comment}
\begin{figure}[h!] %see explaination for this [h!] below
\centering
\includegraphics[scale=0.50]{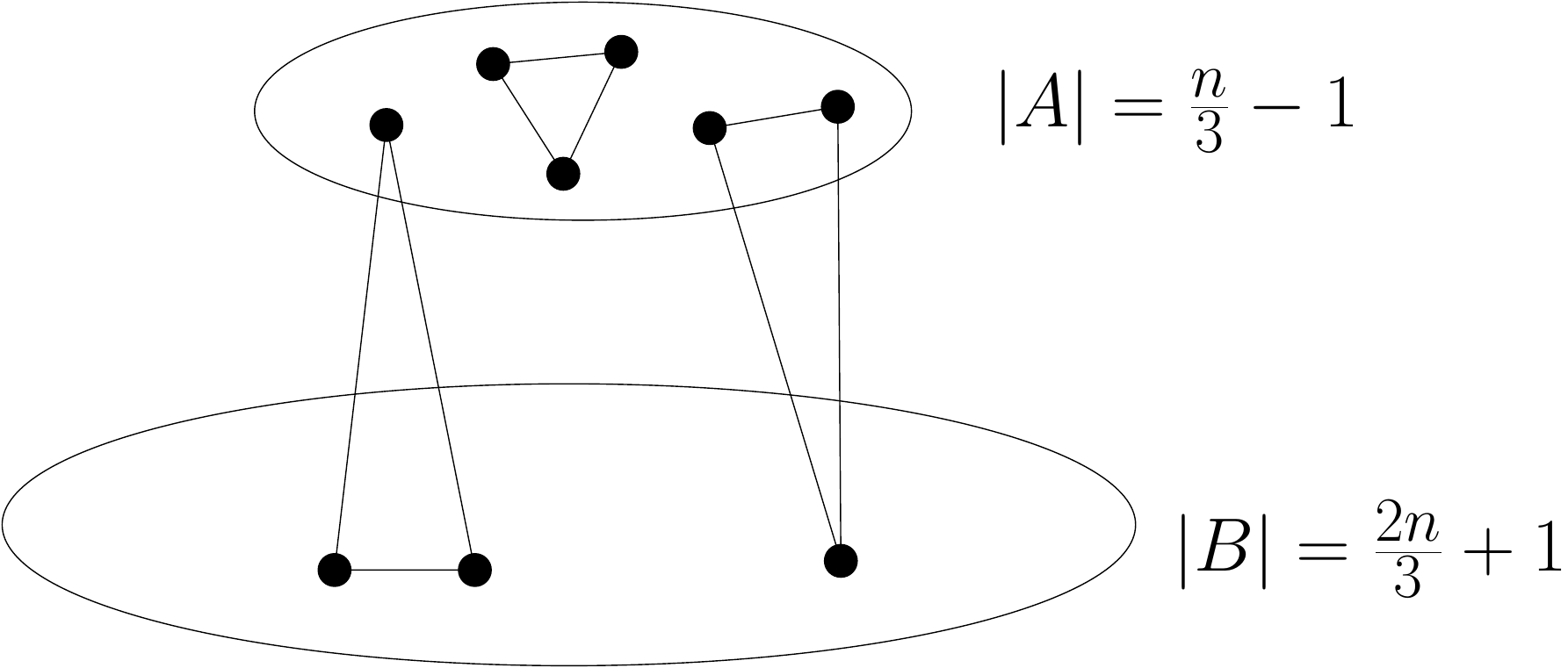} %you can change the width parameter to vary the scale of the figure
\caption{\footnotesize{The extremal example: every edge intersects the set $A$.}}
\label{ext_example}
\end{figure}
%\end{comment}

\section{The main result}
We distinguish two cases to prove Theorem \ref{ourMainThm}. In Section \ref{non_ext_case} we show that a slightly relaxed minimum degree condition implies  that either $H$ has an {\em `almost perfect matching'} or $H$ is {\em `close to'} the extremal example of Construction \ref{extConstruction}.  In case $H$ is not close to the extremal example we first find an almost perfect matching and extend it to a perfect matching in $H$ using the {\em `absorbing'} technique. On the other hand,  when $H$ is close to the extremal example, in Section \ref{extCase} we build a perfect matching in $H$ with a greedy approach.

\begin{definition}[Extremal Case with parameter $\alpha$]\label{ext_defn} For a constant $0<\alpha < 1$,  we say that $H$ is {\em $\alpha$-extremal}, if the following is satisfied otherwise it is {\em $\alpha$-non-extremal}. There exists a $B\subset V(H)$ such that 
\begin{itemize}
\item $|B|\geq \left(\frac{2}{3}-\alpha\right) n$
\item $d_3\left(B\right) < \alpha$.
\end{itemize}
\end{definition}

\noindent When $H$ is $\alpha$-non-extremal, we use the absorbing lemma, which roughly states that in $H$ there exists a small matching $M$ with the property that every {\em `not too large'}  subset of vertices $W$ can be absorbed into a matching covering $V(M)\cup W$.
\begin{lemma}\label{absorbLemma}(Absorbing Lemma, \cite{HPS_PM_3u_vertDeg}) For every $\eta>0$, there is an integer $n_0 = n_0(\eta)$ such that if $H$ is a $3$-graph on  $n\geq n_0$ vertices with $\delta_1(H)\geq \left(1/2+2\eta\right){n\choose 2}$, then there exists a matching $M$ in $H$ of size $|M|\leq \eta^3n$ such that for every set $W\subset V\setminus V(M)$ of size at most $\eta^6 n\geq |W|\in 3\mathbb{Z}$, there exists a matching covering all the vertices in $V(M)\cup W$.
\end{lemma}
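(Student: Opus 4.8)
The plan is to run the \emph{absorbing method} of R\"odl, Ruci\'nski and Szemer\'edi in the vertex-degree setting. First I would fix the shape of the absorbing gadget. Say that a labelled $6$-set $S=(s_1,\dots,s_6)$ of distinct vertices \emph{absorbs} a triple $T=\{x,y,z\}\subseteq V\setminus S$ if
\[\{s_1,s_2,s_3\},\ \{s_4,s_5,s_6\}\in E(H)\qquad\text{and}\qquad\{x,s_1,s_4\},\ \{y,s_2,s_5\},\ \{z,s_3,s_6\}\in E(H).\]
The point of this gadget is that if the two edges $\{s_1,s_2,s_3\},\{s_4,s_5,s_6\}$ lie in a matching $M$, then replacing them by the three edges $\{x,s_1,s_4\},\{y,s_2,s_5\},\{z,s_3,s_6\}$ yields a matching covering $V(M)\cup T$ while touching only the six vertices of $S$; hence disjoint absorbers can be applied one after another, independently. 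Observe that no edge of the gadget contains a pair from $T$, so the scheme works even for triples all of whose pairs have degree $0$ (which can happen under our hypothesis) --- this is why a single-edge absorber would not do.

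The argument then has two parts. First I would prove a counting lemma: \emph{for every} $T\in{V\choose 3}$ there are at least $\beta n^{6}$ absorbers for $T$, for some $\beta=\beta(\eta)>0$. The hypothesis gives $e_3(H)>(\tfrac12+2\eta){n\choose 3}$ and, for each vertex $v$, a link graph $L_v$ on $V\setminus\{v\}$ (with $\{a,b\}\in E(L_v)\iff\{v,a,b\}\in E(H)$) of density above $\tfrac12$. To build an absorber for $T=\{x,y,z\}$ one must choose the two gadget edges and the three ``connector'' edges simultaneously compatibly. A naive product count fails here: the pairs that are \emph{bad} for $x$ (not forming an edge with $x$), and likewise for $y$ and for $z$, can each be almost half of all pairs, so a union bound over the three connector constraints is worthless. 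Instead one has to use that each $L_v$ has a positive density surplus over $\tfrac12$ in a robust way --- via supersaturation ($L_v$ has $\Omega(n^{3})$ triangles, and, more usefully, $\Omega(n^{2})$ edges inside every linear-sized vertex set) together with $e_3(H)=\Omega(n^{3})$, organised as a stage-by-stage selection, or else via one application of weak hypergraph regularity to locate a dense quasirandom triple of vertex classes and count inside it. This counting lemma is, I expect, the main obstacle.

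The second part is the standard probabilistic assembly and is routine. Include each labelled absorbing $6$-set independently with probability $p=c\,n^{-5}$ for a sufficiently small $c=c(\eta)$. By linearity of expectation and Chernoff/Markov bounds, with positive probability the chosen family $\mathcal A$ has at most $\eta^{3}n/2$ members, the number of pairs of members of $\mathcal A$ that share a vertex is $o(\eta^{6}n)$ (there are only $O(n^{11})$ intersecting pairs of $6$-sets, so their expected number in $\mathcal A$ is $O(p^{2}n^{11})$), and for every triple $T$ at least $\eta^{6}n$ members of $\mathcal A$ absorb $T$. Delete one member from each intersecting pair; the surviving absorbers are pairwise vertex-disjoint, and still more than $\eta^{6}n/3$ of them absorb any given $T$. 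Let $M$ be the matching consisting of the two gadget edges of each surviving absorber, so that $|M|\le\eta^{3}n$.

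Finally, given $W\subseteq V\setminus V(M)$ with $|W|\in 3\mathbb{Z}$ and $|W|\le\eta^{6}n$, partition $W$ into $t=|W|/3\le\eta^{6}n/3$ triples $T_1,\dots,T_t$ and absorb them one at a time: having dealt with $T_1,\dots,T_{i-1}$ (which used up $i-1$ gadgets of $M$), more than $\eta^{6}n/3-(i-1)\ge 1$ of the surviving absorbers of $T_i$ remain untouched, so pick one and perform the absorption as above. After all $t$ steps the matching covers $V(M)\cup T_1\cup\dots\cup T_t=V(M)\cup W$, which proves the lemma. \qed
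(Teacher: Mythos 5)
You are being asked to prove a statement that the paper itself does not prove: Lemma \ref{absorbLemma} is quoted from \cite{HPS_PM_3u_vertDeg}, so your attempt must stand as a self-contained proof of that cited result. As such it has a genuine gap: the heart of the lemma is the counting statement that \emph{every} triple $T$ has at least $\beta(\eta)n^{6}$ absorbers, and you do not prove it --- you explicitly defer it (``this counting lemma is, I expect, the main obstacle''), naming possible tools (supersaturation, weak regularity) without carrying out an argument. The difficulty is real for the gadget you chose: under a vertex-degree hypothesis the codegree $deg_3(\{x,s_1\})$ may be $0$, so after fixing one interior edge the three connector edges cannot be chosen greedily; and restricting to vertices of large link-degree does not rescue the count either, since $\delta_1$ gives no control over the number of edges of $H$ transversal to three prescribed sets of size $\eta n$ (in a configuration like Construction \ref{extConstruction} there are no such edges when the three sets lie inside $B$). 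So the first half of your write-up is a plan rather than a proof, while the second half (random selection, deletion of intersecting pairs, iterative absorption) is indeed routine but rests entirely on the unproved counting lemma. Even granting that lemma, the stated bounds need checking: to have $\eta^{6}n$ usable absorbers per triple you need roughly $c\beta\geq 2\eta^{6}$, while $|M|\leq\eta^{3}n$ forces $c\lesssim\eta^{3}$, so the unspecified constant $\beta(\eta)$ must be polynomially tied to $\eta$ (or the exponents in the statement adjusted), which you never verify.

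For comparison, the proof in \cite{HPS_PM_3u_vertDeg} sidesteps exactly this obstacle by using a different absorbing configuration whose count is immediate from the $1/2$-threshold: since $\delta_1(H)\geq\left(1/2+2\eta\right){n\choose 2}$, any two vertices $u,x$ have at least $deg_3(u)+deg_3(x)-{n-1\choose 2}\geq 4\eta{n\choose 2}-O(n)$ common link pairs $P$, i.e.\ pairs with $\{u\}\cup P,\ \{x\}\cup P\in E(H)$. Given $T=\{x_1,x_2,x_3\}$ one picks an edge $\{u_1,u_2,u_3\}$ disjoint from $T$ and then, for each $i$, a common link pair $P_i$ of $x_i$ and $u_i$, all disjoint; this yields $c(\eta)n^{9}$ absorbing $9$-sets, where the three edges $\{u_i\}\cup P_i$ go into $M$ and are replaced by $\{u_1,u_2,u_3\}$ together with $\{x_i\}\cup P_i$ when $T$ is absorbed. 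Your $6$-vertex gadget is the one natural in the codegree setting; it may well also admit $\Omega(n^{6})$ copies under the present hypothesis, but that requires a genuine argument, and supplying it (or switching to the configuration above) is precisely what is missing from your proposal.
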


After removing an absorbing matching $M$ from $H$ we find an almost perfect matching in $H|_{V\setminus V(M)}$. The few vertices not covered by this almost perfect matching are absorbed into $M$ to get a perfect matching in $H$. Theorem \ref{optCoverTheorem} in Section \ref{non_ext_case}, using the tools developed in Section \ref{tools}, guarantees the existence of an almost perfect matching in the non-extremal case. 

\begin{theorem}\label{optCoverTheorem}
For all $0<\eta \ll \alpha\ll1$, there is an $n_0$ such that if $H$ is a $3$-graph on $n\geq n_0$ vertices with $$\delta_1(H) \geq \left(\frac{5}{9} - 10\eta\right){n\choose 2},$$ then either 
\begin{itemize}\setlength{\itemsep=-5pt}
\item $H$ contains a matching leaving strictly less than $\eta^2 n$ vertices unmatched or
\item $H$ is $\alpha$-extremal. 
\end{itemize}   
\end{theorem}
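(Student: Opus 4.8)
The plan is to run a density-increment / weighted-matching argument. Set $M$ to be a maximum matching in $H$ and let $U = V \setminus V(M)$ be the set of uncovered vertices; assume for contradiction that $|U| \geq \eta^2 n$. By maximality of $M$, $U$ is an independent set (contains no edge), and moreover for every pair $u,u' \in U$ and every edge $e = \{x,y,z\} \in M$, the three "swapping" configurations are forbidden: we cannot have two disjoint edges inside $\{u,u'\}\cup e$, since that would let us replace $e$ by two edges and enlarge $M$. This is the standard source of structure. The first step is to record these local constraints precisely: for each $e \in M$, look at the bipartite-type link between $U$ and the pairs inside $e$, and between $U$ and pairs with one vertex in $e$ and one vertex in $U$; maximality forces that the number of edges meeting $U$ in one or two vertices is small relative to what the degree condition $\delta_1(H) \geq (\frac59 - 10\eta)\binom n2$ supplies, unless the edges of $M$ concentrate their "missing" mass in a structured way.

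Second, I would turn the counting around. Summing $\deg_H(v)$ over $v \in U$ gives at least $|U|\big((\frac59-10\eta)\binom n2\big)$ edge-slots, but every such edge has at most, roughly, a bounded number of its vertices in $U$ and the rest in $V(M)$; combined with the forbidden swap configurations above, most of these edges must use exactly one vertex of $U$ and a pair inside a single edge of $M$ (a pair $P \subset e$ for some $e \in M$). Define, for each $e \in M$, the set $U_e$ of uncovered vertices that form an edge with some pair inside $e$, and more refined weights counting how many of the three pairs of $e$ are "used". A short double-counting then shows that the average $e \in M$ has most of its three pairs usable by a large fraction of $U$; quantitatively, the density $d_3$ restricted to $V(M)$ (indeed to a suitable subset) must be bounded below, while the "deficiency" $\binom n2 - \delta_1 \approx \binom{2n/3}{2}$ caps how much structure the uncovered vertices can avoid.

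Third — and this is where the dichotomy appears — either the above forces $H$ to have a matching covering all but $o(n)$ vertices (in fact one should iterate: use the slack to augment $M$ by a "two-for-one" swap whenever the structure is not tight, decreasing $|U|$, until $|U| < \eta^2 n$), OR the structure is tight, meaning there is a large set $B$ (built from $U$ together with, for each edge $e \in M$, the vertices of $e$ that are "locally like $U$") of size at least $(\frac23 - \alpha)n$ across which the edge density $d_3(B)$ is below $\alpha$. Producing $B$ is essentially reverse-engineering Construction \ref{extConstruction}: the extremal configuration has exactly $U \cup (\text{two vertices of most }M\text{-edges})$ spanning almost no edges, so the tight case of the counting inequality must reconstruct a set of this shape. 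I would make the bookkeeping clean by introducing, for a small threshold, the set $B$ of vertices $v$ with $d_3(v, \binom{V}{2})$-type density into $U$ below the threshold, show $|B|$ is large when augmentation fails, and then show $d_3(B) < \alpha$ because an edge inside $B$ would (via $U$) yield an augmentation.

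The main obstacle I expect is the third step: cleanly extracting the set $B$ with \emph{both} $|B| \geq (\frac23-\alpha)n$ and $d_3(B) < \alpha$ from a near-equality in the counting, while keeping the constants consistent with the hierarchy $\eta \ll \alpha \ll 1$. The inequality "edges at $U$ $\leq$ (structured count) $+$ (error)" has to be exactly tight enough that $\frac59$ is the right constant — the $\frac59$ comes from optimizing $\frac{|A|}{n} = \frac13$ in the extremal example, i.e. from $1 - (\frac23)^2 = \frac59$ — so the error terms must be genuinely $O(\eta)\binom n2$, which requires the augmentation loop to have driven the "untight" part of the structure below an $\eta$-scale before one reads off $B$. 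Handling the iteration (each augmentation is finite, so it terminates, but one must check the degree condition and the bounds on $|U|$ are preserved throughout) is the routine-but-delicate part.
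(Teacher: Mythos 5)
There is a genuine gap, and it sits exactly where you flagged your "main obstacle", but it is more serious than a bookkeeping issue. First, a concrete error in your second step: the pairs lying \emph{inside} single edges of $M$ number at most $3|M|\le n$, so edges of the form ($u$ + pair inside one matching edge) contribute only $O(|U|\,n)=o\bigl(|U|\binom n2\bigr)$ and can never carry "most" of the degree of $U$; the bulk of the edges at $U$ necessarily go to pairs crossing \emph{two different} matching edges (or to pairs with one vertex in $U$ and one in $V(M)$). So the relevant structure is the $3\times 3$ "link" between two matching edges $e_i,e_j$ recording which crossing pairs have $U$-neighbours. Second, and this is the fatal point: maximality of an integral matching only forbids configurations that yield three disjoint edges inside $U\cup e_i\cup e_j$ (equivalently, a perfect matching of usable crossing pairs), or two disjoint edges inside $U\cup e_i$. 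It does \emph{not} forbid the patterns $B_{320}$, nor even the pattern with two full rows (six usable crossing pairs, covered by two vertices of $e_i$): from such a pattern any swap replaces two matching edges by at most two edges and leaves vertices uncovered, so there is no augmentation. But these patterns have usable-pair density up to $6/9=2/3>5/9$, and they are not of the shape of Construction \ref{extConstruction}. Consequently your counting never becomes tight at $5/9$: a hypothetical $H$ in which many links look like $B_{320}$/two-full-rows is consistent with all the constraints your argument extracts, yet you can neither augment $M$ nor exhibit a set $B$ with $|B|\ge(\tfrac23-\alpha)n$ and $d_3(B)<\alpha$. Your "two-for-one swap whenever the structure is not tight" loop is precisely the step that has no implementation in these cases.

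The paper's proof is designed around this obstruction, and this is the idea missing from your proposal: instead of a maximum matching it maintains a cover by large \emph{balanced complete $3$-partite} subgraphs, classifies the link graphs via Fact \ref{classification} (perfect matching, $B_{320}$, or $B_{311}$), and in the troublesome $B_{320}$ and "$2$-sided" situations still makes net progress by a fractional move that has no analogue for single matching edges: it builds new small complete tripartite graphs out of two colour classes plus vertices of the leftover set $\cal I$ (Lemmas \ref{3partVolArg} and \ref{subsetPHP_hyper}), and restores balance by discarding a strictly smaller number of vertices back into $\cal I$ (Claims \ref{few_2sided} and \ref{expanding_byVolArg}), gaining $\Theta(\eta^{3})n$ covered vertices per round. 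Only the genuinely extremal pattern $B_{311}$ resists this, and that is where the sparse set of size $(\tfrac23-\alpha)n$ is read off (Claim \ref{noExpand_extremal}), matching your intuition about reverse-engineering Construction \ref{extConstruction}. To salvage your route you would need either multi-edge (or fractional/lattice-type) augmentations strong enough to kill the $B_{320}$-dominant configurations, or to replace the matching by a structure, as in the paper, in which partial progress can be banked; as written, the proposal does not prove the dichotomy.
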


When $H$ is $\alpha$-extremal then almost all vertices of $A$ make edges with almost all $3$-sets in ${B\choose 3}$, where $B$ is as in Definition \ref{ext_defn} and $A = V(H)\setminus B$. In Section \ref{extCase} we first match the few vertices of $A$ that do make edges with almost all $3$-sets in ${B\choose 3}$ and the remaining vertices are matched using a K\"onig-Hall type argument. 

\begin{theorem}\label{extCaseTheorem}
For all $0<\alpha\ll1$, there is an $n_0$ such that if $H$ is an $\alpha$-extremal $3$-graph on $n\geq n_0$ vertices with $$ \delta_1(H) \geq {n-1\choose 2} - {2n/3\choose 2}+1,$$ then $H$ contains a perfect matching.
\end{theorem}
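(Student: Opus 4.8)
The plan is to prove Theorem~\ref{extCaseTheorem} by a careful greedy argument that exploits the rigid structure forced by $\alpha$-extremality. Write $n = 3k$, let $B \subset V(H)$ be the set guaranteed by Definition~\ref{ext_defn}, so $|B| \geq (2/3 - \alpha)n$ and $d_3(B) < \alpha$, and set $A = V(H) \setminus B$. First I would clean up the partition: since $d_3(B) < \alpha$ is tiny, almost every vertex of $B$ lies in very few edges inside $B$, and by adjusting $A$ and $B$ slightly (moving the few atypical vertices of $B$ into $A$, and noting that by the degree hypothesis $|A|$ cannot grow too much) I would arrange that $|A| = k-1$ exactly, $|B| = 2k+1$, and that $H|_B$ has essentially no edges — quantitatively, that $e_3(B)$ is at most some small power of $n$, and moreover every vertex of $B$ has degree at most, say, $n^{1.5}$ into $\binom{B}{2}$. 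This is the analogue of showing $H$ is genuinely ``close to'' Construction~\ref{extConstruction} in a usable sense. The reason $|A| = k-1$ is forced: if $|A| \leq k-1$ we are in the near-extremal regime, while $|A| \geq k$ would by the minimum degree bound applied to a vertex of $B$ already give enough edges meeting $B$ in two vertices to be absorbed, contradicting extremality — this bookkeeping is routine but must be done.

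Next I would classify the vertices of $A$. Call $v \in A$ \emph{good} if $deg_3(v, \binom{B}{2}) \geq (1-\sqrt{\alpha})\binom{|B|}{2}$, i.e. $v$ makes edges with almost all pairs in $B$, and \emph{bad} otherwise. The minimum degree condition $\delta_1(H) \geq \binom{n-1}{2} - \binom{2n/3}{2} + 1$ is exactly calibrated so that a vertex $v \in B$ whose link inside $B$ is small must send almost all of its non-edges-... rather, the point is that $\binom{n-1}{2} - \binom{2n/3}{2}$ counts precisely the pairs meeting $A$, so a vertex $v \in A$ that fails to be good is ``wasting'' degree and in fact $v$'s degree forces it to reach far into $A \times B$ and even into $\binom{A}{2}$; a counting argument shows that the number of bad vertices in $A$ is at most $O(\sqrt{\alpha}\, n)$, and in particular small. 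Then I would greedily match all the bad vertices of $A$ first: each bad vertex can be placed in an edge using either two fresh vertices of $B$, or one fresh vertex of $A$ and one of $B$, or two fresh vertices of $A$ — there are few bad vertices so the greedy choices never run out of room, and I would be careful to consume roughly twice as many $B$-vertices as $A$-vertices at this stage, or track the deficit, so that after this phase the leftover sets $A'$ and $B'$ satisfy $|B'| = 2|A'|$.

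Now all remaining vertices of $A$, namely $A'$, are good: each is joined to at least a $(1 - \sqrt{\alpha})$-fraction of pairs in $B$ (hence, restricted to $B'$, still at least a $(1 - 2\sqrt{\alpha})$-fraction, say). The task is to find a perfect matching of the auxiliary structure in which each $a \in A'$ must be matched to a \emph{pair} from $B'$ (since $H|_{B'}$ is almost empty, an edge entirely in $B'$ is essentially unavailable, and in any case $|B'| = 2|A'|$ forces every matching edge to use exactly one vertex of $A'$ and two of $B'$). This is a tripartite-type matching problem and I would solve it by a Hall/defect argument, or more robustly by iterating: since every good vertex sees almost all pairs, one can repeatedly pick a vertex $a \in A'$ and a pair $\{x,y\} \subset B'$ forming an edge, remove all three, and maintain the invariant that all surviving $A'$-vertices are still good relative to the shrunken $B'$ — this holds as long as $|A'|$ is not too close to $|B'|/2$; the last $O(\sqrt{\alpha}\, n)$ vertices need the full strength of Hall's theorem applied to the $3$-partite-like bipartite incidence between $A'$ and $\binom{B'}{2}$, where the density $\geq 1 - 2\sqrt{\alpha}$ makes the Hall condition trivial to check (any set $S \subset A'$ has a huge neighborhood in $\binom{B'}{2}$, far exceeding what is needed). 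The main obstacle, and the step I would spend the most care on, is the first greedy phase: ensuring that the bad vertices of $A$ (and any vertices that had to be moved from $B$ into $A$ during cleanup) can all be absorbed into edges while \emph{exactly} balancing the $2{:}1$ ratio between $B$ and $A$, because a single unit of imbalance at the end destroys the final perfect matching. Handling this cleanly requires a small case analysis on how each bad vertex is matched, plus a parity/counting argument to show the ratio can always be corrected, using that $n \in 3\mathbb{Z}$.
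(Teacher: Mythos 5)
There are genuine gaps, and they sit exactly at the two places where the paper has to work hardest. First, your ``cleanup'' step is not achievable as stated: $d_3(B)<\alpha$ is only a density bound, so the edges inside $B$ can be spread so that \emph{every} vertex of $B$ has degree $\Theta(\alpha n^2)$ into $\binom{B}{2}$, and no removal of ``few atypical vertices'' makes $e_3(B)$ polynomially small or bounds each $B$-degree by $n^{1.5}$. Worse, your normalization $|A|=k-1$, $|B|=2k+1$ forces (by counting: $k$ matching edges must cover $2k+1$ vertices of $B$) at least one matching edge lying entirely inside $B$, i.e.\ exactly the kind of edge you have just declared ``essentially unavailable''; and your balancing requirement $b=2a+3$ in the first greedy phase has the same problem. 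Producing that edge inside $B$ (or, equivalently, correcting the $2{:}1$ ratio) is the crux of the extremal case and is where the ``$+1$'' in the degree bound must be used; your ``small case analysis plus a parity/counting argument'' names the difficulty without supplying a mechanism. The paper avoids your normalization altogether: it balances to $|A|=n/3$, $|B|=2n/3$, isolates \emph{strongly exceptional} vertices (tiny degree into $\binom{B}{2}$, resp.\ into $B\times A$), and covers them by a matching that keeps $|B'|=2|A'|$, using an exchange argument between $S_A$ and $S_B$ together with an induction (in the spirit of Bollob\'as--Daykin--Erd\H{o}s) that exploits the exact threshold $\delta_1\geq\binom{n-1}{2}-\binom{2n/3}{2}+1$. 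That idea is absent from your proposal.

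Second, your finishing step does not yield a matching in $H$. A matching in the bipartite incidence graph between $A'$ and $\binom{B'}{2}$ assigns \emph{distinct} pairs to distinct vertices of $A'$, but those pairs may share vertices of $B'$, so Hall's theorem applied there does not produce pairwise disjoint triples, let alone triples covering all of $B'$; what you need is a perfect matching in a $3$-uniform structure, which bipartite Hall does not give. Moreover, in your greedy-then-Hall scheme the ``goodness'' of the surviving $A'$-vertices is only relative to the original $B'$: once $B'$ has shrunk to size $O(\sqrt{\alpha}\,n)$, a vertex that originally missed $\sqrt{\alpha}\binom{|B|}{2}$ pairs may miss \emph{all} remaining pairs, so the endgame is not controlled. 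The paper's resolution is the missing ingredient: it first partitions $B''$ into \emph{good} pairs (pairs seen by almost all of $A''$), using a randomly chosen reservoir $P_1$ of good pairs plus a Dirac-type perfect matching on the good-pair graph for the rest, and only then applies a K\"onig--Hall argument between $A''$ and this fixed family of disjoint pairs, with the reservoir guaranteeing the Hall condition for large sets. Your outline would need both of these repairs before it could be called a proof.
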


\begin{proof}[\textbf{Proof of Theorem \ref{ourMainThm}}]

Let $0<\alpha\ll1$ be given. Applying Lemma \ref{absorbLemma} with parameter $\sqrt{\alpha}$, Theorem \ref{optCoverTheorem} with parameter $\alpha^{3/2}$ and Theorem \ref{extCaseTheorem} with parameter $\alpha$, we get $n_0',n_0''$ and $n_0'''$ respectively. Let $n_0 = 2\max \{n_0',n_0'',n_0'''\}$. Now assume that we have a $3$-graph $H$ on $n\geq n_0$ vertices satisfying (\ref{minDegree}). 
\vskip5pt
From (\ref{minDegree}) when $n$ is large we have $$\delta_1(H)\geq{n-1 \choose 2}-{2n/3\choose 2} +1 > \frac{5}{9}{n-1\choose 2} - \frac{n}{3} > \left(1/2 + 2\sqrt{\alpha}\right){n\choose 2}.$$ Hence $H$ satisfies the conditions of Lemma \ref{absorbLemma} with parameter $\sqrt{\alpha}$. We remove from $H$  an absorbing matching $M$ of size at most $\alpha^{3/2} n$.
\vskip5pt

Let $H' = H|_{V\setminus V(M)}$ be the remaining hypergraph (after removing $M$) on $n' = n - |V(M)|$ vertices. Since ${|V(M)|\choose 2} + |V(M)|\cdot n < 7\alpha^{3/2}{n\choose2} \leq 10\alpha^{3/2}{n'\choose2}$, it is easy to see that $$\delta_1(H')\geq \left(\dfrac{5}{9}-10\alpha^{3/2}\right){n' \choose 2}$$  

As $n' > n_0''$, using Theorem \ref{optCoverTheorem} (with $\eta = \alpha^{3/2}$), in $H'$ we find an almost perfect matching that leaves out a set of at most $\alpha^{3} n' < \alpha^{3} n$  vertices. As guaranteed by Lemma \ref{absorbLemma} the vertices that are left out from this almost perfect matching are absorbed into $M$, and we get a perfect matching in $H$. 

\vskip5pt
In case $H$ is $\alpha$-extremal, using Theorem \ref{extCaseTheorem} we get a perfect matching in $H$, which concludes the proof of Theorem \ref{ourMainThm}. \hfill{}
\end{proof}

\section{Tools}\label{tools}

% We will make use of the following two standard lemmas.
% \begin{lemma}\label{subGraphStandard}
%  Every graph $G$ on $n$ vertices and $m$ edges has a subgraph $G'$ such that $\delta(G') \geq m/n$
% \end{lemma}
% \begin{lemma}\label{maxmatchingStandard}
% In a graph $G$ on $n$ vertices, we have a matching of size $\max\{\delta(G),\delta(G)\frac{n}{4\Delta(G)}\}$
% \end{lemma}
\noindent We use the following result of Erd\"{o}s \cite{ErdosKST} to find complete balanced $r$-partite subhypergraphs of $r$-graphs.

\begin{lemma}\label{hyperKST}
For every integer $l\geq 1$ there is an integer $n_0 = n_0(r,l)$ such that every $r$-graph on $n> n_0$ vertices, that has at least $n^{r-1/l^{r-1}}$ edges, contains a $K^{(r)}(l)$.
\end{lemma}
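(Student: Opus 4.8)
The plan is to prove the lemma by induction on $r$; throughout, $n_0(r,l)$ denotes the bound being constructed, and I may assume $l\ge 2$ since $K^{(r)}(1)$ is a single edge and the claim is trivial.

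\textbf{Base case $r=2$.} This is the Kővári--Sós--Turán theorem, which I would include via the standard double count. If $G$ is a graph on $n$ vertices with $e_2(G)\ge n^{2-1/l}$ and $t_v$ is the degree of $v$, then by convexity of $x\mapsto\binom{x}{l}$,
\[
\sum_{v\in V}\binom{t_v}{l}\ \ge\ n\binom{2e_2(G)/n}{l}\ \ge\ l\binom{n}{l}
\]
once $n$ is large (using $2^l\ge l$). The left-hand side counts pairs consisting of a vertex and an $l$-subset of its neighbourhood, so were there no $K^{(2)}(l)$, each $l$-set of vertices would be contained in at most $l-1$ neighbourhoods, forcing the left side to be $\le (l-1)\binom{n}{l}$, a contradiction.

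\textbf{Inductive step.} For $r\ge 3$ I would argue as follows. Assume the lemma for $r-1$, and let $H$ be an $r$-graph on $n$ vertices with $m:=e_r(H)\ge n^{r-1/l^{r-1}}$. For $S\in\binom{V}{r-1}$ write $N(S)=\{v:S\cup\{v\}\in E(H)\}$, so $\sum_{S}|N(S)|=r\,m$. For an $l$-set $U\in\binom{V}{l}$ let $H_U$ be the $(r-1)$-graph on $V\setminus U$ whose edges are the $(r-1)$-sets $S$ with $S\cup\{u\}\in E(H)$ for every $u\in U$, and set $d(U)=e_{r-1}(H_U)$. Double counting the pairs $(S,U)$ with $S\cap U=\varnothing$ and $U\subseteq N(S)$ gives, with $\bar d:=rm/\binom{n}{r-1}$ and using convexity again,
\[
\sum_{U\in\binom{V}{l}}d(U)\ =\ \sum_{S\in\binom{V}{r-1}}\binom{|N(S)|}{l}\ \ge\ \binom{n}{r-1}\binom{\bar d}{l},
\]
so some fixed $U$ has $d(U)\ge\binom{n}{r-1}\binom{\bar d}{l}\big/\binom{n}{l}$. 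If this quantity is at least $(n-l)^{(r-1)-1/l^{r-2}}$, then since $H_U$ has $n-l>n_0(r-1,l)$ vertices the induction hypothesis produces a $K^{(r-1)}(l)$ in $H_U$ on parts $W_1,\dots,W_{r-1}\subseteq V\setminus U$; and then $W_1,\dots,W_{r-1},U$ are the parts of a $K^{(r)}(l)$ in $H$, because any $\{w_1,\dots,w_{r-1}\}$ with $w_i\in W_i$ is an edge of $H_U$ and therefore $\{w_1,\dots,w_{r-1},u\}\in E(H)$ for every $u\in U$.

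Everything thus reduces to the inequality $\binom{n}{r-1}\binom{\bar d}{l}/\binom{n}{l}\ge(n-l)^{(r-1)-1/l^{r-2}}$ for $n$ large, and this is the one point that takes (mild) care; it is the reason constants must be tracked. From $m\ge n^{r-1/l^{r-1}}$ one gets $\bar d\ge r!\,n^{1-1/l^{r-1}}$, hence $\binom{\bar d}{l}\ge(1-o(1))\bar d^{\,l}/l!=(1-o(1))(r!)^l\,n^{\,l-1/l^{r-2}}/l!$, while $\binom{n}{r-1}/\binom{n}{l}=(1-o(1))\,l!\,n^{\,r-1-l}/(r-1)!$; multiplying, $d(U)\ge(1-o(1))\frac{(r!)^l}{(r-1)!}\,n^{(r-1)-1/l^{r-2}}$. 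Since $(r!)^l/(r-1)!\ge r\ge 2$, this exceeds $n^{(r-1)-1/l^{r-2}}\ge(n-l)^{(r-1)-1/l^{r-2}}$ for $n$ large, and choosing $n_0(r,l)$ larger than $n_0(r-1,l)+l$ and than the threshold hidden in the ``$(1-o(1))$'' estimates closes the induction. The main obstacle is exactly this bookkeeping: one must verify that the slack coming from the ratio $\binom{n}{r-1}/\binom{n}{l}$ together with the factor $(r!)^l/(r-1)!$ absorbs every lower-order loss, so that the clean threshold $n^{r-1/l^{r-1}}$ with constant $1$ really suffices; the rest is routine.
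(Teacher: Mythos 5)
Your argument is correct: it is exactly the classical Erd\H{o}s induction (base case K\H{o}v\'ari--S\'os--Tur\'an, then double counting pairs $(S,U)$ with $U\subseteq N(S)$ and passing to the auxiliary $(r-1)$-graph $H_U$), and the quantitative bookkeeping with the slack factor $(r!)^l/(r-1)!\ge r$ does close the induction with the clean threshold $n^{r-1/l^{r-1}}$. The paper itself gives no proof of this lemma but simply cites Erd\H{o}s \cite{ErdosKST}, whose proof proceeds along the same lines as yours, so there is nothing to reconcile.
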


\begin{corollary}\label{hyperKST_corr}
\noindent For $0< \eta \ll1$ and $r\leq 3$, if $H$ is an $r$-graph on $n>n_0(\eta,r)$ vertices with $$|E(H)|\geq \eta{n\choose r}$$ then $H$ contains a $K^{(r)}(t)$, where $$t=\eta (\log n)^{1/(r-1)}.$$
\end{corollary}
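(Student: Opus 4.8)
The plan is to apply Erd\"{o}s's theorem (Lemma~\ref{hyperKST}) with the part-size $l$ taken essentially equal to the target value $t$, reducing the whole statement to checking the edge hypothesis $|E(H)| \ge n^{r - 1/l^{r-1}}$ of that lemma. Write $t = \eta(\log n)^{1/(r-1)}$; since Lemma~\ref{hyperKST} needs an integer I would use $l := \lfloor t\rfloor$, which satisfies $l \ge 1$ once $n$ is large (as $t\to\infty$) and $l \le t$. The computation to isolate first is the identity $t^{r-1} = \eta^{r-1}\log n$, which makes $n^{1/t^{r-1}} = 2^{(\log n)/t^{r-1}} = 2^{1/\eta^{r-1}}$ a constant depending only on $\eta$ and $r$; since $l\le t$ it follows that
$$ n^{r - 1/l^{r-1}} \le n^{r - 1/t^{r-1}} = 2^{-1/\eta^{r-1}}\,n^{r}. $$

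It remains to check $\eta{n\choose r} \ge 2^{-1/\eta^{r-1}}\,n^{r}$. Using the standard estimate ${n\choose r} \ge (n/r)^{r}$, this follows from $\eta/r^{r} \ge 2^{-1/\eta^{r-1}}$, i.e.\ from $\eta^{r-1}\log(r^{r}/\eta) \le 1$; for $r\le 3$ the left-hand side tends to $0$ as $\eta\to 0$, so the inequality holds since $\eta\ll1$. Thus $|E(H)| \ge \eta{n\choose r} \ge n^{r - 1/l^{r-1}}$, which is exactly the hypothesis of Lemma~\ref{hyperKST}.

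The one point that needs a little care — and, although the corollary is essentially routine, the main obstacle — is that the threshold $n_{0}(r,l)$ in Lemma~\ref{hyperKST} depends on $l$, whereas here $l = l(n)$ grows with $n$; I would rule out the apparent circularity by noting that $n_{0}(r,l)$ grows at most polynomially in $l$ (Erd\"{o}s's double-counting proof requires only that $n$ exceed a fixed power of $l$ for its convexity estimate), so that $n_{0}(r,l(n)) = (\log n)^{O(1)} = o(n)$. Hence there is an $n_{0} = n_{0}(\eta,r)$ so that every $n > n_{0}$ simultaneously satisfies $l(n) \ge 1$, $n > n_{0}(r,l(n))$, and the edge bound above; Lemma~\ref{hyperKST} then yields a copy of $K^{(r)}(l)$ in $H$ with $l = \lfloor \eta(\log n)^{1/(r-1)}\rfloor$, i.e.\ the desired $K^{(r)}(t)$.
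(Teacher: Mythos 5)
Your proof is correct and is essentially the paper's own argument: both reduce the corollary to the edge hypothesis of Lemma~\ref{hyperKST} with part size (the integer part of) $t$, using the identity $n^{1/t^{r-1}} = 2^{1/\eta^{r-1}}$ together with a crude lower bound on $\binom{n}{r}$ and the smallness of $\eta$. Your additional remarks on rounding $t$ to an integer and on the growth of $n_0(r,l)$ in $l$ (a uniformity point the paper silently glosses over) are reasonable and do not alter the argument.
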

\noindent This is so because $\eta{n\choose r} \geq \dfrac{\eta n^r}{2r!} \geq \dfrac{n^r}{2^{1/\eta^{r-1}}} = \dfrac{ n^r}{n^{1/(\eta^{r-1} \log n)}} = \dfrac{ n^r}{n^{1/t^{(r-1)}}} $ , as $\eta > \dfrac{2r!}{2^{1/\eta^{(r-1)}}}$.
\vskip6pt

The following lemma is a very useful tool in this section. 

\begin{lemma} \label{subsetsPHP}
Let $m$ be a sufficiently large integer. If $G(A,B)$ is an $\eta$-dense bipartite graph with $|A| = c_1m $ and $B\geq c_2 2^{m}$ for some constants $0<c_1,c_2<1$, then there exists a complete bipartite subgraph $G'(A',B')$ of $G$ such that $A'\subset A$, $B'\subset B$, $|A'|\geq \eta |A|/2 \text{ and } |B'|\geq \dfrac{\eta }{2}\cdot \dfrac{|B|}{2^{c_1m}} \geq \dfrac{\eta c_2 }{2}\cdot 2^{(1-c_1)m} $.
\end{lemma}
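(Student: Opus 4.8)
The plan is to combine a one-line averaging argument with a pigeonhole over neighbourhoods. First I would unpack the hypothesis: since $G(A,B)$ is $\eta$-dense we have $e(G)\ge \eta|A||B|$, so on average a vertex of $B$ has at least $\eta|A|$ neighbours in $A$. A Markov-type estimate then shows that the set $B_0=\{\,b\in B : \deg_G(b)\ge \tfrac{\eta}{2}|A|\,\}$ satisfies $|B_0|\ge \tfrac{\eta}{2}|B|$: if not, then
$$e(G)\le |B_0|\cdot|A| + \bigl(|B|-|B_0|\bigr)\tfrac{\eta}{2}|A| < \tfrac{\eta}{2}|A||B| + \tfrac{\eta}{2}|A||B| = \eta|A||B|,$$
a contradiction. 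So there are at least $\tfrac{\eta}{2}|B|$ vertices of $B$, each of which is adjacent to at least $\tfrac{\eta}{2}|A|$ vertices of $A$.

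Next, for each $b\in B_0$ consider its neighbourhood $N_G(b)\subseteq A$, which by definition of $B_0$ has size at least $\tfrac{\eta}{2}|A|$. The key observation is that there are at most $2^{|A|}=2^{c_1 m}$ possible values for $N_G(b)$, so by the pigeonhole principle there is a fixed set $A'\subseteq A$ with $N_G(b)=A'$ for at least $|B_0|/2^{c_1 m}$ of the vertices $b\in B_0$. Let $B'$ be this collection of vertices, so $B'\subseteq B_0$. Bucketing by the \emph{exact} neighbourhood, rather than merely by its size, is what makes the argument go through in one step: every $b\in B'$ is adjacent to every vertex of $A'$, so $G'(A',B')$ is automatically a complete bipartite subgraph of $G$.

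It then remains to read off the size bounds. Since $A'=N_G(b)$ for some $b\in B_0$ we get $|A'|\ge \tfrac{\eta}{2}|A|$, and
$$|B'|\ \ge\ \frac{|B_0|}{2^{c_1 m}}\ \ge\ \frac{\eta}{2}\cdot\frac{|B|}{2^{c_1 m}}\ \ge\ \frac{\eta c_2}{2}\cdot 2^{(1-c_1)m},$$
where the last inequality uses $|B|\ge c_2 2^m$. The assumption that $m$ is sufficiently large is needed only to ensure these quantities are at least $1$, i.e.\ that $A'$ and $B'$ are genuinely nonempty; this holds once $\tfrac{\eta}{2}c_1 m\ge 1$ and $2^{(1-c_1)m}\ge 2/(\eta c_2)$.

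As for the main obstacle: there is essentially none, since the lemma is a packaging of double counting plus pigeonhole. The only point requiring a little care is the decision to pigeonhole on the identity of $N_G(b)$ rather than on $|N_G(b)|$ — this is exactly what converts "many vertices of large degree into $A$" into "many vertices with a common complete neighbourhood in $A$", and the exponential lower bound imposed on $|B|$ relative to $|A|$ is precisely what survives dividing by the $2^{|A|}$ pigeonholes.
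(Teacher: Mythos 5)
Your proposal is correct and follows essentially the same route as the paper's proof: a Markov-type averaging step to extract a subset of $B$ of size at least $\eta|B|/2$ whose vertices all have degree at least $\eta|A|/2$ into $A$, followed by pigeonholing on the exact neighbourhood (at most $2^{c_1m}$ possibilities) to obtain the complete bipartite subgraph with the stated bounds. No substantive differences to report.
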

\begin{proof}
First we show that there is a set $B_1\subset B$ such that $|B_1| \geq \eta|B|/2$ and for every vertex $ b\in B_1$, $deg(b,A)\geq \eta |A|/2$. Such a subset exists because otherwise the total number of edges in $G$ would be strictly less than $$\frac{\eta|B|}{2}\cdot |A| + |B|\cdot \frac{\eta|A|}{2} = \eta |A||B|$$ a contradiction to the fact that $G(A,B)$ is $\eta$-dense. Now we show that there is the required complete bipartite subgraph in $G(A,B_1)$. To see this consider the neighborhoods in $A$, of the vertices in $B_1$. Since there can be at most $2^{|A|} = 2^{c_1m}$ such
neighborhoods, by averaging there must be a neighborhood  that appears for at least $\dfrac{|B_1|}{2^{c_1 m }}\geq \dfrac{\eta }{2}\cdot \dfrac{|B|}{2^{c_1m}} \geq \dfrac{\eta c_2 }{2} \cdot \dfrac{2^m}{ 2^{c_1 m }}= \dfrac{\eta c_2 }{2}\cdot 2^{(1-c_1)m}$ vertices of $B_1$. Hence we get the desired complete bipartite graph. \hfill{} 
\end{proof}
%=\dfrac{\eta c_2 }{2}\cdot 2^{(1-c_1)m}
%\begin{fact}\label{denseSubgraph}

\vskip6pt\noindent The following two lemmas are repeatedly used in Section \ref{non_ext_case}.
\begin{lemma}\label{3partVolArg}
Let $m$ be a sufficiently large integer and let $H(X,Y,Z)$ be a $3$-partite $3$-graph with $|X| = |Y| = c_1m$ and $|Z| \geq c_2 2^{m^2}$ for some constants $0<c_1,c_2<1$. If $d_3(Z,(X\times Y)) \geq \eta$, then there exists a complete $3$-partite $3$-graph $H'(X',Y',Z')$ as a subgraph of $H$, such that $|X'|=|Y'| =|Z'| \geq \frac{\eta}{4} \log |X|$. 
%$\sqrt{c_1\log n}$
\end{lemma}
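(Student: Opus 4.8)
The plan is to reduce the $3$-partite, $3$-uniform statement to the bipartite Lemma~\ref{subsetsPHP} by treating the pair $(X,Y)$ as a single ``ground set'' of roughly $m^2$ coordinates. First I would form the auxiliary bipartite graph $G(Z, X\times Y)$ whose left side is $Z$ and whose right side is the set of all $|X|\cdot|Y| = c_1^2 m^2$ ordered pairs $(x,y)$, with $z\in Z$ joined to $(x,y)$ exactly when $\{z,x,y\}\in E(H)$. The hypothesis $d_3(Z,(X\times Y))\ge\eta$ says precisely that $G$ is $\eta$-dense as a bipartite graph. Writing $m' = c_1^2 m^2$ for the size of the right-hand side and noting $|Z|\ge c_2 2^{m^2} \ge c_2' 2^{m'}$ for a suitable constant (since $m' \le m^2$), Lemma~\ref{subsetsPHP} applies with the roles of $A$ and $B$ played by $X\times Y$ and $Z$ respectively. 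It yields a complete bipartite subgraph $G'(P', Z')$ with $P'\subset X\times Y$, $|P'| \ge \eta |X\times Y|/2 = \tfrac{\eta}{2} c_1^2 m^2$, and $|Z'|\ge \tfrac{\eta}{2}\cdot |Z|/2^{m'}$, which is still enormous compared to anything we need.

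The heart of the argument is now combinatorial: from the dense set $P'$ of pairs I must extract a large ``grid'' $X'\times Y'\subset P'$ with $|X'|=|Y'|$. Think of $P'$ as a bipartite graph $\Gamma$ between $X$ and $Y$ with at least $\tfrac{\eta}{2}c_1^2 m^2 = \tfrac{\eta}{2}|X||Y|$ edges, i.e.\ $\Gamma$ is $(\eta/2)$-dense. A $K_{t,t}$ inside $\Gamma$ is exactly the grid we want. By the Kővári–Sós–Turán bound (equivalently Lemma~\ref{hyperKST}/Corollary~\ref{hyperKST_corr} in the case $r=2$), a bipartite graph on $c_1 m + c_1 m$ vertices that is $(\eta/2)$-dense contains $K_{t,t}$ with $t = \tfrac{\eta}{2}\log(c_1 m) \ge \tfrac{\eta}{4}\log|X|$ once $m$ is large; alternatively one can run the same neighborhood-counting pigeonhole used in the proof of Lemma~\ref{subsetsPHP} directly on $\Gamma$. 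This produces $X'\subset X$, $Y'\subset Y$ with $|X'|=|Y'|\ge \tfrac{\eta}{4}\log|X|$ and $X'\times Y'\subset P'$. Finally, set $Z''$ to be any subset of $Z'$ of size exactly $|X'|$ (possible since $|Z'|$ is astronomically larger than $\tfrac{\eta}{4}\log|X|$). Then every triple $\{z,x,y\}$ with $z\in Z''$, $(x,y)\in X'\times Y'$ is an edge of $H$: membership of $(x,y)$ in $P'$ plus completeness of $G'$ on $P'\times Z'$ gives $z\sim(x,y)$ in $G$. Hence $H'(X',Y',Z'')$ is the desired complete $3$-partite subgraph with all three parts of equal size at least $\tfrac{\eta}{4}\log|X|$.

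The main obstacle, and the only place requiring care, is the intermediate grid-extraction step: passing from ``$P'$ is a dense set of pairs'' to ``$P'$ contains a product set $X'\times Y'$ of balanced size $\Theta(\log m)$.'' This is not automatic — a dense bipartite graph need not contain a large balanced complete bipartite subgraph in general, but here the parts $X,Y$ have size only $\Theta(m)$ while we only demand a $K_{t,t}$ with $t=\Theta(\log m)$, which is exactly the regime where Kővári–Sós–Turán (or the crude pigeonhole on neighborhoods: at most $2^{|Y|}=2^{c_1 m}$ distinct neighborhoods among vertices of $X$, so some neighborhood of size $\ge \eta|Y|/4$ recurs for $\ge |X|_{\text{heavy}}/2^{c_1 m}$ vertices — wait, that bound is too weak, so one genuinely invokes KST rather than the trivial pigeonhole) gives what is needed. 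I would be careful to state this sublemma cleanly, perhaps as a direct consequence of Lemma~\ref{hyperKST} with $r=2$ and $l=t$, checking the edge count $(\eta/2)\binom{2c_1m}{2} \ge (2c_1 m)^{2 - 1/t}$ for large $m$. Everything else — the rewriting of $d_3(Z,(X\times Y))\ge\eta$ as bipartite density, and the final bookkeeping that the huge set $Z'$ easily contains a subset matching the tiny value $|X'|$ — is routine.
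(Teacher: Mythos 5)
Your proposal is correct and follows essentially the same route as the paper: build the auxiliary bipartite graph between $Z$ and $X\times Y$, apply Lemma~\ref{subsetsPHP} to get a complete bipartite piece $(P',Z')$, then view $P'$ as a dense graph between $X$ and $Y$ and invoke the Erd\H{o}s/K\H{o}v\'ari--S\'os--Tur\'an bound (Corollary~\ref{hyperKST_corr} with $r=2$) to extract the balanced grid $X'\times Y'$, finally truncating $Z'$ to size $|X'|$. Your parenthetical observation that the crude neighborhood pigeonhole is too weak at this second stage and that one must genuinely use the KST-type count is exactly the point, and the paper does the same.
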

\begin{proof}
First consider the auxiliary bipartite graph $G_1(A,Z)$, where $A =X\times Y$ and a vertex $z\in Z$ is connected to a pair $(a,b)\in A$ if $\{a,b,z\}$ is an edge of $H$. Clearly $G_1$ satisfies the conditions of Lemma\ref{subsetsPHP}.  Applying Lemma \ref{subsetsPHP} on $G_1$ we get a complete bipartite graph $G_2(A',Z')$ such that $A' \subset A= X\times Y$, $Z' \subset Z$, $|A'|\geq \eta |X||Y|/2$ and $|Z'| \geq \frac{c_2\eta}{2} 2^{m^2(1-c_1^2)} > |X|$ where the last inequality follows when $m$ is large.  
\vskip 4pt

Now Let $G_3$ be a graph on vertex set $X\cup Y$ and $(a,b)$ is an edge in $G_3$ if $(a,b)\in A'$. Since $|A'|\geq \eta |X||Y|/2$, we have $|E(G_3)| \geq \dfrac{\eta}{4} {|X\cup Y| \choose 2}$.  Applying Corollary \ref{hyperKST_corr} (for $r=2$), in $G_3$ we get a complete bipartite graph $G_4(X',Y')$ with $X'\subset X$ and $Y'\subset Y$ such that $|X'|=|Y'| \geq \frac{\eta}{4} \log |X|$. Clearly $X'$, $Y'$ and a subset of $Z'$ (of size $|X'|$), correspond to the color classes of required complete $3$-partite $3$-graph. \hfill{} 
\end{proof}

\begin{lemma}\label{subsetPHP_hyper}
Let $m$ be a sufficiently large integer and let $H\left(A,{B\choose 2}\right)$ be a $3$-graph such that $|A|= c_1 m$, $|B|\geq c_2 2^{m^2}$, for some constants $0<c_1,c_2<1$. If $d_3\left(A,{B\choose 2}\right)\geq \eta$, then there exists a complete $3$-partite $3$-graph $H'(A',B',B'')$, with $A'\subset A$, $B'\mbox{ and } B''$ are disjoint subsets of $B$ such that $|A'|=|B'|=|B''| = \eta|A|/2$.  
\end{lemma}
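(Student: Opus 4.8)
The plan is to mimic the structure of the proof of Lemma \ref{3partVolArg}: first pass from the hypergraph to an auxiliary bipartite graph whose $B$-side has pairs from $B$ as vertices, extract a complete bipartite piece there using Lemma \ref{subsetsPHP}, and then argue that the chosen family of pairs contains a large complete bipartite graph $G(B',B'')$ on two disjoint subsets of $B$, so that $A'$, $B'$, $B''$ form the color classes of the desired complete $3$-partite $3$-graph. The difference from Lemma \ref{3partVolArg} is that here the two ``$B$-coordinates'' of an edge are unordered and come from the same ground set $B$, so I need to be a bit careful about symmetrizing and about keeping the two extracted subsets of $B$ disjoint.

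First I would form the auxiliary bipartite graph $G_1(A, P)$, where $P = {B \choose 2}$ and a vertex $a \in A$ is joined to a pair $\{b_1,b_2\} \in P$ precisely when $\{a,b_1,b_2\} \in E(H)$. The hypothesis $d_3(A,{B\choose 2}) \geq \eta$ says exactly that $G_1$ is $\eta$-dense, with $|A| = c_1 m$ and $|P| = {|B|\choose 2} \geq c_3 2^{2m^2}$ for a suitable constant $c_3$; writing $|P| \geq c_3 2^{m'}$ with $m' = 2m^2$ we may apply Lemma \ref{subsetsPHP} to get a complete bipartite subgraph $G_1'(A', P')$ with $A' \subset A$, $P' \subset P$, $|A'| \geq \eta|A|/2$, and $|P'|$ exponentially large in $m^2$ — in particular $|P'| \geq |B|^{10}$, say, for $m$ large. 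Every $a \in A'$ is then joined in $H$ to every pair in $P'$, so it remains to find $B', B'' \subset B$ disjoint with $|B'| = |B''| = \eta|A|/2$ and ${B' \times B''} $ (viewed as unordered pairs) contained in $P'$.

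Now view $P'$ as (the edge set of) a graph $\Gamma$ on vertex set $B$. The number of edges of $\Gamma$ is $|P'|$, which is superpolynomial in $|B|$, so in particular $\Gamma$ is $\eta'$-dense for every fixed $\eta' < 1$ once $m$ is large; by Corollary \ref{hyperKST_corr} (the case $r = 2$) $\Gamma$ contains a complete bipartite graph $K(t,t)$ with $t = \eta'(\log|B|)$, and since $|B| \geq c_2 2^{m^2}$ we have $\log|B| \geq m^2/\,C$, so $t \geq \eta|A|/2$ comfortably (here $|A| = c_1 m$, which is linear in $m$ and hence much smaller than $m^2$). Taking $B', B''$ to be two sides of this $K(t,t)$, truncated to size exactly $\eta|A|/2$, gives disjoint subsets of $B$ with every cross pair in $P'$. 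Then $H'(A', B', B'')$ is complete $3$-partite with all color classes of size $\eta|A|/2$, as required; if one wants $A'$ to have size exactly $\eta|A|/2$ as well, truncate it too.

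The only genuine obstacle is bookkeeping on the exponents: one must check that after applying Lemma \ref{subsetsPHP} the surviving pair-set $P'$ is still large enough — superpolynomial in $|B|$ — to feed into Corollary \ref{hyperKST_corr} and extract a $K(t,t)$ with $t$ exceeding the \emph{linear}-in-$m$ quantity $\eta|A|/2$. This works because $|B|$ is doubly exponential in $m$ only through $m^2$ in the exponent, and the losses in Lemma \ref{subsetsPHP} (a factor $2^{-c_1 m}$, negligible against $2^{2m^2}$) and in Corollary \ref{hyperKST_corr} (which costs only a $\log$) are both dominated by the gap between $m^2$ and $m$. Everything else is a direct transcription of the proof of Lemma \ref{3partVolArg}.
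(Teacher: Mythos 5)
Your overall plan coincides with the paper's: form the auxiliary bipartite graph $G_1(A,P)$ with $P={B\choose 2}$, apply Lemma \ref{subsetsPHP} to get a complete bipartite pair $(A',P')$, then view $P'$ as a graph on the vertex set $B$ and extract a complete bipartite graph $(B',B'')$ there. However, the quantitative step where you hand $P'$ to Corollary \ref{hyperKST_corr} fails as written. Lemma \ref{subsetsPHP} only gives $|P'|\geq \frac{\eta}{2}\cdot\frac{|P|}{2^{|A|}}=\frac{\eta}{2}{|B|\choose 2}2^{-c_1 m}$. Since trivially $|P'|\leq{|B|\choose 2}$, your claim that $|P'|\geq|B|^{10}$ (``superpolynomial in $|B|$'') is impossible, and the graph $\Gamma$ on $B$ with edge set $P'$ has density only about $\frac{\eta}{2}2^{-c_1 m}$, which tends to $0$ as $m\to\infty$; it is not $\eta'$-dense for any fixed $\eta'<1$. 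So Corollary \ref{hyperKST_corr}, which needs a fixed positive density, does not apply, and the side length $t\approx\eta'\log|B|\approx m^2$ you claim is stronger than these tools can deliver: the loss $2^{-c_1 m}$ equals $|B|^{-\Theta(1/m)}$ because $|B|\geq c_2 2^{m^2}$, and an edge count of the form $|B|^{2-\Theta(1/m)}$ only guarantees a complete bipartite graph with sides of size $\Theta(m)$, not $\Theta(m^2)$.

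The gap is exactly at the point you dismissed as ``bookkeeping,'' and the repair is the paper's argument: write the edge count in the form required by Lemma \ref{hyperKST} with $r=2$, namely $|P'|\geq|B|^{2-2/(\eta|A|)}$, which follows from $\frac{\eta}{2}{|B|\choose 2}2^{-c_1 m}$ since $2^{c_1 m}\leq(|B|/c_2)^{c_1/m}$ and $c_1/m\leq 2/(\eta c_1 m)=2/(\eta|A|)$ for $\eta,c_1<1$ (with the constants absorbed for large $m$). Then Lemma \ref{hyperKST} applied to $\Gamma$ directly yields a complete bipartite graph with color classes $B',B''$ of size $l=\eta|A|/2$, which is all the lemma asks for; this side length is linear in $m$, matching what the exponent $2-\Theta(1/m)$ actually buys. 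With that substitution your proof becomes the paper's proof; without it, the step extracting $(B',B'')$ is unjustified.
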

\begin{proof}
First consider the auxiliary bipartite graph $G_1(A,P)$, where $P={B\choose 2}$ and a vertex $a\in A$ is connected to a pair $(b_1,b_2) \in P$ if $(a,b_1,b_2)$ is an edge of $H$. Applying Lemma \ref{subsetsPHP} on $G_1$ we get a complete bipartite graph $(A',P')$ in $G_1$ with $A' \subset A$ and $P' \subset P$ such that $|A'| \geq \eta|A|/2$ and  $$|P'|\geq \frac{\eta}{2} \cdot \frac{|P|}{2^{c_1m}}\geq \frac{\eta}{5}\cdot \frac{|B|^2}{2^{c_1m}} \geq  \frac{\eta}{5} \cdot \frac{|B|^2}{ \left(\frac{|B|}{c_2}\right)^{c_1/m}} = \frac{c_2^2\eta}{5} \left(\frac{|B|}{c_2}\right)^{2-c_1/m} \geq |B|^{2-2/\eta|A|}$$
where the last inequality follows when $m$ is large and $\eta$, $c_1$ and $c_2$ are small constants. 

\vskip 4pt Now construct an auxiliary graph $G_2$ where $V(G_2)=B$ and edges of $G_2$ corresponds to pairs in $P'$. Since $|E(G_2) \geq |B|^{2-2/\eta|A|}$, applying lemma \ref{hyperKST} on $G_2$ (for $r=2$) we get a complete bipartite graph with color classes $B'$ and $B''$ each of size $\eta|A|/2$. Clearly $A'$, $B'$, and $B''$ corresponds to color classes of a complete $3$-partite $3$-graph in $H$ as in the statement of the fact. \hfill{} 
\end{proof}
%$$|P'|\geq \frac{\eta|B|^2/5}{2^{c_1m}} = \frac{\eta}{5} \frac{(c_22^{m^2})^2}{2^{c_1m}} = \frac{c_2^2\eta}{5} 2^{m^2(2-c_1/m)} = \frac{c_2^2\eta}{5} \left(\frac{|B|}{c_2}\right)^{2-c_1/m} \geq |B|^{2-2/\eta|A|}$$ 

%$$|P'|\geq \frac{\eta}{2} \cdot \frac{|P|}{2^{c_1m}}\geq \frac{\eta|B|^2/5}{2^{c_1m}} \geq \frac{\eta}{5} \frac{(c_22^{m^2})^2}{2^{c_1m}} = \frac{c_2^2\eta}{5}\cdot 2^{m^2(2-c_1/m)} = \frac{c_2^2\eta}{5} \left(\frac{|B|}{c_2}\right)^{2-c_1/m} \geq |B|^{2-2/\eta|A|}$$ 

%$$|P'|\geq \frac{\eta}{2} \cdot \frac{|P|}{2^{c_1m}}\geq \frac{\eta}{5}\cdot \frac{|B|^2}{2^{c_1m}} = \frac{\eta}{5} \frac{|B|^2}{ \left(\frac{c_2 2^{m^2}}{c_2}\right)^{c_1/m}} \geq  \frac{\eta}{5} \frac{|B|^2}{ \left(\frac{|B|}{c_2}\right)^{c_1/m}} = \frac{c_2^2\eta}{5} \left(\frac{|B|}{c_2}\right)^{2-c_1/m} \geq |B|^{2-2/\eta|A|}$$

We also use the following simple facts about graphs. 

\begin{lemma}\label{folk_subgraph}
Any graph on $n$ vertices with $m$ edges has a subgraph of minimum degree $m/n$.
\end{lemma}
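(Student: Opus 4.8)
\textbf{Proof proposal for Lemma \ref{folk_subgraph}.}

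The plan is to iteratively delete low-degree vertices and argue that the process must terminate before the graph is emptied. Set $d = m/n$. I would build a decreasing sequence of graphs $G = G_0 \supseteq G_1 \supseteq \cdots$ as follows: given $G_i$, if $G_i$ has a vertex of degree strictly less than $d$, delete one such vertex (and its incident edges) to obtain $G_{i+1}$; otherwise stop and output $G_i$. This process clearly terminates, and if it stops at $G_i$ with $G_i$ nonempty then $G_i$ has minimum degree at least $d = m/n$, as required. So the only thing to rule out is that the process deletes all $n$ vertices.

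The key step is a counting argument on the deleted edges. Suppose for contradiction that the process runs for $n$ steps, deleting vertices $v_1, v_2, \ldots, v_n$ in order, where $v_j$ is the vertex deleted from $G_{j-1}$. At the moment $v_j$ is deleted, it has degree less than $d$ in $G_{j-1}$, so strictly fewer than $d$ edges are removed at step $j$. Every edge of $G$ is removed at exactly one step (the first step at which one of its endpoints is deleted), so summing over all $n$ steps gives $m = |E(G)| < n \cdot d = n \cdot (m/n) = m$, a contradiction. Hence the process must halt while the current graph still has at least one vertex, and that graph is the desired subgraph.

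I expect no real obstacle here; the only minor care needed is the strict-versus-weak inequality bookkeeping. Since each deletion step removes strictly fewer than $m/n$ edges, $n$ such steps would remove strictly fewer than $m$ edges in total, which cannot account for all $m$ edges of $G$; this is what forces termination at a nonempty subgraph with every degree at least $m/n$. (If one prefers, the same argument shows the output subgraph is nonempty even when $m/n$ is an integer, because the inequality at each deleted vertex is strict.)
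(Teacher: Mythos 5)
Your proof is correct. The paper states this lemma as a folklore fact and gives no proof of its own, so there is nothing to compare against; your vertex-deletion argument (repeatedly remove a vertex of degree strictly below $m/n$, and observe that $n$ such deletions would account for strictly fewer than $n\cdot(m/n)=m$ edges, contradicting that every edge is removed at the first deletion of one of its endpoints) is the standard proof, and your bookkeeping of the strict inequalities, including the degenerate case $m=0$, is sound.
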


\begin{lemma}\label{folk_matching}
Any graph on $n$ vertices has a matching of size $\min\{\delta(G),\lfloor\frac{n}{2}\rfloor\}$.
\end{lemma}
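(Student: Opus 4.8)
The plan is to take a maximum matching $M$ of $G$ and argue by contradiction, so set $t=|M|$. If $t=\lfloor n/2\rfloor$ there is nothing to prove, so I would assume $t\le\lfloor n/2\rfloor-1$. Then $2t\le 2\lfloor n/2\rfloor-2\le n-2$, so $G$ has at least $n-2t\ge 2$ vertices uncovered by $M$; fix two distinct uncovered vertices $x$ and $y$. It now suffices to show $t\ge\delta(G)$, since then $|M|=t\ge\min\{\delta(G),\lfloor n/2\rfloor\}$ in every case.

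By maximality of $M$ there is no edge joining two uncovered vertices, and in particular every neighbor of $x$ and every neighbor of $y$ is covered by $M$ (otherwise one could enlarge $M$). Let $W=N(x)$ and let $W'=\{M(w):w\in W\}$ be the set of $M$-partners of the neighbors of $x$. Since $M$ is a matching, $w\mapsto M(w)$ is injective, so $|W'|=|W|=\deg(x)\ge\delta(G)$. The key step is the claim that $W'\cap N(y)=\emptyset$: if some vertex $z$ lay in this intersection, then $z=M(w)$ for some $w\in W=N(x)$, the four vertices $x,w,z,y$ would be pairwise distinct (the uncovered vertices $x,y$ differ from the covered vertices $w,z$, and $x\neq y$ by choice), and $x\,w\,z\,y$ would be an $M$-augmenting path, since $xw\notin M$, $wz\in M$, and $zy\notin M$ — contradicting the maximality of $M$.

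Granting the claim, $N(y)$ is contained in $V(M)\setminus W'$, a set of size $2t-|W'|$, whence $\delta(G)\le\deg(y)=|N(y)|\le 2t-|W'|\le 2t-\delta(G)$, which forces $\delta(G)\le t=|M|$. Combining this with the trivial case $t=\lfloor n/2\rfloor$ gives $|M|\ge\min\{\delta(G),\lfloor n/2\rfloor\}$, which is the assertion.

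I do not expect a serious obstacle here; the argument is essentially the standard augmenting-path argument. The only points needing a little care are ensuring that there really are two \emph{distinct} uncovered vertices (this is why one wants the strict inequality $t<\lfloor n/2\rfloor$, which yields $n-2t\ge 2$ rather than merely $\ge 1$), and verifying that the four vertices $x,w,z,y$ of the putative augmenting path are pairwise distinct so that it is genuinely augmenting.
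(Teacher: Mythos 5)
Your argument is correct: the augmenting-path step is sound (the four vertices $x,w,z,y$ are indeed distinct, since $x,y$ are uncovered while $w,z$ are covered, and $w\neq z$ as endpoints of a matching edge), and the counting $\delta(G)\le\deg(y)\le 2t-|W'|\le 2t-\delta(G)$ correctly yields $\delta(G)\le|M|$ when $|M|<\lfloor n/2\rfloor$. The paper states this lemma as a folklore fact and gives no proof at all, so there is nothing to compare against; your proof is the standard one and adequately fills that omission.
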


%first apply Fact \ref{denseSubgraph} to get a subset of pairs of vertices in $B$, $P_1\subset {B\choose 2}$, such that every pair in $P_1$ makes edges with at least $\eta|A|/2$ vertices in $A$ and $|P_1|\geq \eta {|B|\choose 2}/2 \geq \eta |B|^2/5$. Next we find a $P_2\subset P_1$, such that all pairs in $P_2$ make edges with the same subset of $A$ (say $A'\subset A$). By Fact \ref{subsetsPHP} we have

%
%\vskip 5pt \noindent Finally we use the following lemma from \cite{HPS_PM_3u_vertDeg}.
%\begin{lemma}\label{absorbLemma}(Absorbing Lemma) For every $\eta>0$, there is an integer $n_0 = n_0(\eta)$ such that if $H$ is a $3$-graph on  $n\geq n_0$ vertices with $\delta_1(H)\geq \left(1/2+2\eta\right){n\choose 2}$, then there exist a matching $M$ in $H$ of size $|M|\leq \eta^3n$ such that for every set $W\subset V\setminus V(M)$ of size at size at most $\eta^6 n\geq |W|\in 3\mathbb{Z}$, there exists a matching covering exactly the vertices in $V(M)\cup W$.
%\end{lemma}

\section{Proof of Theorem \ref{optCoverTheorem}}\label{non_ext_case}

Let $H$ be a $3$-uniform hypergraph on $n$ vertices where $n$ is sufficiently large and \beq\label{minDeg_nonext} \delta_1(H) \geq \left(\frac{5}{9} - 10\eta\right){n\choose 2}.\eeq
In $H$ we will find an almost perfect matching (covering at least $(1-\eta^2)n$ vertices). In fact, we will prove a much stronger result. We are going to build a cover ${\cal T} = \{T_{1},T_{2},\ldots$\} where each $T_i$ is a disjoint complete $3$-partite $3$-graph in $H$. These complete $3$-partite $3$-graphs will be balanced and will be of the same size. We refer to them as tripartite graphs. We say that such a cover is optimal if it covers at least $(1-\eta^2)n$ vertices. We will show that either we can find an optimal cover or $H$ is $\alpha$-extremal. It is easy to see that such an optimal cover readily gives us a matching in $H$ that leaves out at most $\eta^2 n$ vertices. 

\begin{proof}[Proof of Theorem \ref{optCoverTheorem}]

We begin with a cover ${\cal T}$ obtained by repeatedly applying Lemma \ref{hyperKST} in the remaining part of $H$ as long as there are at least $\eta^2 n$ vertices left and the condition of Lemma \ref{hyperKST} is satisfied, to get disjoint $K^{(3)}(t)$'s where $t = \eta\sqrt{\log (\eta^2n)}$. Note that by Lemma \ref{hyperKST} we can find larger tripartite graphs in $H$ (at least initially) but since we want all tripartite graphs to be of the same size we find all tripartite graphs of size $3t$. 

Identify by ${\cal T}$ the set of tripartite graphs in the cover and let $V({\cal T})$ be the union of vertices in the tripartite graphs in ${\cal T}$.  We refer to $|V({\cal T})|$ as size of the cover and to a subset of ${\cal T}$ as a subcover in ${\cal T}$. Let ${\cal I} = V(H)\setminus V({\cal T})$ be the set of remaining vertices. If ${\cal T}$ is not an optimal cover, then since we cannot apply Lemma \ref{hyperKST} in  $H|_{\cal I}$ (with parameters $\eta$ to get another $K_3(t)$), we must have that  $|{\cal I}| > \eta^2 n$ and \beq \label{denI} d_3(\cal{I}) < \eta. \eeq 

\noindent Note that from (\ref{minDeg_nonext}) and (\ref{denI}) we immediately get that $|V({\cal T})| > \eta n$. We show that if $H$ is $\alpha$-non-extremal and ${\cal T}$ is not optimal then using the iterative procedure outlined below we can significantly increase the size of our cover (by at least $\eta^4n$ vertices).  After every iteration all tripartite graphs in ${\cal T}$ will be of the same size. Furthermore, all tripartite graphs in ${\cal T}$ will be balanced and if the size of a color class in the tripartite graphs at a given iteration is $t$, then after the iteration it will be $\frac{\eta}{4}\log t$. We take $n$ to be sufficiently large so that till the end of the procedure the size of each color class is large enough for Lemma \ref{3partVolArg} and Lemma \ref{subsetPHP_hyper} to be applicable. 

\vskip6pt \noindent Let $T_i = (V_1^i,V_2^i,V_3^i)$ be a tripartite graph in ${\cal T}$. For $1\leq l,k\leq 3$, we say that $T_i$ is {\em $k$-sided}, if $d_3\left(V_l^i,{{\cal I}\choose 2}\right) \geq 2\eta$, for $k$ color classes $V_l^i$ of $T_i$. We will show that most of the tripartite graphs in ${\cal T}$ are at most $1$-sided or we can significantly increase the size of our cover.

%We say that a color class, $V_l^i$ of $T_i$, $(1\leq l\leq 3)$, is {\em connected to} ${\cal I}$, if $d_3(V_l^i,{{\cal I}\choose 2}) \geq 2\eta$. For $1\leq k\leq 3$, we say that $T_i$ is $k$-sided, if $k$ color classes of $T_i$ are connected to ${\cal I}$. We will show that most of the tripartite graphs in ${\cal T}$ are at most $1$-sided or we can significantly increase the size of our cover.{\color{red} remove the definition of connected, just use $k$ sided}

\begin{claim}\label{few_2sided}
If the number of vertices in the at least $2$-sided tripartite graphs in ${\cal T}$ is more than $\eta |V({\cal T})|$, then we can increase the size of ${\cal T}$ by at least $\eta^3 n/8$ vertices, such that all tripartite graphs in the cover are balanced and are of the same size. 
\end{claim}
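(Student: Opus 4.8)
The plan is to show that if the at-least-$2$-sided tripartite graphs occupy more than $\eta|V(\mathcal{T})|$ vertices, then many of them share a common "link structure" into $\mathcal I$ that we can exploit. First I would observe that each at-least-$2$-sided $T_i$ has two color classes, say $V_1^i$ and $V_2^i$, with $d_3(V_l^i, \binom{\mathcal I}{2}) \ge 2\eta$ for $l=1,2$. The idea is to pair up such tripartite graphs: take two of them, $T_i$ and $T_j$, and use the dense pair structure together with $\mathcal I$ to form new, larger tripartite graphs that absorb vertices of $\mathcal I$. Concretely, from $T_i$ I would take the color class $V_1^i$ (of size $t$) together with $\binom{\mathcal I}{2}$: since $|\mathcal I| > \eta^2 n$ is exponentially larger than $t$ (recall $t = \eta\sqrt{\log(\eta^2 n)}$, so $|\mathcal I| \ge c_2 2^{m^2}$ with $m \approx t$), Lemma \ref{subsetPHP_hyper} applies and yields a complete $3$-partite $3$-graph with color classes $A' \subset V_1^i$ and $B', B'' \subset \mathcal I$ each of size $\eta t/2$. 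This produces a new balanced tripartite graph, two of whose classes lie in $\mathcal I$, thereby adding $\eta t$ fresh vertices to the cover at the cost of at most $t$ old vertices, for a net gain.

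The key counting step is then to do this for enough of the at-least-$2$-sided tripartite graphs simultaneously, using disjoint portions of $\mathcal I$. Since each application consumes only $\eta t$ vertices of $\mathcal I$ and $|\mathcal I| > \eta^2 n$ is large, while each at-least-$2$-sided $T_i$ supplies a gain of roughly $\eta t$ net vertices (we lose $t - \eta t/2 < t$ from $V_1^i$ but could also recycle the remaining two classes $V_2^i, V_3^i$ of $T_i$, or simply leave them as a smaller-but-still-balanced tripartite graph — here the requirement that all tripartite graphs end up the same size forces us to re-balance, which I would handle by shrinking every tripartite graph in the cover down to the common new size $\eta t/2$ or $\frac{\eta}{4}\log t$ as dictated by the iteration scheme). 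Counting: the at-least-$2$-sided graphs contain more than $\eta|V(\mathcal T)| > \eta \cdot \eta n = \eta^2 n$ vertices, hence more than $\eta^2 n / (3t)$ of them; each contributes a net gain bounded below by a constant fraction of $\eta t$, so the total gain is at least a constant times $\eta^2 n \cdot \eta = \eta^3 n$, and bookkeeping the constants gives the claimed $\eta^3 n / 8$.

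The main obstacle I anticipate is the re-balancing and size-uniformity requirement: after splicing in pieces of $\mathcal I$ via Lemma \ref{subsetPHP_hyper}, the new tripartite graphs have color-class size $\eta t / 2$, which differs from both the original size $t$ and from the target size $\frac{\eta}{4}\log t$ of the iteration. I would resolve this by first passing to the new size on the augmented cover uniformly — every tripartite graph (augmented or not) gets trimmed to a common size — and checking that the trimming loses only a lower-order fraction of vertices, so the net gain of $\eta^3 n/8$ survives. A secondary technical point is verifying the hypotheses of Lemma \ref{subsetPHP_hyper} (that $|\mathcal I|$ is at least $c_2 2^{m^2}$ for the relevant $m$ comparable to the current color-class size): this follows because at every iteration $|\mathcal I| > \eta^2 n$ while the color-class size is only polylogarithmic in $n$, and $n$ is taken large enough — exactly the condition the proof has already reserved for applicability of Lemmas \ref{3partVolArg} and \ref{subsetPHP_hyper}.
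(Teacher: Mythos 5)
Your overall mechanism (apply Lemma \ref{subsetPHP_hyper} to a dense color class against ${{\cal I}\choose 2}$ to create new tripartite graphs whose two other classes lie in ${\cal I}$, then re-split everything to a common size) is the same device the paper uses, but your bookkeeping has a genuine gap: you exploit only \emph{one} of the two dense color classes, and then the rebalancing cost cancels the gain. Concretely, extracting $A'\subset V_1^i$ of size $\eta t/2$ together with $B',B''\subset {\cal I}$ brings in $\eta t$ fresh vertices, but the leftover of $T_i$ now has classes of sizes $t-\eta t/2$, $t$, $t$; since the cover must consist of balanced complete tripartite graphs of equal size, from this leftover you can retain only balanced pieces limited by the smallest class, so at least $\eta t/2$ vertices from each of $V_2^i$ and $V_3^i$, i.e. $\eta t$ vertices in total, must be discarded back to ${\cal I}$. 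Net gain: zero. (Your own accounting --- ``adding $\eta t$ fresh vertices \ldots at the cost of at most $t$ old vertices, for a net gain'' --- already signals the problem, since $\eta t<t$; and the fallback of keeping $V_2^i,V_3^i$ ``as a smaller-but-still-balanced tripartite graph'' does not parse, because two color classes alone form a bipartite, not a tripartite, structure.) This is exactly where the $2$-sided hypothesis must be used and where the paper's proof differs: it applies Lemma \ref{subsetPHP_hyper} to \emph{both} dense classes $V_1^i$ and $V_2^i$, producing two disjoint new tripartite graphs $(U_1^i,A_1^i,B_1^i)$ and $(U_2^i,A_2^i,B_2^i)$ that absorb $2\eta t$ fresh vertices of ${\cal I}$, after which only $V_3^i$ is out of balance and only $\eta t/2$ of its vertices need to be discarded, for a strictly positive net gain of $3\eta t/2$ per processed tripartite graph; processing the graphs of ${\cal T}'$ one by one until $\eta|{\cal I}|/8\geq \eta^3 n/8$ vertices of ${\cal I}$ have been used (which also keeps the density hypothesis of Lemma \ref{subsetPHP_hyper} valid throughout, since the initial density is $2\eta$) gives the claimed increase.

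A secondary remark: your opening suggestion to ``pair up'' $T_i$ with another tripartite graph $T_j$ plays no role in your argument and is not available here --- pairing via link graphs is the mechanism of Claim \ref{expanding_byVolArg}, which needs connectivity between color classes of two tripartite graphs, not the $2$-sidedness into ${{\cal I}\choose 2}$ assumed in this claim. If you replace the one-sided extraction by the symmetric two-sided extraction above, the rest of your outline (disjointness of the pieces taken from ${\cal I}$, persistence of the density condition because only a small fraction of ${\cal I}$ is consumed, and the final uniform splitting of all tripartite graphs down to class size $\eta t/2$) is sound and matches the paper.
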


We repeatedly use Claim \ref{few_2sided} to increase the size of our cover as long as the condition of Claim \ref{few_2sided} is satisfied. Hence in at most $8\eta^{-3}$ iterations we either get an optimal cover or the number of vertices in the at least $2$-sided tripartite graphs is reduced to at most $\eta |V({\cal T})|$. For simplicity we still denote the cover by ${\cal T}$ and ${\cal I}  = V(H)\setminus V({\cal T})$. The size of a color class in each tripartite graph is still denoted by $t$. 

Suppose that ${\cal T}$ is not optimal and we cannot apply Claim \ref{few_2sided}, then the number of $2$-sided vertices in ${\cal T}$ is at most $\eta |V({\cal T})|$. Note that if  $T_i \in {\cal T}$ is at most $1$-sided, then by definition $d\left(T_i, {{\cal I}\choose 2}\right) \leq \left(1/3 + 4\eta\right)$. This together with the bound on the number of $2$-sided tripartite graphs gives us \beq\label{edges_T_to_I} e_3\left(V({\cal T}), {{\cal I}\choose 2}\right) \leq \left(\frac{1}{3}+5\eta\right)|V({\cal T})|{|{\cal I}|\choose 2}. \eeq

For the sum of the degrees of vertices in ${\cal I}$, (\ref{minDeg_nonext}) gives us \beq\label{lower}\sum_{v\in {\cal I}} deg_3(v) \geq |{\cal I}|\left(\frac{5}{9} -10\eta \right){n\choose 2}.\eeq On the other hand considering the number of times each edge is counted, by (\ref{denI}) and  (\ref{edges_T_to_I}) 
\begin{align} 
\sum_{v\in {\cal I}} deg_3(v) &= e_3\left({\cal I},{V({\cal T})\choose 2}\right) + 2\cdot e_3\left(V({\cal T}), {{\cal I}\choose 2}\right) + 3\cdot e_3\left(H|_{\cal I}\right)\notag{}\\
&\leq e_3\left({\cal I},{V({\cal T})\choose 2}\right) + 2\cdot \left(\frac{1}{3}+5\eta\right)|V({\cal T})|{|{\cal I}|\choose 2} + 3\cdot \eta{|{\cal I}|\choose 3}\label{upper}
\end{align}
 Combining (\ref{lower}) and (\ref{upper}) we get 

\begin{align*} 
e_3\left({\cal I},{V({\cal T})\choose 2}\right) &\geq |{\cal I}|\;\left(\frac{5}{9} -10\eta \right){n\choose 2} - 2\cdot \left(\frac{1}{3}+5\eta\right)|V({\cal T})|{|{\cal I}|\choose 2} - 3\cdot \eta{|{\cal I}|\choose 3}\\
&\geq |{\cal I}|\left( \left(\frac{5}{9} -10\eta \right){n\choose 2} - 2\cdot \left(\frac{1}{3}+5\eta\right)|V({\cal T})|\frac{|{\cal I}|}{2} - 3\cdot \eta{|{\cal I}|\choose 2}\right)\\
&\geq |{\cal I}| \left(\frac{5}{9} -38\eta \right){|V({\cal T})|\choose 2}. 
\end{align*}

In (\ref{lower}) and (\ref{upper}) using $e_3\left({\cal I},{V({\cal T})\choose 2}\right) \leq |{\cal I}|\cdot {|V({\cal T})|\choose 2}$ and the fact that $\eta$ is a small constant we get $|V({\cal T})| \geq n/2$.  

For a vertex $v\in V(H)$, consider the edges that $v$ makes with pairs of vertices within a tripartite graph. Since the size of a tripartite graph is $3t \leq 3\eta \sqrt{\log (\eta^2n)}$, the  number of pairs of vertices of any tripartite graph $T_i \in \cal{T}$ is $O(\log n)$. Hence the total number of pairs of vertices within the tripartite graphs in ${\cal T}$ is $O(n\log n) = o{n\choose 2}$. We ignore the at most $o{n\choose 3}$ edges that use more than one vertex from a tripartite graph. By the above observation we still have \beq \label{min_cross_density} d_3\left({\cal I},{V({\cal T})\choose 2}\right)\geq \left(\frac{5}{9}-40\eta\right) \eeq 

%and for any vertex $v \in V(H)$ we only consider the edges that $v$ makes with pairs of vertices $(x,y)$ such that $x \in V(T_i),\;y \in V(T_j)$ and $i\neq j$.

\noindent Let $T_i = (V_1^i,V_2^i,V_3^i)$ and $T_j = (V_1^j,V_2^j,V_3^j)$ be two tripartite graphs in ${\cal T}$. We say that $\cal{I}$ is {\em connected} to a pair of color classes $(V_p^i,V_q^j)$, $1\leq p,q \leq 3$, if $d_3({\cal I},(V_p^i \times V_q^j)) \geq 2\eta$. For $(T_i,T_j) \in {{\cal T}\choose 2}$ we define the {\em link graph}, $L_{ij}$, to be a balanced bipartite graph, where the vertex set of each color class of $L_{ij}$ corresponds to the color classes in $T_i$ and $T_j$. A pair of vertices is an edge in $L_{ij}$ iff ${\cal I}$ is connected to the corresponding pair of color classes. 
%{\color{red} this may not be needed now when link graph is defined.} For $k\in \{1,\ldots, 9\}$ we say that ${\cal I}$ is {\em $k$-connected} to a pair of tripartite graphs $(T_i,T_j) \in {{\cal T} \choose 2}$ if ${\cal I}$ is {\em connected} to $k$ pairs of color classes $(V_p^i,V_q^j)$, $1\leq p,q \leq 3$. Denote by $s({\cal I},(T_i,T_j))$ the largest value of $k$ for which ${\cal I}$ is {\em $k$-connected} to $(T_i,T_j)$. We say that ${\cal I}$ is $(\eta, k)${\em-connected} to ${\cal T}$ if $s({\cal I},(T_i,T_j)) \geq k$ for at least $\eta{|{\cal T}| \choose 2}$ pairs $(T_i,T_j) \in {{\cal T} \choose 2}$.
We will use the following fact from \cite{HPS_PM_3u_vertDeg} for the analysis of the link graph. 
\begin{figure}[h!] 
\centering
\includegraphics[scale=0.80]{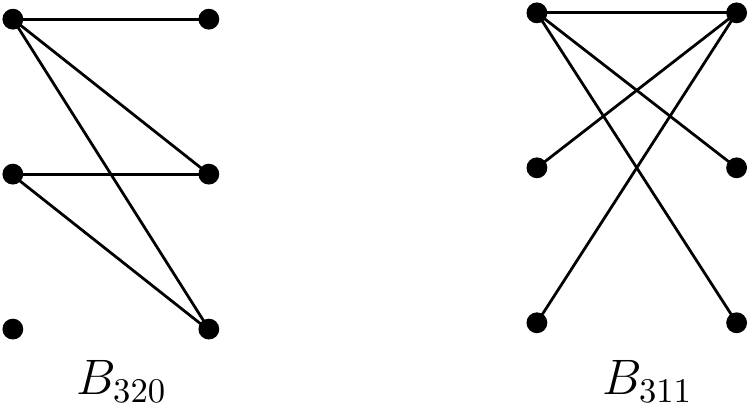} 
\caption{\footnotesize{The balanced bipartite graphs on $6$ vertices, $B_{320}$ and $B_{311}$}}
\label{3x3bipgraphs}
\end{figure}

\begin{fact}\label{classification}
Let $B_{320}$ and $B_{311}$ be as defined in Figure \ref{3x3bipgraphs}. If $B$ is a balanced bipartite graph on $6$ vertices with $|E(B)|\geq 5$, then at least one of the following must be true. 
\begin{itemize}\setlength{\itemsep-2pt}
\item $B$ has a perfect matching.
\item $B$ contains $B_{320}$ as a subgraph.
\item $B$ is isomorphic to $B_{311}$. 
\end{itemize}
\end{fact}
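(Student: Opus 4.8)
The plan is to prove Fact~\ref{classification} by a direct case analysis on the degree sequence of the balanced bipartite graph $B = (X,Y)$ with $|X|=|Y|=3$ and $|E(B)| \geq 5$. Write $X = \{x_1,x_2,x_3\}$ and $Y=\{y_1,y_2,y_3\}$ and order the vertices of $X$ so that $\deg(x_1)\geq \deg(x_2)\geq \deg(x_3)$. Since $|E(B)|\geq 5$ out of a maximum of $9$, the complement $\bar B$ (within $K_{3,3}$) has at most $4$ edges; it will often be cleaner to reason about which (at most four) non-edges are present. The first step is to dispose of the case where $\delta(B)\geq 1$ on both sides and there is no isolated-like obstruction: by the defect version of Hall's theorem, $B$ fails to have a perfect matching only if some $S\subseteq X$ has $|N(S)| < |S|$. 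With $|X|=3$ this forces either (i) a single vertex $x_i$ with $N(x_i)=\emptyset$, or (ii) two vertices $x_i,x_j$ with $|N(\{x_i,x_j\})|\leq 1$, or the symmetric statements with $X$ and $Y$ swapped.

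Next I would handle these Hall-violating configurations one at a time. In case (i), if some $x_i$ is isolated then $B$ has at most $6$ edges all incident to $X\setminus\{x_i\}$, and $|E(B)|\geq 5$ forces at least $5$ of those $6$ possible edges to be present; one then checks directly that the resulting graph contains $B_{320}$ (two vertices of $X$ of degree $3$, one of degree $0$ — which is exactly the degree pattern $B_{320}$ is built to capture) as a subgraph, or is isomorphic to it. The symmetric sub-case with an isolated vertex in $Y$ is identical. In case (ii), if $\{x_i,x_j\}$ has at most one common-plus-separate neighbour, i.e. $|N(\{x_i,x_j\})|\leq 1$, then all edges except at most $1$ are incident to the remaining vertex $x_k$, so $B$ has at most $3 + 1 = 4$ edges, contradicting $|E(B)|\geq 5$; hence case (ii) cannot occur, and likewise its $Y$-symmetric version. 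This leaves only the case where Hall's condition holds, in which $B$ has a perfect matching, and we are done.

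The main obstacle — really the only place where care is needed — is the bookkeeping in case (i): verifying that \emph{every} balanced bipartite graph on $6$ vertices with an isolated vertex on one side and at least $5$ edges either contains $B_{320}$ or is isomorphic to $B_{311}$. Here one notes that with $x_3$ isolated, the bipartite graph on $\{x_1,x_2\}$ versus $Y$ has $\geq 5$ edges out of $6$, so at most one non-edge; if there are zero non-edges the graph is $B_{320}$ itself, and if there is exactly one non-edge, say $x_2 y_3 \notin E(B)$, then $x_1$ still has degree $3$ and $x_2$ has degree $2$, and the degree sequence $(3,2,0)$ on $X$ together with degrees $(2,2,1)$ on $Y$ is precisely $B_{311}$ up to isomorphism — one should double-check this identification against Figure~\ref{3x3bipgraphs}. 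Since $B_{320}$ is defined so that a graph "contains $B_{320}$" exactly when two vertices of one side see all three vertices of the other side, this step is a short finite check rather than a genuine difficulty. All other branches of the analysis reduce immediately to an edge count contradicting $|E(B)|\geq 5$ or to Hall's condition holding.
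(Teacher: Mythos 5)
The paper itself does not prove Fact~\ref{classification} (it is quoted from \cite{HPS_PM_3u_vertDeg}), so your Hall-type analysis is a natural route, but as written it contains a genuine error precisely where the $B_{311}$ alternative should arise. In your case (ii) you assert that $|N(\{x_i,x_j\})|\le 1$ leaves at most one edge not incident to the third vertex $x_k$, hence at most $3+1=4$ edges, contradicting $|E(B)|\ge 5$. The count is wrong: if $N(\{x_i,x_j\})=\{y\}$, then both $x_iy$ and $x_jy$ may be present, so $B$ can have up to $3+2=5$ edges, and the $5$-edge configuration (namely $x_i,x_j$ each joined only to $y$, and $x_k$ joined to all of $Y$) is exactly the double star $B_{311}$, with degree sequence $(3,1,1)$ on each side. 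So case (ii) does occur, is not a contradiction, and is in fact the only source of the third alternative of the Fact; your argument as written would prove the false statement that every Hall-violating balanced bipartite graph with at least $5$ edges contains $B_{320}$ (note that $B_{311}$ does not contain $B_{320}$: apart from its two degree-$3$ vertices, all its degrees are $1$, so no side has a second vertex of degree at least $2$).

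Relatedly, in your detailed treatment of case (i) you identify the $5$-edge graph with degrees $(3,2,0)$ on $X$ and $(2,2,1)$ on $Y$ as $B_{311}$; it is in fact $B_{320}$ itself (this is what the subscript records — compare the paper's use of $B_{320}$ in Claim~\ref{expanding_byVolArg}, where the degree-$3$ and degree-$2$ vertices lie on one side, versus $B_{311}$ in Claim~\ref{noExpand_extremal}, which has a degree-$3$ vertex on each side). The mislabeling is harmless for case (i), since that graph does contain $B_{320}$ and the Fact allows this conclusion, but it is what let the missing $B_{311}$ configuration in case (ii) go unnoticed. With the corrected count in case (ii) (at most $5$ edges, equality forcing $B\cong B_{311}$, together with the mirror case where the two vertices lie in $Y$) and with $|N(X)|\le 2$ reduced to an isolated vertex of $Y$ and hence to case (i), the case analysis becomes complete and correct.
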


\begin{claim}\label{expanding_byVolArg}
If there are $\eta {|{\cal T}|\choose 2}$ pairs of tripartite graphs $(T_i,T_j)$ such that $L_{ij}$ has a perfect matching or contains a $B_{320}$, then we can increase the size of ${\cal T}$ by at least $\eta^3 n/8$ vertices such that all tripartite graphs in the cover are balanced and are of the same size. \end{claim}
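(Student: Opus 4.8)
\textbf{Proof plan for Claim \ref{expanding_byVolArg}.}

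The plan is to show that a perfect matching in a link graph $L_{ij}$ lets us recombine the six color classes of $T_i$ and $T_j$ into three new ``tripartite'' pieces that together with $\mathcal I$ form a larger cover, and that a $B_{320}$ in $L_{ij}$ does essentially the same thing using one extra class from $\mathcal I$. Concretely, suppose $L_{ij}$ has a perfect matching, say pairing $(V_p^i, V_{\sigma(p)}^j)$ for $p=1,2,3$. For each matched pair, ``connected'' means $d_3(\mathcal I,(V_p^i\times V_{\sigma(p)}^j))\geq 2\eta$; I would first pass to the auxiliary bipartite graph $G$ on vertex set $\mathcal I$ versus $V_p^i\times V_{\sigma(p)}^j$ (a vertex $z\in\mathcal I$ joined to $(a,b)$ when $\{z,a,b\}\in E(H)$) and invoke Lemma \ref{subsetsPHP}, noting $|V_p^i|=t=\eta\sqrt{\log(\eta^2 n)}$ so $|V_p^i\times V_{\sigma(p)}^j| = t^2 \approx \eta^2\log(\eta^2 n)$, which is of the form $c_1 m$ with $m$ growing, while $|\mathcal I|>\eta^2 n \geq c_2 2^{m}$ since $m = O(\log\log n)$. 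That produces a complete $3$-partite $3$-graph with color classes a subset of $\mathcal I$ together with subsets of $V_p^i$ and $V_{\sigma(p)}^j$, each of size at least roughly $\eta t/2$. Doing this for all three matched pairs (using three disjoint subsets of $\mathcal I$, which is fine since each subset eats only $\approx \eta t/2 = o(|\mathcal I|)$ vertices) yields three new complete tripartite graphs, each incorporating $\approx \eta t/2$ fresh vertices of $\mathcal I$.

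Next I would address uniformity and the global bookkeeping. After processing one pair $(T_i,T_j)$ the three new tripartite pieces have color-class size about $\eta t/2$, which is the target size $\frac{\eta}{4}\log t$ only up to constants — but since the whole iteration is designed to bring every color class down to $\frac{\eta}{4}\log t$, I would simply run the Lemma \ref{subsetsPHP} application so as to keep only subsets of size exactly $\frac{\eta}{4}\log t$ (discarding the surplus), exactly as in the proof of Lemma \ref{3partVolArg}; the discarded vertices of $T_i,T_j$ go back into $\mathcal I$, but they are a bounded fraction so this loses nothing. Summing over the $\eta\binom{|\mathcal T|}{2}$ good pairs: each such pair contributes at least $3\cdot\frac{\eta}{4}\log t \geq \eta t /2$ new vertices drawn from $\mathcal I$ (after accounting for the classes discarded from $T_i,T_j$, which are returned), but we must be careful not to double-count pairs sharing a tripartite graph. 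Selecting a sub-family of the good pairs that is a matching in the auxiliary graph on $\mathcal T$ — a graph with $\eta\binom{|\mathcal T|}{2}$ edges has a matching of size $\Omega(\eta|\mathcal T|)$ by Lemmas \ref{folk_subgraph} and \ref{folk_matching} — gives $\Omega(\eta|\mathcal T|)$ vertex-disjoint good pairs, each absorbing $\Omega(t)$ vertices of $\mathcal I$; since $|\mathcal T|\geq |V(\mathcal T)|/(3t) \geq \eta n/(3t)$ we absorb $\Omega(\eta^2 n)$ vertices, comfortably more than the claimed $\eta^3 n/8$ once constants are tracked.

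For the $B_{320}$ case: $B_{320}$ has a vertex of degree $3$ on each side joined to all of the other side, i.e. (after relabeling) $V_1^i$ is connected to $V_1^j,V_2^j,V_3^j$ and $V_1^j$ is connected to $V_1^i,V_2^i,V_3^i$, plus the remaining structure. The point is that $B_{320}$ contains two disjoint stars enough to build two tripartite pieces covering $\{V_1^i,V_2^i\}$ with parts of $T_j$ and $\{V_1^j,V_2^j\}$ with parts of $T_i$ via the edges $V_1^i V_1^j$, $V_1^i V_2^j$, $V_2^i V_1^j$ (and symmetrically), and then the single leftover class on each side (say $V_3^i$ and $V_3^j$) can be joined through a common pair class taken from $\mathcal I$ — again via Lemma \ref{subsetsPHP} and an application of Lemma \ref{subsetPHP_hyper} (or Lemma \ref{3partVolArg}) to convert a ``one vertex of $\mathcal T$, two vertices elsewhere'' density into a balanced tripartite piece. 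This still nets $\Omega(t)$ new $\mathcal I$-vertices per processed pair, so the same matching-in-auxiliary-graph argument applies.

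I expect the main obstacle to be not any single estimate but the simultaneous bookkeeping: ensuring that after the recombination every tripartite graph in the updated $\mathcal T$ is balanced and of the common size $\frac{\eta}{4}\log t$, that the vertices of $\mathcal I$ consumed by different good pairs are genuinely disjoint (handled by the $\Omega(\eta|\mathcal T|)$-matching selection plus the fact that each piece uses only $\frac{\eta}{4}\log t \ll |\mathcal I|/|\mathcal T|$ vertices of $\mathcal I$), and that the Erdős–Ko–Rado/Kővári–Sós–Turán type inputs (Lemmas \ref{subsetsPHP}, \ref{3partVolArg}, \ref{subsetPHP_hyper}) are applied with parameters in the valid range — this is exactly why $n$ was taken large enough ``till the end of the procedure'' in the setup. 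The $B_{320}$ case requires slightly more care than the perfect-matching case because one color class on each side is left unmatched and must be routed through $\mathcal I$, but Fact \ref{classification} guarantees enough edges ($\geq 5$) that this routing always exists.
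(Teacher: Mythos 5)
There are two genuine gaps. First, your quantitative accounting does not reach $\eta^3 n/8$. Lemma \ref{subsetsPHP} applied to the bipartite graph between ${\cal I}$ and $V_p^i\times V_{\sigma(p)}^j$ gives a complete bipartite graph against a \emph{set of pairs}, not against a product of subsets, so it does not by itself yield a complete $3$-partite $3$-graph; the honest conversion is exactly Lemma \ref{3partVolArg}, and it shrinks the color classes to size $\frac{\eta}{4}\log t$, not $\eta t/2$. Your inequality ``$3\cdot\frac{\eta}{4}\log t\ge \eta t/2$'' is false ($\log t\ll t$), and with only one extraction per good pair the total gain over the $\Omega(\eta|{\cal T}|)$ disjoint pairs is $O(\eta|{\cal T}|\log t)=O(\eta n\log t/t)=o(n)$, far short of $\eta^3 n/8$. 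The missing idea is to \emph{iterate inside each pair}: since connectedness means $d_3({\cal I},V_p^i\times V_q^j)\ge 2\eta$, the density stays above $\eta$ until about $\eta^2 t/2$ vertices have been deleted from each class, so one can keep extracting triples (or quadruples) of small tripartite graphs from the same pair until $\Theta(\eta^2 t)$ vertices of ${\cal I}$ have been absorbed per pair; only then does summing over the $\eta|{\cal T}|/3$ disjoint pairs (with $|V({\cal T})|\ge n/2$) give the claimed $\eta^3 n/8$.

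Second, the $B_{320}$ case is mis-handled. In $B_{320}$ the degrees on the $T_i$ side are $3,2,0$: the class $V_3^i$ is incident to \emph{no} edge of the link graph, i.e.\ there is no density guarantee between ${\cal I}$ and any pair involving $V_3^i$, so your plan to ``route $V_3^i$ and $V_3^j$ through a common class taken from ${\cal I}$'' has no support (and your description of $B_{320}$ as having a degree-$3$ vertex on each side is not what the graph is). The correct argument never builds a new piece on $V_3^i$: one uses four of the five edges, say $(V_1^i,V_1^j)$, $(V_1^i,V_2^j)$, $(V_2^i,V_2^j)$, $(V_2^i,V_3^j)$, and extracts four tripartite pieces with deliberately unequal sizes (two of class size $\frac{\eta\log t}{4}$ and two of $\frac{\eta\log t}{8}$) so that $T_j$ is depleted evenly across its three classes while $V_1^i$ and $V_2^i$ lose equal amounts; the resulting surplus in $V_3^i$ is then discarded back into ${\cal I}$, and since the new pieces consume at least $\eta^2 t$ vertices of ${\cal I}$ per pair, the net gain is still $\eta^2 t/2$ per pair. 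Without this size-balancing and the explicit discard from $V_3^i$, your recombination either breaks the balance of the leftover tripartite graphs or relies on edges of $L_{ij}$ that $B_{320}$ does not contain.
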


\vskip8pt

We can repeatedly use Claim \ref{expanding_byVolArg} to increase the size of our cover as long as the condition of Claim \ref{expanding_byVolArg} is satisfied. Hence in at most $8\eta^{-3}$ iterations we either get an optimal cover or we have that there are at most $\eta {|{\cal T}|\choose 2}$ pairs of tripartite graphs $(T_i,T_j)$ such that the link graph $L_{ij}$ has at least $5$ edges and contains either a perfect matching or $B_{320}$ as a subgraph.  For simplicity we still denote the cover by ${\cal T}$ and ${\cal I}  = V(H)\setminus V({\cal T})$. 
\vskip8pt

\noindent Assume that ${\cal T}$ is not optimal.  We show that if we cannot apply Claim  \ref{expanding_byVolArg}, then for most of the pairs of tripartite graphs in ${{\cal T}\choose 2}$, the link graph has exactly $5$ edges and is isomorphic to $B_{311}$. Indeed, if there are many ($\geq \sqrt{\eta}$-fraction) pairs of tripartite graphs $(T_i,T_j)$ for which $|E(L_{ij})| \leq 4$ then there is another set of at least $\eta$-fraction of pairs of tripartite graphs $(T_i,T_j)$ for which $|E(L_{ij})| \geq 6$. To see this let ${\cal P}_4 = \{ (T_i,T_j)\in {{\cal T}\choose 2} : |E(L_{ij})| \leq 4\}$ and ${\cal P}_6 = \{ (T_i,T_j)\in {{\cal T}\choose 2} : |E(L_{ij})| \geq 6\}$. Note that for any $(T_i,T_j)\in {\cal P}_4$ by definition $d_3({\cal I}, V(T_i)\times V(T_j)) \leq (4/9 + 10\eta)$. 
\vskip8pt

Now if $|{\cal P}_4| \geq\sqrt{\eta}{|{\cal T}| \choose 2}$ and $|{\cal P}_6| < {\eta}{|{\cal T}| \choose 2}$ then $$d_3\left({\cal I},{V({\cal T})\choose 2}\right)\leq \sqrt{\eta}\cdot \left(\frac{4}{9} + 10\eta\right) + \eta \cdot \frac{9}{9} + (1-\sqrt{\eta})\cdot \left(\frac{5}{9} + 8\eta\right)$$ a contradiction to (\ref{min_cross_density}). Therefore for at least $ (1-\sqrt{\eta}){|{\cal T}|\choose 2}$ pairs of tripartite graphs $(T_i,T_j)\in {{\cal T}\choose 2}$, $L_{ij}$ has at least $5$ edges. Since we cannot apply Claim \ref{expanding_byVolArg}, for at least  $(1-2\sqrt{\eta}) {|{\cal T}|\choose 2}$ of them the $L_{ij}$ is isomorphic to $B_{311}$ and $|E(L_{ij})| =5$.

%From (\ref{min_cross_Degree}) By Fact \ref{classification}  these pairs of tripartite graphs have either a perfect matching or contains a $B_{320}$ and hence we can apply  Claim \ref{expanding_byVolArg}. To see that such pairs of tripartite graphs exists, assume the contrary and we get $$\left(\left(\frac{4}{9} + 10\eta\right) \sqrt{\eta} + \left(\frac{5}{9} + 8\eta\right) (1-\sqrt{\eta}) + \eta \right){|V({\cal T})|\choose 2} < \left(\frac{5}{9} - 40\eta\right) {|V({\cal T})|\choose 2}$$ {\color{red} expand a little make two or 3 lines } a contradiction to (\ref{min_cross_Degree}). 

\begin{claim}\label{noExpand_extremal}
If there are at least $(1-2\sqrt{\eta}) {|{\cal T}|\choose 2}$ pairs of tripartite graphs $(T_i,T_j)$ such that $L_{ij}$ is isomorphic to $B_{311}$, then either 
\begin{itemize}\setlength{\itemsep=-3pt}
\item we can increase the size of ${\cal T}$ by at least $\eta^4 n$ vertices such that all tripartite graphs in the cover are balanced and are of the same size, or 
\item $H$ is $\alpha$-extremal. 
\end{itemize}
\end{claim}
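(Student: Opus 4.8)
The plan is to show that when almost all link graphs are isomorphic to $B_{311}$, the tripartite graphs in $\mathcal{T}$ can be globally $2$-colored — each color class of each $T_i$ labeled "$A$-type" or "$B$-type" — in a way that is consistent across almost all pairs, and that this labeling exhibits $H$ as close to the extremal configuration. First I would fix the structure of $B_{311}$: it is the balanced bipartite graph on color classes $\{x_1,x_2,x_3\}$ and $\{y_1,y_2,y_3\}$ with exactly one vertex on each side of degree $3$ (say $x_1$ and $y_1$) and a perfect matching missing on the degree-$\le 1$ vertices; concretely its edges are $x_1y_1,x_1y_2,x_1y_3,x_2y_1,x_3y_1$, so that $x_2,x_3$ are each adjacent only to $y_1$ and $y_2,y_3$ only to $x_1$. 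Thus in each pair $(T_i,T_j)$ with $L_{ij}\cong B_{311}$ there is a distinguished color class in $T_i$ (the one of degree $3$, which I would call the "$A$-class of $T_i$ relative to $T_j$") and likewise in $T_j$; the key combinatorial point is that $\mathcal{I}$ is connected to a pair $(V_p^i,V_q^j)$ essentially only when at least one of the two classes is the distinguished one.

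The main step is a consistency/majority argument. For each $T_i$, look over all $T_j$ with $L_{ij}\cong B_{311}$ and record which of the three color classes of $T_i$ is the distinguished (degree-$3$) one. I would argue that for almost all $T_i$ this choice is the same for a $(1-o(1))$-fraction of partners $T_j$ — otherwise one could find a triple, or a positive fraction of pairs, whose link graphs are mutually inconsistent in a way that forces $|E(L_{ij})|\ge 6$ or a perfect matching / $B_{320}$ for many pairs, contradicting either (\ref{min_cross_density}) or the hypothesis that Claim \ref{expanding_byVolArg} cannot be applied. Having fixed, for almost every $T_i$, a canonical "$A$-class" $A_i$ (one color class, of size $t$) and two "$B$-classes", I set $A = \bigcup A_i$ together with the few leftover vertices of $\mathcal{I}$ and exceptional tripartite graphs, and $B = V(H)\setminus A$. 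Then $|A|$ is about $\tfrac13|V(\mathcal{T})| + |\mathcal{I}|$; since $|V(\mathcal{T})|\ge n/2$ and $|\mathcal{I}|$ could be as large as a small constant fraction, I need to check $|B|\ge(\tfrac23-\alpha)n$ — this is where I expect to lean on the fact that if $|\mathcal{I}|$ were too large, the earlier density bound (\ref{min_cross_density}) combined with the $B_{311}$ structure already forces $d_3(\mathcal{I})$ to be larger than $\eta$, contradicting (\ref{denI}); alternatively the almost-perfect-matching option kicks in.

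Finally I would verify $d_3(B) < \alpha$. Inside $B$ the edges are: edges with all three vertices in distinct $B$-classes across (at most) three tripartite graphs, edges meeting $\mathcal{I}$, and a negligible $o(n^3)$ from pairs inside a single $T_i$ or from the $o(\sqrt\eta)$-fraction of bad pairs. For a typical pair $(T_i,T_j)\cong B_{311}$, $\mathcal{I}$ is \emph{not} connected to $(B\text{-class of }T_i,\ B\text{-class of }T_j)$, which by definition of "connected" means $d_3(\mathcal{I},(V_p^i\times V_q^j))<2\eta$ for those pairs; summing the analogous statement over all pairs of $B$-classes shows the edge density from $\mathcal{I}$ into $\binom{B}{2}$-type configurations, and likewise the density of edges entirely inside $B$, is $O(\eta)<\alpha$. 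I'd then either convert "$d_3(B)$ small" directly into the extremal conclusion, or, if that computation instead yields enough edges inside $B$ or from $\mathcal{I}$ into $B$, use Lemma \ref{3partVolArg} or Lemma \ref{subsetPHP_hyper} to extract a new tripartite graph of size $\tfrac{\eta}{4}\log t$ disjoint from $\mathcal{T}$, giving the first alternative (size increase by at least $\eta^4 n$ after absorbing leftover vertices). The hard part will be the global consistency argument — propagating the local $B_{311}$-labelings into a single coherent bipartition $(A,B)$ while controlling the $o(\sqrt\eta)$-fraction of inconsistent pairs and the possibly large set $\mathcal{I}$ — rather than the final density estimate, which is a routine accounting of $O(\eta)$ contributions.
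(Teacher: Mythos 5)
Your proposal misses the two ideas that actually carry the paper's proof, and as written both alternatives of the claim are out of reach. First, the mechanism for the cover-increase alternative is absent. The paper takes a near-perfect matching $\mathcal{P}_g$ of pairs $(T_i,T_j)$ with $L_{ij}\cong B_{311}$ (so each tripartite graph gets its distinguished class from a single partner, sidestepping any global consistency issue), and for each pair repeatedly applies Lemma \ref{3partVolArg} to the four connected class-pairs $(V_1^i,V_2^j)$, $(V_1^i,V_3^j)$, $(V_2^i,V_1^j)$, $(V_3^i,V_1^j)$. This step by itself yields \emph{zero} net gain (rebalancing would force discarding as many vertices as were taken from $\mathcal{I}$), but it banks ``extra vertices'' in the classes $V_2^i,V_3^i,V_2^j,V_3^j$. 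The dichotomy is then: if $d_3(V_2^g\cup V_3^g)\ge\sqrt{\eta}$ or $d_3\bigl(V_2^g\cup V_3^g,\binom{\mathcal{I}}{2}\bigr)\ge\sqrt{\eta}$, one finds \emph{many} disjoint small complete tripartite graphs that consume exactly those extra vertices, so no discarding is needed and the cover grows by $\eta^4 n$; otherwise these densities together with (\ref{denI}) and (\ref{emptyI_to_V2V3}) make $V_2^g\cup V_3^g\cup\mathcal{I}$ sparse. Your plan — ``if the computation yields enough edges inside $B$, extract a tripartite graph of size $\frac{\eta}{4}\log t$'' — fails on both counts: a single such graph has $O(\log\log n)$ vertices, not $\eta^4 n$, and those edges lie mostly inside $V_2\cup V_3\subseteq V(\mathcal{T})$, i.e.\ among \emph{already covered} vertices, so extracting tripartite graphs there does not enlarge the cover at all and unbalances the old ones; without the extra-vertex bookkeeping the accounting nets nothing.

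Second, your extremal partition is set up on the wrong side: you put $\mathcal{I}$ into $A$, so your $B$ has size about $\tfrac23|V(\mathcal{T})|=\tfrac23(n-|\mathcal{I}|)$, and $|B|\ge(\tfrac23-\alpha)n$ would require $|\mathcal{I}|=O(\alpha n)$. Nothing in the hypotheses gives this, and your fallback — that a large $\mathcal{I}$ contradicts (\ref{denI}) via (\ref{min_cross_density}) — does not work, since $\mathcal{I}$ is sparse by (\ref{denI}) regardless of its size (all that is known is $|\mathcal{I}|\le n/2$). The paper instead takes $B=V_2^g\cup V_3^g\cup\mathcal{I}$, which makes $|B|\ge(\tfrac23-\alpha)n$ automatic and is the natural choice because the smallness conditions (\ref{denI}), (\ref{emptyI_to_V2V3}), (\ref{emptyV2V3}), (\ref{emptyV2V3_to_I}) are precisely the densities internal to that union. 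Finally, your majority/consistency step is only asserted: a tripartite graph whose distinguished class differs with respect to two different partners does not by itself force link graphs with six edges, a perfect matching, or a $B_{320}$, so the claimed contradiction with (\ref{min_cross_density}) or with the inapplicability of Claim \ref{expanding_byVolArg} is not established; the paper's labeling along the matching $\mathcal{P}_g$ is how it avoids having to prove any such global statement.
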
 We repeatedly use Claim \ref{noExpand_extremal} to increase the size of our cover as long as the condition of Claim \ref{noExpand_extremal} is satisfied. 

Hence proceeding in iterations applying the appropriate claim at each iteration, it is clear that in at most $8\eta^{-4}$ iterations we either get an optimal cover or that $H$ is $\alpha$-extremal. The optimal cover readily gives us an almost perfect matching. \hfill{} \end{proof}

\subsection{Proof of Claim \ref{few_2sided}} Let ${\cal T}'=   \{T_1,T_2,\ldots\} \subset {\cal T} $ be the subcover of the at least $2$-sided tripartite graphs such $|V({\cal T}')|\geq \eta |V({\cal T})| \geq \eta^2 n$. Without loss of generality, say in each $T_i\in {\cal T'}$ we have 
$$d_3\left(V_1^i, {{\cal I}\choose 2}\right)\geq 2\eta \;\; \text{  and  } \;\; d_3\left(V_2^i, {{\cal I}\choose 2}\right)\geq 2\eta.$$ For each such $T_i$, we have  $|V_1^i|=|V_2^i| = t\leq \eta\sqrt{\log (\eta^2n)} = \eta m$ and $|{\cal I}|\geq \eta^2 n > \eta^2 2^{m^2}$. By Lemma \ref{subsetPHP_hyper} (with parameter $\eta$ ) we find two disjoint balanced complete tripartite graphs $(U_1^i,A_1^i,B_1^i)$ and $(U_2^i,A_2^i,B_2^i)$ where $U_1^i$ and $U_2^i$ are subsets of $V_1^i$ and $V_2^i$ respectively, and $A_1^i, A_2^i, B_1^i \mbox{ and } B_2^i$ are disjoint subsets of ${\cal I}$ (see Figure \ref{2sided_extension}). The size of each color class of these tripartite graphs is $\eta |V_1^i|/2$ (we assume it is an integer).  Note that we can find larger tripartite graphs but we keep the size of these new tripartite graphs $3\eta |V_1^i|/2$ only.  We remove the vertices of these new tripartite graphs from their respective sets and add the tripartite graphs to our cover. Removing these vertices from $V_1^i$ and $V_2^i$ creates an imbalance in the leftover part of $T_i$ ($V_3^i$ has more vertices). To restore the balance in the leftover of $T_i$ we discard (add to ${\cal I}$) some arbitrary $|U_1^i| = |U_2^i| = \eta|V_1^i|/2$ vertices from $V_3^i$. The new tripartite graphs use at least $2|A_1^i| + 2|B_1^i| = 2\eta |V_1^i|$ vertices from ${\cal I}$. Therefore,  after discarding the vertices from $V_3^i$ the net increase in the size of our cover is $3\eta |V_1^i|/2$, while all the tripartite graphs in ${\cal T}$ are balanced. 

\begin{figure}[h!] 
\centering
\includegraphics[scale=0.55]{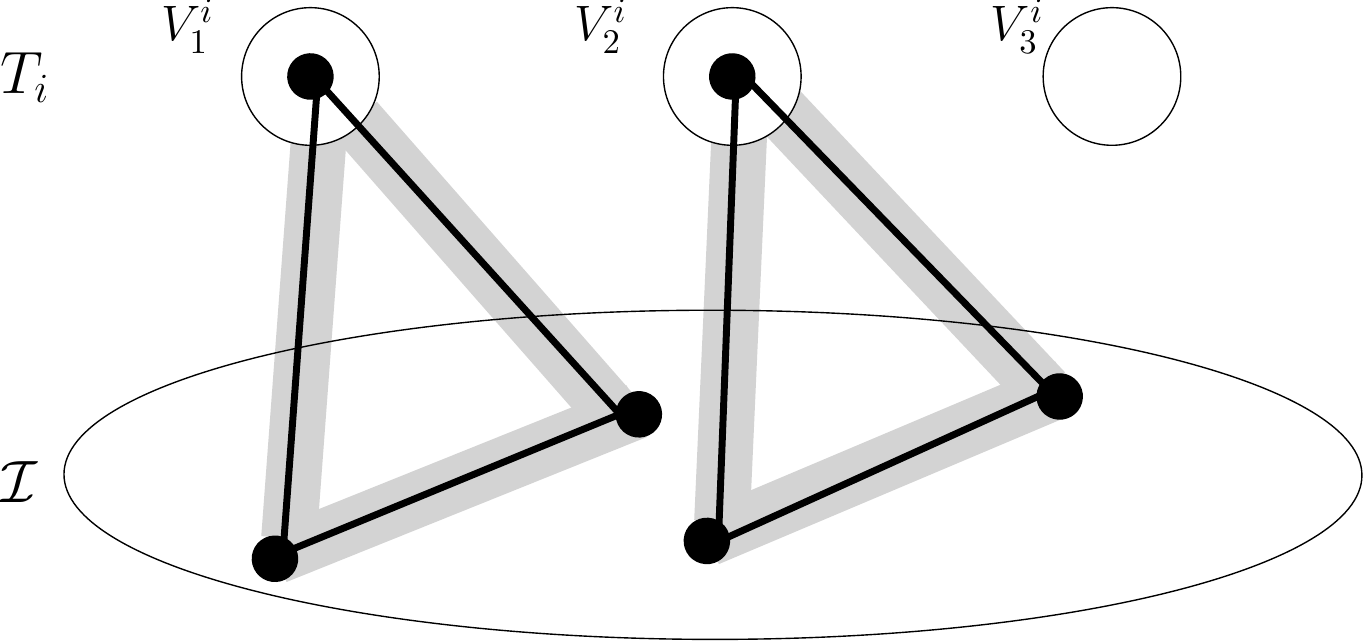} 
\caption{\footnotesize{Shaded triangles represent $2$-sided color classes. Solid  triangles represent the new complete tripartite graphs. $V_3^i$ has extra vertices. }}
\label{2sided_extension}
\end{figure}

We proceed as above for the remaining tripartite graphs in ${\cal T}'$ one by one until we remove at least $\eta |{\cal I}|/8 \geq \eta^3 n/8$ vertices from ${\cal I}$. Since each $T_i \in {\cal T}'$ is at least $2$-sided we can continue as above. Indeed, until we remove $\eta |{\cal I}|/8$ vertices from ${\cal I}$, for the remaining tripartite graphs in ${\cal T}'$ and the remaining part of ${\cal I}$ the condition of Lemma \ref{subsetPHP_hyper} is still satisfied (with parameter $\eta$). Since $|V({\cal T}')|\geq \eta|V({\cal T})|\geq \eta^2 n$, it is easy to see that with this procedure we increase the size of our cover by at least $\eta ^3 n/8$ vertices. Note that the newly made tripartite graphs have color classes of size $\eta t/2$ while the remaining parts of tripartite graphs in ${\cal T}'$ and those in ${\cal T}\setminus {\cal T}'$ are bigger. To make all tripartite graphs in the cover of the same size, we split each tripartite graph in the cover, into disjoint balanced complete tripartite graphs, such that each color class of every tripartite graph is of size $\eta t/2$ (we assume divisibility). \hfill{} \qed

\subsection{Proof of Claim \ref{expanding_byVolArg}}
We first find a set of disjoint pairs of tripartite graphs such that for each pair the link graph either has a perfect matching or contains a $B_{320}$. Consider the auxiliary graph where vertices are the tripartite graphs in ${\cal T}$ and two vertices are connected if for the corresponding tripartite graphs $T_i$ and $T_j$, $L_{ij}$ has a perfect matching or contains a $B_{320}$. Since this auxiliary graph has at least $\eta {|{\cal T}|\choose 2}$ edges, by Lemma \ref{folk_subgraph} and Lemma \ref{folk_matching},  we find a matching of size $\eta |{\cal T}|/3$ in this  graph. Clearly, this matching corresponds to a set of disjoint pairs of tripartite graphs, ${\cal P}\subset {{\cal T}\choose 2}$, such that for the each pair $(T_i,T_j) \in {\cal P}$, $L_{ij}$ has a perfect matching or contains a $B_{320}$. The number of vertices in the $2|{\cal P}|$ tripartite graphs in ${\cal P}$ is at least $\eta n /3$ as $|V({\cal T})| \geq n/2$. We distinguish the following two cases for each pair in ${\cal P}$, to make new tripartite graphs using some vertices from ${\cal I}$.

\vskip 6pt
\textbf{Case 1: {\em $L_{ij}$ has a perfect matching.}} Without loss of generality assume that the perfect matching in $L_{ij}$ corresponds to the pairs $(V_1^i,V_1^j)$, $(V_2^i,V_2^j)$ and $(V_3^i,V_3^j)$. Note that by construction of $L_{ij}$, ${\cal I}$ is {\em connected} to $(V_1^i,V_1^j)$, $(V_2^i,V_2^j)$ and $(V_3^i,V_3^j)$. By definition of connectedness, $d_3({\cal I},V_1^i\times V_2^j)\geq 2\eta$ and $|V_1^i|=|V_1^j| = t\leq \eta\sqrt{\log (\eta^2n)} $ and $|{\cal I}|\geq \eta^2 n > \eta^2 2^{t^2}$, hence the $3$-partite $3$-graph $H(V_1^i,V_1^j,{\cal I})$ satisfies the conditions of Lemma \ref{3partVolArg}.  Applying Lemma \ref{3partVolArg} (with parameter $\eta$) we find a complete balanced tripartite graph $T_1 = ({U_1^i,U_1^j,\cal I}_1)$, such that $$ {\cal I}_1 \subset {\cal I},\;\; {U}_1^i \subset V_1^i \mbox{  ,  } {U}_1^j \subset V_1^j  \;\; \mbox{ and } |{\cal I}_1|=|U_1^i| = |U_1^j| = \frac{\eta}{4}\log t.$$ Similarly, we find such complete balanced tripartite graphs $T_2$ and $T_3$ in $H(V_2^i,V_2^j,{\cal I})$ and $H(V_3^i,V_3^j,{\cal I})$ respectively (see Figure \ref{fig:case1}). Clearly we can have that $T_1, T_2$ and $T_3$ are disjoint from each other as $|{\cal I}|\geq \eta^2 n$ .

\begin{figure}[h!] 
\centering
\includegraphics[page=1,width=3.0in]{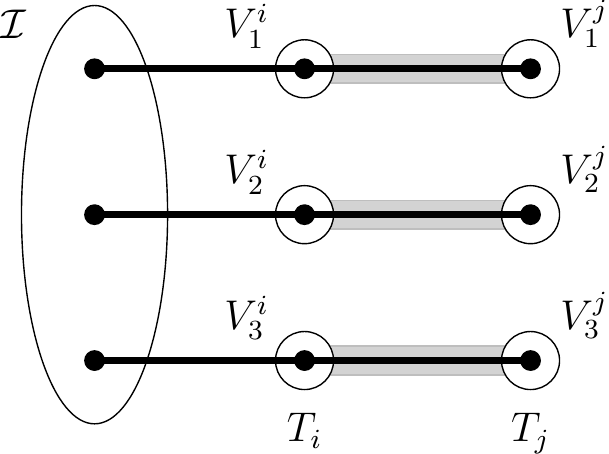}
\caption{\footnotesize{The new tripartite graphs $T_1$, $T_2$ and $T_3$  using vertices from $T_i, T_j$ and ${\cal I}$ when $L_{ij}$ has a perfect matching. Shaded rectangles represent pairs connected to ${\cal I}$. Solid lines represent the new complete tripartite graphs.}}
\label{fig:case1}
\end{figure}

We remove the vertices in $T_1,T_2$ and $T_3$ from their respective sets and add these three new tripartite graphs to our cover. In the remaining parts of $T_i$ and $T_j$ we remove another such set of $3$ disjoint tripartite graphs. By definition of connectedness until we remove at least $\eta^2 t/2$ vertices from each color class of $T_i$ and $T_j$ we still have $d_3({\cal I}, (V_1^i\times V_1^j))> \eta$. Hence by Lemma \ref{3partVolArg} we continue removing such three tripartite graphs until from each color class of $T_i$ and $T_j$ we remove $\eta^2 t/2$ vertices.  Note that the new tripartite graphs use $3\eta^2 t/2$ vertices from ${\cal I}$. Therefore adding these new tripartite graphs to our cover increases the size of the cover by $3\eta t^2/2$ vertices while all tripartite graphs in the cover are still balanced. 

%
%\begin{figure}[h!]
%\begin{minipage}[c]{0.4\linewidth}
%\centering
%\subfigure[\footnotesize{There is a perfect matching}]{\label{fig:case1_1}\includegraphics[width=3.1in]{3unif_PM_extended}}
%\end{minipage}
%\hspace{0.10\linewidth}
%\begin{minipage}[c]{0.4\linewidth}
%\centering
%\subfigure[\footnotesize{There is a pseudo-matching}]{\label{fig:case1_2}\includegraphics[width=3.1in]{3unif_PseudoMatch_extended}}
%
%\end{minipage}
%\caption{\footnotesize{Shaded rectangles represent pairs connected to ${\cal I}$: Triangles represent complete tripartite graphs.}}
%\label{fig:Case1}
%\end{figure}
%\vskip 5pt

%\begin{figure}[h!]
%\begin{minipage}[c]{0.40\linewidth}
%\centering
%\subfigure[\footnotesize{The tripartite graphs in $T_i$ and $T_j$ when $L_{ij}$ has a perfect matching.}]{\label{fig:case1}\includegraphics[page=1,width=2.9in]{B_320_B_PM_Extension}}
%\end{minipage}
%\hspace{0.15\linewidth}
%\begin{minipage}[r]{0.40\linewidth}
%\centering
%\subfigure[\footnotesize{The tripartite graphs in $T_i$ and $T_j$ when $L_{ij}$ contains a $B_{320}$.}]{\label{fig:case2}\includegraphics[page=2,width=2.9in]{B_320_B_PM_Extension}}
%\end{minipage}
%\caption{\footnotesize{Shaded rectangles represent pairs connected to ${\cal I}$. Solid lines represent the new complete tripartite graphs.}}
%\label{fig:Case1n2}
%\end{figure}
\vskip 5pt

\textbf{Case 2: {\em $L_{ij}$ contains a $B_{320}$.}} Again without loss of generality assume that in the $B_{320}$ the vertices of degree $3$ and $2$ correspond to the color classes $V_1^i$ and $V_2^i$ respectively. Furthermore, we may assume that ${\cal I}$ is connected to $(V_1^i,V_1^j)$, $(V_1^i,V_2^j)$, $(V_2^i,V_2^j)$ and $(V_2^i,V_3^j)$. By definition of connectedness the $3$-partite subhypergraph of $H$ induced by  ${\cal I}$ and any of the above four pairs of color classes satisfies the conditions of Lemma \ref{3partVolArg}. 

Similarly as in the previous case, applying Lemma \ref{3partVolArg} (with parameter $\eta$) we find the following four disjoint complete tripartite graphs: $(V_{11}^i,V_{11}^j,{\cal I}_1)$, $(V_{22}^i,V_{32}^j,{\cal I}_2)$, $(V_{13}^i,V_{23}^j,{\cal I}_3)$ and $(V_{24}^i,V_{24}^j,{\cal I}_4)$ such that for $1\leq p\leq 3$ and $1\leq q \leq 4$, ${\cal I}_q, V_{pq}^i \mbox{ and }V_{pq}^j$ are disjoint subsets of ${\cal I}, V_{p}^i$ and $V_p^j$ respectively (see Figure \ref{fig:case2}).

For the sizes of these new tripartite graphs we have $$|{\cal I}_1|=|{\cal I}_2|=|V_{11}^i|=|V_{11}^j|=|V_{22}^i|=|V_{32}^j| = \frac{\eta \log t}{4} $$ and $$|{\cal I}_3|=|{\cal I}_4|=|V_{13}^i|=|V_{23}^j|=|V_{24}^i|=|V_{24}^j|= \frac{\eta\log t}{8}.$$ 
\begin{figure}[h!] 
\centering
\includegraphics[page=2,width=3.0in]{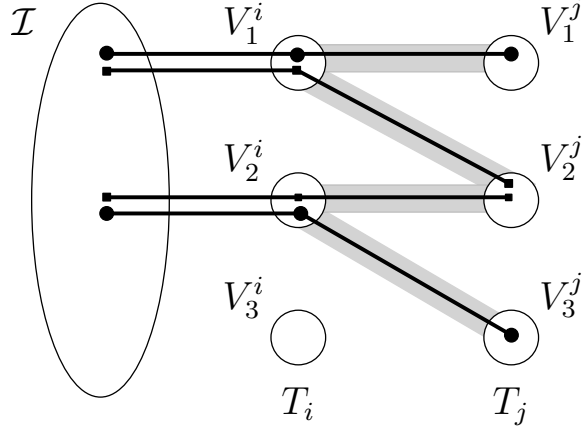}
\caption{\footnotesize{The new tripartite graphs using vertices from $T_i, T_j$ and ${\cal I}$ when $L_{ij}$ contains a $B_{320}$. Shaded rectangles represent pairs connected to ${\cal I}$. Solid lines represent the new complete tripartite graphs. $V_3^i$ has extra vertices.}}
\label{fig:case2}
\end{figure}

In the remaining parts of $T_i$ and $T_j$ we remove another such set of $4$ disjoint tripartite graphs. Again by definition of connectedness and Lemma \ref{3partVolArg} we can continue this process until we remove $\eta^2 t/2$ vertices each from $V_1^i$ and $V_2^i$.  Note that when we remove the vertices of these new tripartite graphs from their respective color classes in $T_i$ and $T_j$, the remaining part of $T_j$ is still balanced while it creates an imbalance in the remaining part of $T_i$, as $V_3^i$ has $\eta^2 t/2$ more vertices than the other two color classes  To restore the balance we discard (add to ${\cal I}$) an arbitrary subset of vertices in $V_3^i$ (of size $\eta^2 t/2$). These new tripartite graphs use at least $\eta^2 t$ vertices from ${\cal I}$. Therefore, after discarding the vertices from $V_3^i$ the net increase in the number of vertices in the cover is $\eta^2 t/2$. 
\vskip 8pt

We proceed in similar manner for all pairs in ${\cal P}$ one by one until we remove at least $\eta |{\cal I}|/8 \geq \eta^3 n/8$ vertices from ${\cal I}$. Applying the appropriate procedure in Case 1 or Case 2 we increase the size of the cover by ${\eta}^{3}n/8$ vertices (as the size of ${\cal P}$ is at least $\eta n/3$ and for every pair the increase is $\eta^2 t/2$)  while keeping all the tripartite graphs in the cover balanced. Note that even after removing $\eta |{\cal I}|/8$ vertices from ${\cal I}$ for the remaining pairs $(T_i,T_j) \in {\cal P}$, the conditions of Lemma  \ref{3partVolArg}   are satisfied (with parameter $\eta$). 

Again as above we make all tripartite graphs in the cover of the same size, by arbitrarily splitting each larger tripartite graph into disjoint tripartite graphs with color classes of size $\eta \log t /4$.  \hfill{} \qed

\subsection{Proof of Claim \ref{noExpand_extremal}}

Similarly as above, first consider the auxiliary graph where vertices are the tripartite graphs in ${\cal T}$ and two vertices are connected if for the corresponding tripartite graphs $T_i$ and $T_j$,  $L_{ij}$ is isomorphic to $B_{311}$. This auxiliary graph has ${\cal T}$ vertices and at least $(1-2\sqrt{\eta}) {|{\cal T}|\choose 2}$ edges. By Lemma \ref{folk_subgraph} and Lemma \ref{folk_matching}, in this graph we can find a matching of size $(1-2\sqrt{\eta}) |{\cal T}|/2$. This matching corresponds to a set of disjoint pairs of tripartite graphs, ${\cal P}_g\subset {{\cal T}\choose 2}$, such that for each pair $(T_i,T_j) \in {\cal P}_g$, $L_{ij}$ is isomorphic to $B_{311}$. Let the set of tripartite graphs in ${\cal P}_g$ be ${\cal T}_g$ and let $V({\cal P}_g)$ be the set of vertices in these tripartite graphs, we have \beq\label{numVertices_good} |V({\cal P}_g)| \geq (1-2\sqrt{\eta})|V({\cal T})|.\eeq

\noindent Without loss of generality assume that for each $(T_i,T_j) \in {\cal P}_g$, the vertices of degree $3$ in the $B_{311}$ in $L_{ij}$ correspond to the color classes $V_1^i$ and $V_1^j$. Thus by definition of connectedness for each $(T_i,T_j) \in {\cal P}_g$, (see Figure \ref{all_B311}) \beq\label{connectedSumm} \text{ for } 1\leq q \leq 3 \;\;\; d_3\left({\cal I},(V_1^i\times V_q^j)\right)\geq 2\eta \;\;\text{ and } d_3\left({\cal I},(V_q^i\times V_1^j)\right)\geq 2\eta.\eeq Let $V_1 = \bigcup\limits_{T_i \in {\cal T}_g} V_1^i$ ($V_2$ and $V_3$ are similarly defined).  We have \beq\label{emptyI_to_V2V3} d_3\left({\cal I},{V_2\cup V_3 \choose 2}\right) < 2\eta. \eeq 
\begin{figure}[h!] 
\centering
\includegraphics[page=4]{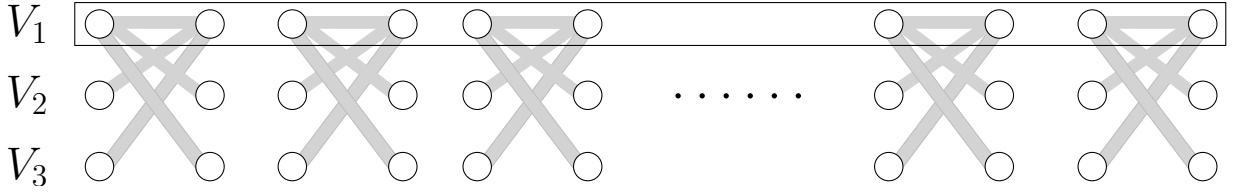}
\caption{\footnotesize{The pairs of tripartite graphs in ${\cal P}_g$.  All $L_{ij}$'s are isomorphic to $B_{311}$. Shaded rectangles represent pairs connected to ${\cal I}$.}}
\label{all_B311}
\end{figure}

For each $(T_i,T_j) \in {\cal P}_g$ the conditions of Lemma \ref{3partVolArg} are satisfied for $H(V_1^i,V_2^j,{\cal I})$, $H(V_1^i,V_3^j,{\cal I})$, $H(V_2^i,V_1^j,{\cal I})$ and $H(V_3^i,V_1^j,{\cal I})$. We find a disjoint complete tripartite graphs in each of these four $3$-partite $3$-graphs (see Figure \ref{B311_ext}). The size of each color class in the new tripartite graphs is $\dfrac{\eta\log t}{4}$. Again we continue removing such sets of four tripartite graphs until we remove $\eta^3 t$ vertices each from $V_1^i$ and $V_1^j$. By construction, these new tripartite graphs remove $\eta^3 t/2$ vertices each from $V_2^i$, $V_3^i$, $V_2^j$  and $V_3^j$.

%We find a disjoint complete tripartite graphs in each of $H(V_1^i,V_2^j,{\cal I})$, $H(V_1^i,V_3^j,{\cal I})$, $H(V_2^i,V_1^j,{\cal I})$ and $H(V_3^i,V_1^j,{\cal I})$ (see Figure \ref{B311_ext}). 
\begin{figure}[h!] 
\centering
\includegraphics[page=3,width=3.1in]{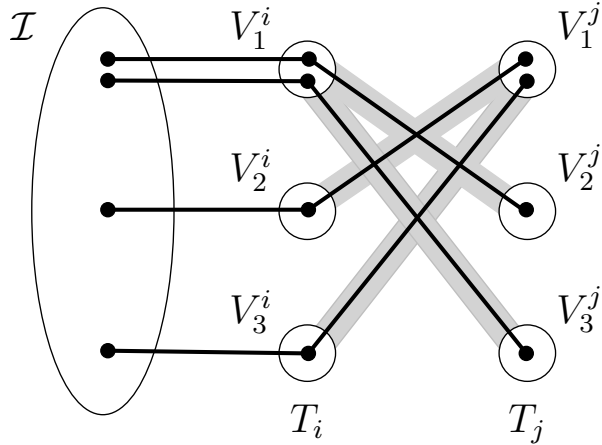}
\caption{\footnotesize{The new tripartite graphs using vertices from $T_i, T_j$ and ${\cal I}$ when $L_{ij}$ is isomorphic to $B_{311}$. Shaded rectangles represent pairs connected to ${\cal I}$. Solid lines represent the new complete tripartite graphs.}}
\label{B311_ext}
\end{figure}
These new tripartite graphs use $2\eta^3t$ vertices from ${\cal I}$. Removing these new tripartite graphs creates an imbalance among the color classes of the remaining parts of $T_i$ and $T_j$, to restore the balance we will have to discard $\eta^3 t/2$ vertices from each color class of $T_i$ and $T_j$ except $V_1^i$ and $V_1^j$. This leaves us with no net gain in the size of the cover. Therefore we will not discard any vertices from these color classes at this time and say that these color classes have $\eta^3 t/2$ extra vertices. 

We proceed in similar manner for each pair in ${\cal P}_g$. Since in total we will use $|{\cal T}_g|\cdot \eta^3 t \leq \eta^3 n \leq \eta |{\cal I}|$ vertices from ${\cal I}$ for the remaining pairs of tripartite graphs in ${\cal P}_g$ the conditions of Lemma  \ref{3partVolArg} (with parameter $\eta$) are satisfied. So we can continue to make new tripartite graphs. 

\vskip 6pt

\noindent Let $V_1^g\subset V_1,V_2^g\subset V_2,V_3^g\subset V_3$ be the union of the corresponding color classes of remaining parts of tripartite graphs in ${\cal T}_g$. Since the number of vertices used in the newly made tripartite graphs above is at most $\eta ^3n$, by (\ref{numVertices_good}) we have \beq\label{sizeV2V3g} |V_2^g| = |V_3^g| \geq (1-3\sqrt{\eta})|V({\cal T})|/3.\eeq Next we show that if at least one of the following density conditions is true then we can increase the size of our cover: \beq\label{emptyV2V3} d_3\left(V_2^g\cup V_3^g\right) \geq \sqrt{\eta},\eeq \beq\label{emptyV2V3_to_I} d_3\left(V_2^g\cup V_3^g,{{\cal I}\choose 2}\right) \geq \sqrt{\eta}.\eeq 

%For $E(V_2^g\cup V_3^g)$, {\color{red} may be take this before the degree condition so its clear what type of density we mean} we consider only those edges that use exactly one vertex from a tripartite graph $T_i$, as the number of edges of other types is at most $o(n^3)$. 
\noindent Assume that $d_3(V_2^g\cup V_3^g) \geq \sqrt{\eta}$. We will show that in $H|_{V_2^g\cup V_3^g}$ there exist disjoint balanced complete tripartite $3$-graphs of size $\eta^3 t/4$ (half the number of extra vertices in a color class) covering at least $\eta^4 n$ vertices. Furthermore, we can find such tripartite graphs in $H|_{V_2^g\cup V_3^g}$ such that from no color class we use more than the number of extra vertices in that color class. 
\vskip 6pt

To see this, call a color class {\em `full'} if these new tripartite graphs use at least $\eta^3 t/4$ vertices. We remove all vertices of each full color class and find tripartite graphs of size $\eta^3 t/4$ in the remaining vertices. Let $k$ be the number of full color classes at a given time and suppose that the total number of vertices covered by the new tripartite graphs is at most $\eta^4 n$. Then, $k \cdot  \eta^3 t/4 < \eta^4 n$ which implies that $tk < 4\eta n$ i.e. the total number of vertices in the full color classes is at most $4\eta n$. Let $H'$ be the remaining part of $H|_{V_2^g\cup V_3^g}$ (after removing all vertices in every full color class). By the above observation $V(H') \geq (1-3\sqrt{\eta})|V({\cal T})|/3 - 4\eta n - \eta^4n$ and $d_3(H') \geq \sqrt{\eta}- 6(\eta +\eta^4) \geq \sqrt{\eta}/2$. Hence by Lemma \ref{hyperKST} we can continue to find complete tripartite graphs of size $\eta^3 t/4$ in $H'$. Note that we do not use more than the number of extra vertices from any color classes. 

 %Note that initially in every color class we have $\eta^3 t/2$ extra vertices.

% If total number of vertices covered by these new tripartite graphs is at most $\eta^4 n$ then total number of vertices in all the full color classes is at most $\eta^3 t \times k < \eta^4 n \implies tk < 2\eta n$ where $k$ is number of full color classes and $t$ is size of each color class at most. (if in total its less than $\eta^4n$ then the portion in the full color classes is definitely less than $\eta^4n$. ) Since ${\eta n \choose 3} + {\eta n \choose 2}n + \eta n \times {n\choose 2} < 2\eta n^3$ {\color{red} correct this remove $n$}, in the remaining part of $V_2^g \cup V_3^g$ we still have $d_3\left(V_2^g\cup V_3^g\right) \geq 5\eta$ (if originally it was $10\eta$). Hence we can continue. 

\vskip 6pt

%then by Lemma \ref{hyperKST}     {\color{red} $=$ $\eta$-fraction of what we covered for the above tripartite graphs } vertices ({\color{red} calling Lemma with what parameters and what are the sizes of each tripartite graphs, use the language from above}).  To see that call a color class {\em `full'} if in `these' tripartite graphs we have used at least $\eta^3 t/2$ vertices {\color{red} ($=$ half of extra vertices and $=$ size of each trip graph)}. Note that initially in every color class we have $\eta^3 t$ extra vertices {\color{red} check this correct size may be have half of this}. We remove all vertices of each full color class and find tripartite $3$-graphs of size $\eta^3 t/2$ in the remaining vertices. If total number of vertices covered by these new tripartite graphs is at most $\eta^4 n$ then total number of vertices in all the full color classes is at most $\eta^3 t \times k < \eta^4 n \implies tk < 2\eta n$ where $k$ is number of full color classes and $t$ is size of each color class at most. (if in total its less than $\eta^4n$ then the portion in the full color classes is definitely less than $\eta^4n$. ) Since ${\eta n \choose 3} + {\eta n \choose 2}n + \eta n \times {n\choose 2} < 2\eta n^3$ {\color{red} correct this remove $n$}, in the remaining part of $V_2^g \cup V_3^g$ we still have $d_3\left(V_2^g\cup V_3^g\right) \geq 5\eta$ (if originally it was $10\eta$). Hence we can continue. 

We remove some of these new tripartite graphs so that the total number of vertices covered by them is at least $\eta^4n$. Now adding these new tripartite graphs to our cover increases the size of our cover by at least $\eta^4 n$ vertices, as we did not discard vertices from $V_2^g\cup V_3^g$ for rebalancing. Instead the extra vertices are part of these new tripartite graphs. Now in the remaining parts of each $T_i \in {\cal T}_g$ we arbitrarily remove some extra vertices to restore the balance in the tripartite graphs and as above make all tripartite graphs of the same size. \\

\noindent On the other hand if  $d_3(V_2^g\cup V_3^g,{{\cal I}\choose 2}) \geq \sqrt{\eta}$, then since both $|{\cal I}|$ and $|V_2^g\cup V_3^g|$ are at least $\eta^2 n$, by Lemma \ref{hyperKST} we find disjoint complete tripartite graphs with one color class in $V_2^g\cup V_3^g$ and two color classes in ${\cal I}$ covering at least $\eta ^4n$ vertices. Again as above we make these tripartite graphs so as to not use more than the number of extra vertices in any color class. Adding these tripartite graphs increases the size of our cover by at least $\eta^4 n$ vertices.
\vskip 6pt

In case none of the above density conditions hold then by ({\ref{emptyI_to_V2V3}),  (\ref{emptyV2V3}), (\ref{emptyV2V3_to_I}) and the fact that $d_3({\cal I})<\eta$, we get $d_3(V_1^g \cup V_2^g \cup {\cal I}) < 10\sqrt{\eta} < \alpha$. By (\ref{sizeV2V3g}) we have $|V_1^g \cup V_2^g \cup {\cal I}| \geq (2/3 - \alpha)n$. Hence $H$ is $\alpha$-extremal.  This concludes the proof of Claim \ref{noExpand_extremal}. \hfill{} \qed

\section{Proof of Theorem \ref{extCaseTheorem}}\label{extCase}

Let $\alpha$ be given and let $n\in 3\mathbb{Z} \gg \dfrac{1}{\alpha}$.  Our hypergraph $H$ is $\alpha$-extremal i.e. there exists a $B\subset V(H)$ such that 
\begin{itemize}
\item $|B|\geq (\frac{2}{3}-\alpha) n$
\item $d_3(B) < \alpha$.
\end{itemize}
Let $A=V(H)\setminus B$, by shifting some vertices between $A$ and $B$ we can have that $A=n/3$ and $B=2n/3$ (we keep the notation $A$ and $B$). It is easy to see that we still have \beq\label{extDen}d_3(B) < 6\alpha\eeq  
Since we have \beq\label{minDeg_ext} \delta_1(H) \geq {n-1\choose 2} - {2n/3\choose 2}+1={n-1\choose 2} - {|B|\choose 2} + 1\eeq together with (\ref{extDen}) this implies that almost all $3$-sets of $V(H)$ are edges of $H$ except $3$-sets of $B$. Thus roughly speaking almost every vertex $b\in B$ makes edges with almost all pairs of vertices in ${A\choose 2}$ and with almost all pairs of vertices in $B\setminus\{b\} \times A$ and vice versa. Therefore, we will basically match every vertex in $A$ with a distinct pair of vertices in ${B\choose 2}$ to get the perfect matching. However, there may be a few vertices making edges with different pairs of vertices than the typical ones. Hence we will first match those few vertices and then we will use a K{\"o}nig-Hall type argument to match every remaining vertex in $A$ with a distinct pair of remaining vertices in $B$. 

\begin{proof}[Proof of Theorem \ref{extCaseTheorem}]

\vskip4pt

We first identify vertices in $A$ and $B$ that do not satisfy the typical degree conditions as follows.
\begin{definition}
\begin{align*}
\bullet\;\; & X_A \mbox{ {\em (Exceptional vertices in $A$)}} := \{a\in A\; | \;deg_3\left(a,{B\choose 2}\right) < \left(1-\sqrt{\alpha}\right){|B|\choose 2}\}\\
\bullet\;\; & X_B \mbox{ {\em (Exceptional vertices in $B$)}} := \{b\in B \; |\; deg_3(b,(B\times A)) < (1-\sqrt{\alpha})|A|(|B|-1)\}\\
\bullet\;\; & S_A \mbox{ {\em (Strongly Exceptional vertices in $A$)}} := \{a\in A \;|\; deg_3\left(a,{B\choose 2}\right) < {\alpha}^{1/3}{|B|\choose 2}\}\\
\bullet\;\; & S_B \mbox{ {\em (Strongly Exceptional vertices in $B$)}} := \{b\in B \;|\;  deg_3(b,(B\times A)) < {\alpha}^{1/3}|A|(|B|-1)\}
\end{align*}
\end{definition}

%A vertex $a\in A$ is called {\em exceptional} if it does not make edges with almost all pairs of vertices in $B$, more precisely if $$deg_3\left(a,{B\choose 2}\right) < \left(1-\sqrt{\alpha}\right){|B|\choose 2}$$A vertex $a\in A$ is called {\em strongly exceptional} if it makes edges with very few pairs in $B$, more precisely if $$deg_3\left(a,{B\choose 2}\right) < {\alpha}^{1/3}{|B|\choose 2}$$Similarly a vertex $b\in B$ is called {\em exceptional} if it does not make edges with almost all pairs of vertices $(b_i,a_j): b\neq b_i \in B \mbox{ and }a_j \in A$, more precisely if $$deg_3(b,(B\setminus\{b\}\times A)) < (1-\sqrt{\alpha})|A|(|B|-1)$$A vertex $b\in B$ is called {\em strongly exceptional} if it makes edges with very few pairs of vertices $(b_i,a_j): b\neq b_i \in B \mbox{ and }a_j \in A$, more precisely if $$deg_3(b,(B\setminus\{b\}\times A)) < {\alpha}^{1/3}|A|(|B|-1)$$

%\noindent Denote the set of \textit{exceptional} and \textit{strongly exceptional} vertices in $A$ (and $B$) by $X_A$ and $S_A$ respectively (similarly $X_B$ and $S_B$). 

\noindent We will show that there are few vertices in $X_A$ and $X_B$ and very few vertices in $S_A$ and $S_B$. 
\begin{claim} We have the following bounds on the sizes of the sets defined above.
\begin{enumerate}[(i)]
\item $|X_A|\leq 18\sqrt{\alpha}|A|$.
\item $|X_B|\leq 18\sqrt{\alpha}|B|$.
\item $|S_B|\leq 40\alpha|B|$.
\item $|S_A|\leq 40\alpha|A|$.
\end{enumerate}
\end{claim}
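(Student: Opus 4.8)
The plan is to prove all four bounds by a double-counting argument against the extremal density estimate \eqref{extDen}, namely $d_3(B) < 6\alpha$, i.e. $e_3(H|_B) \geq {|B|\choose 3} - 6\alpha{|B|\choose 3}$, so that the number of \emph{non-edges} inside $B$ is at most $6\alpha{|B|\choose 3}$. The key observation is that a vertex $a \in X_A$ fails to span at least $\sqrt{\alpha}{|B|\choose 2}$ pairs in ${B\choose 2}$, but each such ``missing'' triple $\{a\}\cup p$ with $p\in{B\choose 2}$ is a non-edge of $H$ that uses exactly one vertex of $A$; I would first bound the total number of non-edges of $H$ meeting $A$ in exactly one vertex. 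Since $\delta_1(H) \geq {n-1\choose 2} - {|B|\choose 2} + 1$ by \eqref{minDeg_ext}, every vertex $v$ (in particular every $v\in B$) is missing at most ${|B|\choose 2} - 1$ triples, and a non-edge meeting $A$ in one vertex is missing from the degree-count of its two $B$-vertices; summing $\delta_1$ over $B$ and comparing with the number of non-edges incident to $B$-vertices gives that the number of non-edges with exactly one vertex in $A$ is at most (roughly) $\frac{1}{2}|B|\left({|B|\choose 2}-1\right)$, but more usefully I would split a non-edge's "cost" between its endpoints to get a clean linear bound. Then each $a\in X_A$ contributes at least $\sqrt{\alpha}{|B|\choose 2}$ to this count, forcing $|X_A| \leq$ (that total)$/\left(\sqrt{\alpha}{|B|\choose 2}\right)$, which after plugging in $|B| \approx 2n/3$ and $|A| \approx n/3$ yields $|X_A| \leq 18\sqrt{\alpha}|A|$.

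For part (ii), the bound on $|X_B|$, the relevant non-edges are those lying entirely in $B$: if $b\in X_B$ then $\deg_3(b,(B\times A)) < (1-\sqrt\alpha)|A|(|B|-1)$, but by \eqref{minDeg_ext} the total degree of $b$ is large, so $b$ must be missing at least $\sqrt\alpha |A|(|B|-1)/2 \geq \Omega(\sqrt\alpha |A||B|)$ pairs from $B\times A$ — wait, more carefully: the deficiency in $\deg_3(b,(B\times A))$ plus the deficiency coming from pairs in ${B\choose 2}$ together can exceed $\delta_1$'s slack only if $b$ is missing many pairs inside $B$, i.e. $\deg_3(b, {B\setminus\{b\}\choose 2})$ is well below ${|B|-1\choose 2}$. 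So each $b\in X_B$ is incident to $\Omega(\sqrt\alpha |B|^2)$ non-edges inside $B$; since the total count of non-edges inside $B$ is at most $6\alpha{|B|\choose 3} \leq \alpha|B|^3$, double-counting incidences gives $|B|^3 \cdot \alpha \gtrsim |X_B|\cdot \sqrt\alpha|B|^2$, hence $|X_B| \leq O(\sqrt\alpha)|B|$, and tracking constants gives $18\sqrt\alpha|B|$. Parts (iii) and (iv) are then entirely analogous but with the weaker threshold $\alpha^{1/3}$ in place of $1-\sqrt\alpha$: a vertex in $S_A$ is missing at least $(1-\alpha^{1/3}){|B|\choose 2} \geq \frac12{|B|\choose 2}$ pairs, so it is incident to $\Omega(|B|^2)$ non-edges (of the appropriate type), and dividing the same global non-edge budget $O(\alpha|B|^3)$ by $\Omega(|B|^2)$ gives $|S_A|, |S_B| = O(\alpha)\cdot(\text{size})$, with the stated constant $40\alpha$.

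The main obstacle I anticipate is purely bookkeeping: getting the constants to come out as $18$ and $40$ rather than something larger requires being careful about (a) the conversion between ${|B|\choose 2}$, $|A|(|B|-1)$, and $|B|^2$-type quantities given only $|A|=n/3$, $|B|=2n/3$ and $n$ large, and (b) not double-charging a non-edge that happens to be incident to two exceptional vertices. For (b) the standard fix is to note that if the exceptional set were larger than the claimed bound, then even charging each non-edge to \emph{both} its relevant endpoints still overshoots the global budget, so the factor-of-2 loss is harmless. There is also a mild subtlety in part (i) versus parts (ii)--(iv): for $X_A$ the controlling budget is the number of non-edges meeting $A$ in one vertex (bounded via $\delta_1$ summed over $B$), whereas for $X_B, S_A, S_B$ the controlling budget is the number of non-edges inside $B$ (bounded via \eqref{extDen}); I would make sure to invoke the right one in each case. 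Since the claim only asserts crude bounds with room to spare, I expect no genuine difficulty beyond organizing these two counting arguments cleanly.
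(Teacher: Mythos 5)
The central problem is that you have inverted the extremal condition. The hypothesis $d_3(B)<6\alpha$ (see (\ref{extDen})) says that $B$ is \emph{sparse}: the number of \emph{edges} inside $B$ is at most $6\alpha\binom{|B|}{3}$, so the number of non-edges inside $B$ is nearly all of $\binom{|B|}{3}$ — not at most $6\alpha\binom{|B|}{3}$ as you assert (compare Construction \ref{extConstruction}, where $B$ spans no edges at all). This inversion runs through your treatment of (ii)--(iv): you argue that a vertex of $X_B$ (or $S_B$) is incident to many \emph{non-edges} inside $B$ and compare with a small ``non-edge budget''; in fact the minimum-degree condition (\ref{minDeg_ext}) forces the opposite conclusion, namely that such a vertex has many \emph{edges} inside $B$ — its deficiency of at least $\sqrt{\alpha}|A|(|B|-1)$ among pairs meeting $A$ can only be compensated by pairs inside $B$, giving $deg_3\bigl(b,\binom{B}{2}\bigr)\geq \sqrt{\alpha}|A|(|B|-1)/2$ — and it is this that contradicts $3|E(B)|<18\alpha\binom{|B|}{3}$. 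That is exactly the paper's argument for (ii) and (iii); a consistent swap of ``edges'' and ``non-edges'' would repair those two parts, but as written both halves of your double count are false statements rather than a proof.

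Parts (i) and (iv) have a further gap that no such swap fixes. For $X_A$ you propose a budget coming from $\delta_1$ alone: each $b\in B$ misses at most $\binom{|B|}{2}-1$ pairs, so the number of non-edges with exactly one vertex in $A$ is at most about $\tfrac12|B|\binom{|B|}{2}\approx |B|^3/4$; dividing by $\sqrt{\alpha}\binom{|B|}{2}$ then gives only $|X_A|\lesssim |B|/(2\sqrt{\alpha})$, which exceeds $|A|$ and is vacuous — it does not ``yield $18\sqrt{\alpha}|A|$''. The missing idea is that the slack $\binom{|B|}{2}-1$ at $b\in B$ can be spent on pairs meeting $A$ only to the extent that $b$ actually has edges inside $B$: the number of missing pairs at $b$ that meet $A$ is at most $(|B|-2)+deg_3\bigl(b,\binom{B}{2}\bigr)$, so summing over $B$ the budget is $O(|B|^2+|E(B)|)=O(\alpha|B|^3)$ — i.e.\ the density condition must be invoked for (i) as well, through this compensation step. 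Likewise for $S_A$: the triples missed by $a\in S_A$ contain a vertex of $A$, so they are not controlled by any count of (non-)edges \emph{inside} $B$ directly, contrary to your final paragraph; the paper's route is precisely that each missing triple $\{a,b_1,b_2\}$ forces extra edges of $E(B)$ at $b_1$ and $b_2$, whence $3|E(B)|\gtrsim |S_A|(1-\alpha^{1/3})\binom{|B|}{2}$, contradicting (\ref{extDen}) when $|S_A|>40\alpha|A|$. The constants are indeed generous, but these two points — the direction of the sparsity of $B$, and the two-step compensation argument needed for the $A$-side sets — are essential, not bookkeeping.
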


\proof
%More Precisely we have that  and  and for the {\em strongly exceptional} sets we have   and . The constants are not the best possible but we choose them for ease of calculation. {\color{red} change the ease of computation sentence also show two bounds preferably different than $4$ paper} 

We only prove the bounds on $|X_B|$ and $|S_A|$ (the others are similar). Assume that $|X_B|\geq 18\sqrt{\alpha}|B|$. By (\ref{minDeg_ext}) and the definition of $X_B$, for any vertex $b\in X_B$, $deg_3\left(b,{B\choose 2}\right) \geq \sqrt{\alpha}|A|(|B|-1)/2$. Therefore for the number of edges inside $B$ we have $$ 3|E(B)|\geq  |X_B|\cdot\sqrt{\alpha}|A|(|B|-1)/2 \geq 9\sqrt{\alpha}|B|\cdot\sqrt{\alpha}|A|(|B|-1)
\geq  9\alpha |B|(|B|-1)|A|\\
\geq  27\alpha {|B|\choose 3}$$ where the last inequality uses $|A|=|B|/2$. This implies that $d_3(B) \geq 9\alpha$, a contradiction to (\ref{extDen}). 

\vskip 5pt
To see the bound on $|S_A|$, note that by (\ref{minDeg_ext}), if there is a set of $k$ vertices $\{a_1,a_2,\ldots, a_k\}$ in $A$ and a pair $\{b_1,b_2\}$ of vertices in $B$ such that for $1\leq i \leq k$, $\{a_i,b_1,b_2\} \notin E(H)$, then to make up the minimum degree of $b_1$, there are at least $k+1$ edges in $E(B)$ containing $b_1$. Similarly (not necessarily distinct) $k+1$ edges exist in $B$ to make up the minimum degrees of $b_2$. This, together with the fact that every $a\in S_A$ does not make edges with at least $(1-\alpha^{1/3}){|B|\choose 2}$ pairs of vertices in $B$, implies that $$3|E(B)| > |S_A|(1-\alpha^{1/3}{|B|\choose 2}.$$ If $|S_A|>40\alpha|A|$, then $$3|E(B)| > 40\alpha|A|(1-\alpha^{1/3}){|B|\choose 2} = 40\alpha(1-\alpha^{1/3}) \frac{|B|}{2}{|B|\choose 2} \geq 40\alpha{|B|\choose 3}$$ where the last inequality holds when $\alpha$ is a small constant and is a contradiction to (\ref{extDen}). \hfill{}\qed
\vskip4pt

%Similarly assume that $|SX_A|\geq 40\alpha|A|$. By definition of $SX_A$, for each vertex $a\in SX_A$ there are at least $(1-\alpha^{1/3}){|B|\choose 2}$ pairs of vertices in $B$ not making edges with $a$. Therefore by a simple counting argument (see Fact \ref{denseSubgraph}) there must be at least $(1-\alpha^{1/3}){|B|\choose 2}/2 \geq 10{|B|\choose 2}/21$ pairs of vertices in $B$ not making edges with at least $(1-\alpha^{1/3})|SX_A|/2 \geq 10|SX_A|/21$ vertices of $SX_A$. Now there must be a set of vertices $B_1\subset B$ such that for each $b\in B_1$ we have $deg_3(b,(B\setminus\{b\}\times A))\leq |A|(|B|-1) - (|B_1|10|SX_A|/21)$ such that $|B_1|\geq 10|B|/42$. (this can by seen using the simple observation that every graph on $n$ vertices and $m$ edges has a subgraph of minimum degree at leat $m/n$). Now by (\ref{minDegree}), every vertex  $b\in B_1$ must have $deg_3(b,{B\choose 2})\geq |B_1|10|SX_A|/42$,  which gives us the following bound on the number of edges inside $B$ $$3|E(B)|\geq |B_1||B_1|10|SX_A|/42 \geq 40\alpha|A|\frac{10.10.10.|B|(|B|)}{42.42.42}  \geq 10\alpha{|B|\choose 3}$$where the last inequality uses $|A|=|B|/2$ and is a contradiction to (\ref{extDen}).

\vskip4pt
\begin{claim}\label{strongExceptional_matching}
There exists a matching $M$ in $H$ such that $M$ covers all the strongly exceptional vertices and if $A' = A\setminus V(M)$, $B'=B\setminus V(M)$ and $n'=n-|V(M)|$, then $|B'| = 2|A'| = 2n'/3$. 
\end{claim}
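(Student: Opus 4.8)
\textbf{Proof proposal for Claim \ref{strongExceptional_matching}.}

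The plan is to greedily match the strongly exceptional vertices in $S_A\cup S_B$ one by one, using only ``typical'' vertices as partners so that the divisibility constraint $|B\setminus V(M)| = 2|A\setminus V(M)|$ can be maintained at the end by a final balancing step. By the Claim on set sizes, $|S_A|\le 40\alpha|A|$ and $|S_B|\le 40\alpha|B|$, so altogether there are at most $120\alpha|A|$ strongly exceptional vertices; since $\alpha$ is a small constant, this is a tiny fraction of $V(H)$. The key observation is that the minimum degree hypothesis (\ref{minDeg_ext}) forces \emph{every} vertex $v$ of $H$ (even a strongly exceptional one) to have very high degree: $\deg_3(v)\ge {n-1\choose 2}-{2n/3\choose 2}+1$, which for large $n$ is a positive constant fraction of ${n-1\choose 2}$. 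So each such $v$ lies in many edges, and by a counting argument almost all of those edges use two vertices that are themselves non-exceptional and non-strongly-exceptional and lie outside the part of $V(M)$ built so far.

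First I would set up the greedy process. Maintain a partial matching $M$; at each step pick an uncovered vertex $v\in S_A\cup S_B$ and find an edge $\{v,x,y\}$ with $x,y\notin V(M)$ and $x,y\notin S_A\cup S_B\cup X_A\cup X_B$. Such an edge exists because the number of forbidden pairs --- those meeting $V(M)$ (of size $O(\alpha n)$), or meeting $S_A\cup S_B\cup X_A\cup X_B$ (of size $O(\sqrt{\alpha}\,n)$ by parts (i)--(iv) of the Claim) --- is at most $O(\sqrt{\alpha}){n\choose 2}$, which is strictly less than $\deg_3(v)\ge$ (roughly) $\frac59{n\choose 2}$ when $\alpha$ is small. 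Add $\{v,x,y\}$ to $M$ and repeat until all of $S_A\cup S_B$ is covered. This produces a matching $M_0$ covering all strongly exceptional vertices, with $|V(M_0)|\le 360\alpha n$, say, all of whose ``partner'' vertices are typical.

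Next I would fix up divisibility. After $M_0$, let $a=|A\setminus V(M_0)|$ and $b=|B\setminus V(M_0)|$; we have $a+b=n-|V(M_0)|$ and we want $b=2a$, equivalently $2a-b=a+b-3\cdot(\text{number of removed } A\text{-vertices})$, so the ``defect'' $3a-(n-|V(M_0)|)$ equals $3\cdot(\text{excess of removed $B$ over what a perfectly balanced removal would take})$ up to a bounded correction. Concretely, $|V(M_0)\cap A|$ and $|V(M_0)\cap B|$ need not satisfy $|V(M_0)\cap B| = 2|V(M_0)\cap A|$, so I add a bounded number of extra edges of the ``typical'' type --- edges $\{a',b_1',b_2'\}$ with $a'\in A$, $b_1',b_2'\in B$, all typical and uncovered --- each of which removes one $A$-vertex and two $B$-vertices, i.e.\ restores one unit toward the ratio $|B\setminus V(M)| = 2|A\setminus V(M)|$. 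Since at most $O(\alpha n)$ such adjustments are needed and, exactly as above, typical vertices of $A$ have almost all of ${B\choose 2}$ as their link while typical vertices of $B$ have almost all of $B\times A$ as their link, these edges are easy to find greedily without colliding with $V(M_0)$ or the exceptional sets. Calling the resulting matching $M$, we get $|B'|=2|A'|=2n'/3$ as required.

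The main obstacle I anticipate is bookkeeping the divisibility correction cleanly: one must check that after the greedy covering of $S_A\cup S_B$ the imbalance $|V(M_0)\cap B|-2|V(M_0)\cap A|$ is (a) bounded by $O(\alpha n)$ and (b) a multiple of $3$ (or can be made so), so that it can be annihilated purely by adding ``$1$-from-$A$, $2$-from-$B$'' edges; if instead the imbalance went the other way one would also need a few edges taking two $A$-vertices and one $B$-vertex, and one must confirm that $H$ contains such edges among typical vertices --- which it does, since by (\ref{minDeg_ext}) and (\ref{extDen}) only triples inside $B$ are ever missing, so any triple meeting $A$ in $\ge 1$ vertex and avoiding the exceptional sets is an edge. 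Everything else is a routine ``greedy into a near-complete hypergraph'' argument driven by the size bounds from the preceding Claim.
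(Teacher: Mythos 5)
Your greedy covering of $S_A\cup S_B$ by edges with two ``typical'' partners does exist, but the balancing step --- which is where the real content of Claim \ref{strongExceptional_matching} lies --- is wrong, and not just in bookkeeping. Track $D=|B\setminus V(M)|-2|A\setminus V(M)|$, which starts at $0$ and must end at $0$. An edge with one vertex in $A$ and two in $B$ changes $D$ by $0$, so the ``$1$-from-$A$, $2$-from-$B$'' edges you propose to add restore nothing; an edge with two or three $A$-vertices increases $D$ by $3$ or $6$; the only edges that decrease $D$ are edges lying entirely inside $B$. Now a vertex $a\in S_A$ has $deg_3\left(a,{B\choose 2}\right)<\alpha^{1/3}{|B|\choose 2}$, possibly zero, so your greedy is in general forced to cover it by an edge with two $A$-vertices; after covering $S_A$ you may therefore need up to $|S_A|$ pairwise disjoint edges inside $B$ to bring $D$ back to $0$ --- exactly the edges that the extremal hypothesis $d_3(B)<6\alpha$ makes scarce, and which your proposal never uses. (Your fallback assertion that ``any triple meeting $A$ and avoiding the exceptional sets is an edge'' is also false: the hypotheses bound degrees and densities, not individual triples.)

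This is a genuine obstruction to your procedure, not a fixable slip. Take $|A|=n/3$, $|B|=2n/3$, fix $S\subseteq A$ with $|S|=s$, $2\le s\le \alpha n/5$, and $B_0\subseteq B$ with $|B_0|=s-1$, and let the non-edges of $H$ be exactly the triples contained in $S\cup(B\setminus B_0)$ having at most one vertex in $S$. Then $\delta_1(H)={n-1\choose 2}-{|B|\choose 2}+{s\choose 2}$, which meets (\ref{minDeg_ext}), $d_3(B)<\alpha$, $S_A=S$ and $S_B=\emptyset$; moreover every pair of ${B\choose 2}$ in the link of a vertex of $S$ meets $B_0$, and every edge inside $B$ meets $B_0$. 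If $M$ is any matching in which each vertex of $S$ lies in its own edge whose other two vertices avoid $S$ (your greedy's constraint) and $|B\setminus V(M)|=2|A\setminus V(M)|$, let $x$ be the number of $S$-vertices covered together with two $B$-vertices; the other $s-x$ covering edges each raise $D$ by at least $3$, so $M$ must contain at least $s-x$ edges inside $B$, and hence at least $x+(s-x)=s$ pairwise disjoint edges of $M$ meet $B_0$ --- impossible since $|B_0|=s-1$. The claim is still true for this $H$, but only via edges containing two vertices of $S$ (or via the paper's route), which your greedy forbids. The paper's proof is organized around precisely this difficulty: it first exchanges strongly exceptional vertices between $A$ and $B$ so that one of $S_A,S_B$ is empty; when $S_B\neq\emptyset$ it matches each $S_B$-vertex inside $B$ and compensates with edges having two $A$-vertices (the direction in which compensation really is available); when $S_A\neq\emptyset$ it does not give each $S_A$-vertex its own edge at all, but finds, by the Bollob\'as--Daykin--Erd\H{o}s-type induction, $|S_A|$ disjoint edges inside $H|_{S_A\cup B}$ (these may contain several vertices of $S_A$ or none) and then relabels the uncovered $S_A$-vertices as $B$-vertices, using their exchangeability, so that $|B'|=2|A'|=2n'/3$ comes out automatically. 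Both ideas --- the exchange/relabelling and the matching inside $S_A\cup B$ --- are absent from your proposal.
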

\begin{proof}
We first show that if both $S_B$ and $S_A$ are non empty, then we can reduce the sizes of both. To see this assume $b\in S_B$ and $a\in S_A$. By definition, $deg_3(a,{B\choose 2})<\alpha^{1/3}{|B| \choose 2}$ and $deg_3(b,(B\times A)) < {\alpha}^{1/3}|A|(|B|-1)$. Hence by (\ref{minDeg_ext}), $deg_3(a,A\times B) \geq (1-2\alpha^{1/3})(|A|-1)|B|$ and $deg_3(b,{B\choose 2}) \geq (1-2\alpha^{1/3}){|B| \choose 2}$.  We can exchange $a$ with $b$ and reduce the size of both $S_B$ and $S_A$, as both $a$ and $b$ are not {\em strongly exceptional} in their new sets. Applying the above procedure we take the sets $A$ and $B$ such that $|S_A|+|S_B|$ is as small as possible (and one of the sets $S_A$ and $S_B$ is empty). 

\vskip4pt First assume that $S_B \neq \emptyset$. As observed above by the minimum degree condition and definition of $S_B$, for every vertex $b \in S_B$,  $deg_3(b,{B\choose 2}) \geq (1-2{\alpha}^{1/3}){|B|\choose 2}$. Since $|S_B|$ is very small and every vertex in $S_B$ makes many edges insides $B$ we can greedily find $|S_B|$ vertex disjoint edges in $H|_B$ each containing exactly one vertex of $S_B$. Indeed after removing at most $|S_B|-1$ disjoint edges from $H|_B$ the remaining vertex in $S_B$ still makes edges with at least $(1-2{\alpha}^{1/3}){|B|\choose 2} - {|S_B|\choose 2} - |B|\times 3|S_B| > 1$ pairs of the remaining vertices. Hence we can greedily match each vertex in $S_B$ in a matching $M$ in $H|_B$. To keep the ratio of the sizes of the remaining parts of $A$ and $B$ intact, we add to $M$, $|S_B|$ other vertex disjoint edges such that each edge has a vertex in $B\setminus X_B$ and the two other vertices are in $A$. We can clearly find such edges because by (\ref{minDeg_ext}) and (\ref{extDen}) almost every vertex in $B\setminus X_B$ makes edges with at least $(1-2\sqrt{\alpha}){|A|\choose 2}$ pairs of vertices in $A$ (as otherwise $d_3(B)$ will be very large). We remove the vertices of $M$ from $A$ and $B$ and by construction $n' = n-6|S_B|$, $|A'|=|A|-2|S_B|$ and $|B'|=|B|-4|S_B|$, hence  $|B'| = 2|A'| = 2n'/3$. 
\vskip8pt

In case $S_A \neq \emptyset$ (and $S_B = \emptyset$), we will find a matching such that each edge contain a vertex in $S_A$. Note that in this case for any vertex $b\in B$ we have $deg_3(b,{B\choose 2}) < \alpha^{1/3}{|B|\choose 2}$. Indeed, if there is a vertex $b\in B$ such that $deg_3(b,{B\choose 2}) \geq \alpha^{1/3}{|B|\choose 2}$ then we can replace $b$ with any vertex $a\in S_A$ to reduce the size of $S_A$ (as the vertex $b$ is not {\em strongly exceptional} in $A$ and $a$ is not {\em strongly exceptional} in the set $B$). We say that vertices in $S_A$ are exchangeable with vertices in $B$ and consider the whole set $S_A\cup B$. By (\ref{minDeg_ext}) for any vertex $v\in S_A\cup B$ 
$$deg_3\left(v,{S_A \cup B\choose 2}\right)\geq {|S_A \cup B|-1\choose 2} -{|B|\choose 2}+1= {|S_A|-1\choose 2} + (|S_A|-1)|B| + 1.$$
 We will prove by induction on $|S_A|$ that we can find a matching $M$ in $H|_{S_A\cup B}$ of size $|S_A|$. Note that this also follows  from a result of Bollob\'as, Daykin and Erd\"{o}s \cite{BDE}.  If $|S_A| = 1$ then clearly we get an edge in $H|_{S_A\cup B}$ and we are done. Now assume that $|S_A|>1$ and that the assertion is true for smaller values of $|S_A|$. Let $v$ be a maximum degree vertex in $H|_{S_A\cup B}$ and let $H'= H|_{S_A\cup B\setminus\{v\}}$.

For any vertex $u\in V(H')$ the number of pairs of vertices in $S_A \cup B$, containing $v$ but not $u$, is at most $|S_A\cup B|-2$. Therefore we get that \begin{align*} \delta_1(H') &\geq {|S_A|-1\choose 2} + (|S_A|-1)|B| + 1 -(|S_A| + |B| -2 )\\ &= {|S_A|-2\choose 2} + (|S_A|-2)|B| + 1\end{align*} where the last equality follows by a simple calculation.
Hence by induction hypothesis there is a matching in  $H'$ of size at least $|S_A| - 1$. Let $M_1$ be a maximum matching in $H'$, if $|M_1|\geq|S_A|$ then we are done so assume that $|M_1| = |S_A|-1$ and every edge in $H'$ intersects $V(M_1)$. This gives us a lower bound on the maximum degree of a vertex in $V(M_1)$ and since $v$ is the overall maximum degree vertex we get

 \begin{align*} deg_3(v) &\geq \frac{|E(H')|}{|V(M_1)|}\geq \frac{|V(H')|\cdot\delta_1(H')}{3|V(M_1)|}\\ &\geq \dfrac{\left(|S_A|+ |B|-1\right)  \left({|S_A|-2\choose 2} + \left(|S_A|-2\right)|B| + 1\right)}{9\left(|S_A|-1\right)} \\ &>  \dfrac{\left(|S_A|+ |B|-1\right)  \left(27\left(|S_A| -1\right)^2\right)}{9\left(|S_A|-1\right)}   \\&>   3\left(|S_A|-1)(|S_A| +|B|-2\right)\end{align*}
where the last inequality uses the fact that $|B|$ is much larger compared to $|S_A|$. Since the last quantity is larger then the number of pairs that use at least one vertex from $V(M_1)$, there is a pair of vertices in  $S_A \cup B\setminus V(M_1)$ that makes an edge with $v$. Adding this edge to $M_1$ we get a matching $M$ which is the required matching. Using the fact that vertices in $S_A$ are exchangeable with vertices in $B$, we get $n' = n-3|S_A|$, $|A'|=|A|-|S_A|$ and $|B'|=|B|-2|S_A|$, hence we get $|B'| = 2|A'| = 2n'/3$. \hfill{} \end{proof}

%In case $S_A \neq \emptyset$ (and $S_B = \emptyset$), we will first eliminate the vertices in $S_A$. Note that in this case for any vertex $b\in B$ we have $deg_3(b,{B\choose 2}) < \alpha^{1/3}{|B|\choose 2}$. Because if there is a vertex $b\in B$ such that $deg_3(b,{B\choose 2}) \geq \alpha^{1/3}{|B|\choose 2}$ then we can replace $b$ with any vertex $a\in S_A$ to reduce the size of $S_A$ (as the vertex $b$ is not {\em strongly exceptional} and $a$ is not {\em strongly exceptional} in the set $B$). We say that vertices in $S_A$ are exchangeable with vertices in $B$ and consider the whole set $S_A\cup B$. By (\ref{minDegree}) for any vertex $v\in S_A\cup B$ we have $$deg_3\left(v,{S_A \cup B\choose 2}\right)\geq {|S_A|-1\choose 2} + (|S_A|-1)|B| + 1$$ where the last inequality holds when $n$ is large enough. So with a simple greedy procedure we find $|SX_A|$ disjoint edges in $H|_{SX_A\cup B}$ and remove these edges from $H$. Note that this is the only place where we critically use the minimum degree. We let $A' = A\setminus SX_A$ and $B'$ has all other remaining vertices. Again as above we have $n' = n-3|SX_A|$, $|A'|=|A|-|SX_A|$ and $|B'|=|B|-2|SX_A|$ hence we get $|B'| = 2|A'| = 2n'/3$. 

%\vskip4pt
Having dealt with the \textit{strongly exceptional} vertices, the vertices of $X_A$ and $X_B$ in $A'$ and $B'$ can be eliminated using the fact that their sizes are much smaller than the crossing degrees of vertices in those sets. We have $|X_A|\leq 18\sqrt{\alpha}|A|$ while for any vertex $a\in X_A$, we have that $deg_3(a,{B'\choose 2}) \geq \alpha^{1/3}{|B'|\choose 2}/2$ (because $a\notin S_A$). For each $a\in X_A$ we remove a disjoint edge that contains $a$ and two vertices from $B'$. This can be done greedily as after covering $|X_A|-1$ vertices of $X_A$, the total number of edges removed is at most $50 \sqrt{\alpha} {|B'|\choose 2}$. So there are pairs of vertices in the remaining part of $B'$ making edges with the next vertex of $X_A$.  Similarly for each $b \in X_B$ we remove an edge that contains $b$ and uses one vertex from $A'$ and the other vertex is from $B'$ distinct from $b$. Clearly we can find such disjoint edges by a simple greedy procedure. Hence we removed a partial matching that covers all vertices in the exceptional sets. 

\vskip4pt
\noindent Denote the leftover sets of $A'$ and $B'$ by $A''$ and $B''$ respectively. By construction $|B''|=2|A''|$). We will find $|A''|$ disjoint edges each containing one vertex from $A''$ and two vertices from $B''$. Note that for every vertex  $a\in A''$ we have $deg_3(a,{B''\choose 2})\geq (1-2\alpha^{1/3}){|B''|\choose 2}$ (as $a\notin X_A$). We say that a pair $(b_1,b_2)$ of vertices in $B''$ is {\em good} if $(b_i,b_j,a_k)\in E(H)$ for at least $(1-40{\alpha}^{1/4})|A''|$ vertices $a_k$ in $A''$. Any vertex $b_i \in B''$ makes a good pair with at least $(1-40{\alpha}^{1/4})|B''|$ other vertices in $B''$ (again this is so because $b_i\notin X_B$).

We randomly select a set $P_1$ of $100{\alpha}^{1/4} |B''|$ vertex disjoint good pairs of vertices in $B''$. By the above observation with high probability every vertex $a\in A''$ make edges in $H$ with at least $3|P_1|/4$ pairs in $P_1$ and every pair in $P_1$ makes an edge with at least $3|A''|/4$ vertices in $A''$. In $B''\setminus V(P_1)$ still every vertex makes {good} pairs with almost all other vertices. We pair up each vertex of $B''\setminus V(P_1)$ with a distinct vertex in $B''\setminus V(P_1)$ such that they make a good pair. This can be done by considering a $2$-graph with vertex set $B''\setminus V(P_1)$ and all the good pairs as its edges. A simple application of Dirac\rq{}s theorem on this $2$-graph gives such a perfect matching of vertices in $B''\setminus V(P_1)$. Let the set of these pairs be $P_2$.
  
Now construct an auxiliary bipartite graph $G(L,R)$, such that $L= A''$ and vertices in $R$ correspond to the pairs in $P_1$ and $P_2$. A vertex in $a_k\in L$ is connected to a vertex $y\in R$ if the pair corresponding to $y$ (say $b_i,b_j$) is such that $(b_i,b_j,a_k)\in E(H)$. We will show that $G(L,R)$ satisfies the K{\"o}nig-Hall criteria. Considering the sizes of $A''$ and $P_1$ it is easy to see that for every set $Q\subset R$ if $|Q|\leq (1-40\alpha^{1/4})|A''|$ then $|N(Q)|\geq |Q|$. When $|Q|>(1-40\alpha^{1/4})|A''|$ (using $|B''|=2|A''|$) any such $Q$ must have at least $6|P_1|/10$ vertices corresponding to pairs in $P_1$, hence with high probability $|N(Q)| = |L| \geq |Q|$. Therefore there is a perfect matching of $R$ into $L$. This perfect matching in $G$ readily gives us a matching in $H$ covering all vertices in $A''$ and $B''$, which together with the edges we already removed (covering exceptional  and strongly exceptional vertices) is a perfect matching in $H$. \hfill{} \end{proof}

\end{document}